\newtheorem{theorem}{Theorem}
\newtheorem{lemma}{Lemma}
\newtheorem{claim}{Claim}
\newtheorem{definition}{Definition}
\newtheorem{problem}{Problem}
\newcommand{\scpl}{Submodular Cover\xspace}
\newcommand{\scp}{SCP\xspace}
\newcommand{\mscp}{MSCP\xspace}
\newcommand{\msmp}{MSMP\xspace}
\newcommand{\rscp}{RSCP\xspace}
\newcommand{\smp}{SMP\xspace}
\newcommand{\argmin}{\text{argmin}}
\newcommand{\argmax}{\text{argmax}}
\newcommand{\distgreed}{\texttt{distorted-bi}\xspace}
\newcommand{\sg}{\texttt{stoch-bi}\xspace}
\newcommand{\sgc}{\texttt{stoch-greedy-c}\xspace}
\newcommand{\conv}{\texttt{convert}\xspace}
\newcommand{\convr}{\texttt{convert-rand}\xspace}
\newcommand{\tgc}{\texttt{thresh-greedy-c}\xspace}
\newcommand{\gc}{\texttt{greedy-c}\xspace}
\newcommand{\filt}{\texttt{stream-c}\xspace}
\newcommand{\reg}{\texttt{reg}\xspace}
\newcommand{\regconv}{\texttt{convert-reg}\xspace}
\newcommand{\mE}{\mathbb{E}}
\newcommand\explaineq[2]{\stackrel{\mathclap{\normalfont\mbox{#1}}}{#2}}
\author{%
  Wenjing Chen\\
  Department of Computer Science \& Engineering\\
  Texas A$\&$M University\\
  \And
  Victoria G. Crawford \\
  Department of Computer Science \& Engineering\\
  Texas A\&M University\\
}
\title{Bicriteria Approximation Algorithms for the Submodular Cover Problem}
\begin{document}


\maketitle

\begin{abstract}
    In this paper, we consider the optimization problem Submodular Cover (SCP), which is to find a minimum cardinality subset of a finite universe $U$ such that the value of a submodular function $f$ is above an input threshold $\tau$. In particular, we consider several variants of SCP including the general case, the case where $f$ is additionally assumed to be monotone, and finally the case where $f$ is a regularized monotone submodular function. Our most significant contributions are that: (i) We propose a scalable algorithm for monotone SCP that achieves nearly the same approximation guarantees as the standard greedy algorithm in significantly faster time; (ii) We are the first to develop an algorithm for general SCP that achieves a solution arbitrarily close to being feasible; and finally (iii) we are the first to develop algorithms for regularized SCP. Our algorithms are then demonstrated to be effective in an extensive experimental section on data summarization and graph cut, two applications of SCP.
\end{abstract}

\section{Introduction}
Submodularity captures a diminishing returns property of set functions: Let $f:2^U\to\mathbb{R}$ be defined over subsets of a universe $U$ of size $n$. Then $f$ is \textit{submodular} if for all $A\subseteq B\subseteq U$ and $x\notin B$, $f(A\cup\{x\})-f(A) \geq f(B\cup\{x\})-f(B)$. Examples of submodular set functions include cut functions in graphs \citep{balkanski2018non}, information-theoretic quantities like entropy and mutual information \citep{iyer2021generalized}, determinantal point processes \citep{gillenwater2012near}, and coverage functions \citep{bateni2017almost}. Submodular set functions arise in many important real-world applications including active learning \citep{kothawade2021similar,kothawade2022talisman}, partial label learning \citep{bao2022submodular}, structured pruning of neural networks \citep{el2022data}, data summarization \citep{tschiatschek2014learning}, and client selection in federated learning \citep{balakrishnan2022diverse}.

While the majority of existing work has focused on developing approximation algorithms to maximize a submodular function subject to some constraint \citep{nemhauser1978analysis,mirzasoleiman2015lazier,harshaw2019submodular,buchbinder2014submodular}, in this paper we focus on developing algorithms for the related optimization problem of \scpl (\scp), defined as follows.
\begin{problem}[\scpl (\scp)]
    Let $f:2^U\to\mathbb{R}_{\geq 0}$ be a nonnegative submodular set function defined over subsets of the ground set $U$ of size $n$. Given threshold $\tau$, \scp is to find $\argmin\{|X|:f(X)\geq\tau\}$ if such a set exists.
\end{problem}
\scp captures applications where we seek to achieve a certain value of $f$ in as few elements as possible. For example, consider data summarization, where a submodular function $f$ is formulated to measure how effectively a subset $X$ summarizes the entire dataset $U$ \citep{tschiatschek2014learning}. Then if we set $\tau=\max\{f(X):X\subseteq U\}$, \scp asks to find the set of minimum size in $U$ that achieves the maximum effectiveness as a summary. Another example is when expected advertising revenue functions are formulated over subsets of a social network \citep{hartline2008optimal}, then \scp asks how we can reach a certain amount of revenue while picking as small a subset of users as possible. 

In this paper, we propose and analyze algorithms for several variants of \scp including the general case, the case where $f$ is assumed to be monotone\footnote{A set function $f$ is monotone if for all $A\subseteq B\subseteq U$, $f(A)\leq f(B)$.}, and finally when $f$ is a regularized monotone submodular function (and potentially takes on negative values). In particular, the contributions of this paper are:
\begin{itemize}[noitemsep]
    \item[(i)] We first address the need for scalable algorithms for \scp where $f$ is assumed to be monotone. While the greedy algorithm finds the best possible approximation guarantee for monotone \scp (\mscp) \citep{feige1998}, it makes $O(n^2)$ queries of $f$ which may be impractical in many applications. We propose and introduce two algorithms for \mscp which achieve nearly the same theoretical guarantee as the greedy algorithm but only make $O(n\ln(n))$ queries of $f$. In addition, we extend the work of \cite{iyer2013submodular} to a method of converting fast randomized approximation algorithms for the dual cardinality constrained monotone submodular maximization problem (\msmp) into approximation algorithms for \mscp.
    \item[(ii)] Next, we address the need for algorithms that can produce nearly feasible solutions to the general \scp problem. In particular, we provide the first algorithm for \scp that, with input $\epsilon>0$, returns a solution $S$ that is guaranteed to satisfy: (i) $f(S)\geq(1-\epsilon)\tau$; and (ii) $|S|=O(|OPT|/\epsilon)$ where $OPT$ is an optimal solution to the instance. A caveat for our algorithm is that it is not necessarily polynomial time and requires an exact solution to \smp on an instance of size $O(|OPT|/\epsilon^2)$.
    \item[(iii)] Third, we are the first to consider \textit{regularized \scp} (\rscp). \rscp is where the objective $f=g-c$ where $g$ is a nonnegative, monotone, and submodular function and $c$ is a modular cost penalty function. $f$ is not necessarily monotone but potentially takes on negative values, and therefore this new problem doesn't fall under the general \scp problem. We develop a method of converting algorithms for the dual regularized submodular maximization problem \citep{harshaw2019submodular} into ones for \rscp. We then propose the first algorithm for \rscp, which is a greedy algorithm using queries to a distorted version of $f=g-c$. 
    \item[(iv)] Finally, we conduct an experimental analysis for our algorithms for \mscp and general \scp on instances of data summarization and graph cut. We find that our algorithms for \mscp makes a large speedup compared to the standard greedy approach, and we explore the pros and cons of each relative to the other. We also find that our algorithm for general \scp is practical for our applications despite not being guaranteed to run in polynomially many queries of $f$.
\end{itemize}
A table summarizing all of our algorithmic contributions can be found in the appendix. We now provide a number of preliminary definitions and notations that will be used throughout the paper.

\begin{subsection}{Preliminary definitions}

We first provide a number of prelimary definitions that will be used throughout the paper:
(i) The Submodular Maximization Problem (\smp) is the dual optimization problem to \scp defined by, given budget $\kappa$, find $\argmax\{f(X): X\subseteq U, |X|\leq\kappa\}$;
(ii) Monotone \scp (\mscp) is the version of \scp where $f$ is additionally assumed to be monotone;
(iii) Regularized \scp (\rscp) is a related problem to \scp where $f=g-c$ and $g$ is monotone, submodular, and nonnegative, while $c$ is a modular\footnote{Every $x\in U$ is assigned a cost $c_x$ such that $c(X)=\sum_{x\in X} c_x$.} nonnegative cost function;
(iv) $OPT$ is used to refer to the optimal solution to the instance of \scp that should be clear from the context;
(v) $OPT_{SM}$ is used to refer to the optimal solution to the instance of \smp that should be clear from the context;
(vi) An $(\alpha,\beta)$-bicriteria approximation algorithm for \scp returns a solution $X$ such that $|X|\leq\alpha |OPT|$ and $f(X)\geq\beta\tau$. An $(\alpha,\beta)$-bicriteria approximation algorithm for \smp returns a solution $X$ such that $f(X)\geq \alpha f(OPT)$ and $|X|\leq\beta\kappa$. Notice that the approximation on the objective is first, and the approximation on the constraint is second;
(vii) The marginal gain of adding an element $u\in U$ to  a set $S\subseteq U$ is denoted as $\Delta f(S,u)=f(S\cup u)-f(S)$;
(viii) The function $f_{\tau}=\min\{f,\tau\}$.
\end{subsection}

\subsection{Related Work}
\label{section:relatedwork}
\mscp is the most studied variant of \scp \citep{wolsey1982analysis,wan2010,mirzasoleiman2015,mirzasoleiman2016,crawford2019submodular}. The standard greedy algorithm produces a logarithmic approximation guarantee for \mscp in $O(n^2)$ queries of $f$ \citep{wolsey1982analysis}, and this is the best approximation guarantee that we can expect unless NP has $n^{\mathcal{O}(\log(\log(n)))}$-time deterministic algorithms \citep{feige1998}. One version of the greedy algorithm for \mscp works as follows: A set $S$ is initialized to be $\emptyset$. Iteratively, the element $\argmax\{\Delta f(S,x): x\in U\}$ is added to $S$ until $f(S)$ reaches $(1-\epsilon)\tau$. It has previously been shown that this is a $(\ln(1/\epsilon), 1-\epsilon)$-bicriteria approximation algorithm \citep{krause2008robust}. Beyond greedy algorithms, algorithms for the distributed setting
\citep{mirzasoleiman2015distributed,mirzasoleiman2016} as well as the streaming setting \citep{norouzi2016} for \mscp have been proposed.

On the other hand, developing algorithms for \scp in full generality is more difficult since monotonicity of $f$ is not assumed. The standard greedy algorithm does not have any non-trivial approximation guarantee for \scp. In fact, to the best of our knowledge, no greedy-like algorithms have been found to be very useful for \scp. Recently, \cite{crawford2023scalable} considered \scp and proved that it is not possible to develop an algorithm that guarantees $f(X)\geq\tau/2$ for \scp in polynomially many queries of $f$ assuming the value oracle model. On the other hand, algorithmic techniques that are used for \smp in the streaming setting \citep{alaluf2022optimal} proved to be useful for \scp. In particular, \cite{crawford2023scalable} proposed an algorithm using related techniques to that of \citeauthor{alaluf2022optimal} that achieves a $(O(1/\epsilon^2),1/2-\epsilon)$-bicriteria approximation guarantee for \scp in polynomially many queries of $f$. We also take an approach inspired by the streaming algorithm of \citeauthor{alaluf2022optimal}, but sacrifice efficiency in order to find a solution for \scp that is arbitrarily close to being feasible.

\smp is the dual optimization problem to \scp, and has received relatively more attention than \scp \citep{nemhauser1978,badanidiyuru2014fast,mirzasoleiman2015lazier, feige2011maximizing, buchbinder2014submodular,alaluf2022optimal}. \cite{iyer2013} proposed a method of converting algorithms for \smp to ones for \scp. In particular, given a deterministic $(\gamma,\beta)$-bicriteria approximation algorithm for \smp, the algorithm \conv (see pseudocode in the appendix) proposed by \citeauthor{iyer2013} produces a deterministic $((1+\alpha)\beta,\gamma)$-bicriteria approximation algorithm for \scp. The algorithm works by making $\log_{1+\alpha}(n)$ guesses for $|OPT|$ (which is unknown in \scp), running the \smp algorithm with the budget set to each guess, and returning the smallest solution with $f$ value above $\gamma\tau$. However, this approach is limited by the approximation guarantees of existing algorithms for \smp. The best $\gamma$ for monotone \smp is $1-1/e$, and the best for general \smp where $f$ is not assumed to be monotone is significantly lower \citep{gharan2011submodular}. Several of the algorithms that we propose in this paper do generally follow the model of \conv in that they rely on guesses of $|OPT|$, but are different because they: (i) Implicitly use bicriteria approximation algorithms for \smp which have better guarantees on the objective ($\gamma$) because they do not necessarily return a feasible solution; (ii) Are more efficient with respect to the number of queries of $f$, since \conv potentially wastes many queries of $f$ by doing essentially the same behavior for different guesses of $|OPT|$.


\begin{section}{Algorithms and Theoretical Guarantees}
\label{section:theoretical}
    In this section, we present and theoretically analyze our algorithms for several variants of \scp. 
    In particular, in Section \ref{section:monotone_objectives} we first consider \mscp. We present a method of converting randomized algorithms for \smp to algorithms for \scp, and then we present the algorithms \tgc and \sgc for \mscp, which both have lower query complexity compared to the standard greedy algorithm. Next, we consider the general problem of \scp in Section \ref{section:nonmono}. We present the algorithm \filt for \scp, which produces a solution with $f$ value arbitrarily close to $\tau$, but does not necessarily make polynomially many queries to $f$. Finally, in Section \ref{section:reg}, we consider \rscp. We present a method of converting algorithms for regularized \smp to ones for \rscp, and then introduce the algorithm \distgreed for \rscp.
 
\begin{subsection}{Monotone submodular cover}
\label{section:monotone_objectives}
In this section, we develop and analyze approximation algorithms for \mscp. The greedy algorithm is a tight $(\ln(1/\epsilon),1-\epsilon)$-bicriteria approximation algorithm for \mscp \citep{krause2008robust}. However, the greedy algorithm makes $O(n^2)$ queries of $f$, which is impractical in many application settings with large $U$ and/or when queries of $f$ are costly \citep{mirzasoleiman2015lazier}. Motivated by this, we propose and analyze the algorithms \tgc and \sgc for \mscp which give about the same bicriteria approximation guarantees but in many fewer queries of $f$.

We first describe \tgc. \tgc is closely related to the existing threshold greedy algorithm for monotone \smp \citep{badanidiyuru2014fast}, and therefore we relegate the pseudocode of \tgc to the appendix and only include a brief discussion here. At each iteration of \tgc, instead of picking the element with highest marginal gain into $S$, it adds all elements in $U$ with marginal gain above a threshold, $w$. $w$ is initialized to $\max_{u\in U}f(\{u\})$, and is decreased by a factor of $(1-\epsilon/2)$ when the algorithm proceeds to the next iteration. \tgc adds elements to a solution $S$ until $f(S)$ reaches $(1-\epsilon)\tau$, which is shown to happen in at most 
$\ln(2/\epsilon)|OPT|+1$ elements in the proof of Theorem \ref{thm:threshold}. We now state the theoretical guarantees of \tgc in Theorem \ref{thm:threshold}.
\begin{theorem}
\label{thm:threshold}
 \tgc produces a solution with $(\ln(2/\epsilon)+1,1-\epsilon)$-bicriteria approximation guarantee to MSCP, in $O(\frac{n}{\epsilon}\log(\frac{n}{\epsilon}))$ number of queries of $f$.
\end{theorem}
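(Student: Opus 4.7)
The plan is to establish the three components of the theorem --- query complexity, feasibility ($f(S) \geq (1-\epsilon)\tau$), and cardinality ($|S| \leq (\ln(2/\epsilon)+1)|OPT|$) --- by lifting the standard threshold-greedy analysis from submodular maximization to the cover setting.

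For the query complexity, the thresholds form a geometric sequence with common ratio $(1-\epsilon/2)$, starting at $w_{\max} = \max_{u} f(\{u\})$ and terminating once either $f(S) \geq (1-\epsilon)\tau$ or the threshold falls below roughly $\epsilon\tau/n$ (below which no remaining element has large enough marginal gain to matter). Thus the number of phases is $O(\epsilon^{-1}\log(n/\epsilon))$; each phase sweeps $U$ once at a cost of $O(n)$ queries, yielding the claimed $O((n/\epsilon)\log(n/\epsilon))$ total. Feasibility is immediate from the halting condition.

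The cardinality bound is the main technical content. Let $u_1,\ldots,u_t$ be the added elements in order and $S_i = \{u_1,\ldots,u_i\}$. The central claim is: if $u_i$ is added at threshold $w$, then for every $o \in OPT \setminus S_{i-1}$ one has $\Delta f(S_{i-1}, o) < w/(1-\epsilon/2)$. The reason is that $o$ was inspected at some earlier point --- either during the preceding phase (threshold $w/(1-\epsilon/2)$) or earlier in the current phase (threshold $w$, with $o$ rejected) --- at a current set $T \subseteq S_{i-1}$, and in either case $\Delta f(T, o) < w/(1-\epsilon/2)$; submodularity then carries the bound forward to $S_{i-1}$. Combined with $\Delta f(S_{i-1}, u_i) \geq w$, summing over $o \in OPT \setminus S_{i-1}$ and invoking submodularity together with monotonicity ($f(OPT \cup S_{i-1}) \geq f(OPT) \geq \tau$) gives $\Delta f(S_{i-1}, u_i) \geq (1-\epsilon/2)(\tau - f(S_{i-1}))/|OPT|$.

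This recursion telescopes to $\tau - f(S_i) \leq \tau \exp(-(1-\epsilon/2)i/|OPT|)$. Since the algorithm has not stopped at step $t-1$, $\tau - f(S_{t-1}) > \epsilon\tau$, forcing $t - 1 \leq |OPT|\ln(1/\epsilon)/(1-\epsilon/2)$. A short calculus check shows $\ln(1/\epsilon)/(1-\epsilon/2) \leq \ln(2/\epsilon)$ on $\epsilon \in (0,1]$, so $t \leq \ln(2/\epsilon)|OPT| + 1 \leq (\ln(2/\epsilon)+1)|OPT|$ whenever $|OPT|\geq 1$. The main obstacle I anticipate is the careful case analysis inside the central claim --- handling the inspection ordering of $o$ within and across phases and verifying that the relevant "current set" $T$ indeed satisfies $T \subseteq S_{i-1}$ so submodularity can transfer the threshold bound; once that is pinned down, the rest is a mechanical adaptation of the greedy cover argument.
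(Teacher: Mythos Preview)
Your proposal is correct and follows essentially the same approach as the paper. The paper invokes Claim~3.1 of \citet{badanidiyuru2014fast} as a black box to obtain the per-step inequality $\Delta f(S,s)\geq \frac{1-\epsilon/2}{|OPT|}\sum_{u\in OPT\setminus S}\Delta f(S,u)$, which is precisely the ``central claim'' you re-derive via the inspection/phase case analysis; from there both arguments telescope the recursion (the paper keeps the $(1-\frac{1-\epsilon/2}{|OPT|})^i$ form, you pass to the exponential and use $\ln(1/\epsilon)/(1-\epsilon/2)\le\ln(2/\epsilon)$, which are equivalent) and bound the number of threshold levels by showing the algorithm must halt once $w$ drops to roughly $\epsilon\max_u f(\{u\})/|OPT|$.
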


Another method of speeding up the standard greedy algorithm is by introducing randomization, as has been done for monotone \smp  in the stochastic greedy algorithm \citep{mirzasoleiman2015lazier}. A natural question is whether a randomized algorithm for monotone \smp can be converted into an algorithm for \mscp using the algorithm \conv of \citeauthor{iyer2013}. However, \conv relies on a deterministic approximation guarantee. We now introduce a new algorithm called \convr that is analogous to \conv but runs the \smp algorithm $O(\ln(n)\ln(1/\delta))$ times in order to have the approximation guarantee hold with high probability. Pseudocode for \convr, as well as a proof of Theorem \ref{thm:convert} can be found in the appendix.
\begin{theorem}
    \label{thm:convert}
    Any randomized $(1-\epsilon/2,\beta)$-bicriteria approximation algorithm for monotone \smp that runs in time $\mathcal{T}(n)$ where $\gamma$ holds only in expectation can be converted into a $((1+\alpha)\beta,1-\epsilon)$-bicriteria approximation algorithm for \mscp that runs in time $O(\log_{1+\alpha}(|OPT|)\ln(1/\delta)\mathcal{T}(n))$ where $\gamma$ holds with probability at least $1-\delta$.
\end{theorem}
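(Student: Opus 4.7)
The plan is to follow the outer-loop structure of \conv: iterate guesses $\kappa_i = (1+\alpha)^i$ for $|OPT|$ and, for each guess, invoke the given randomized \smp algorithm---but with two changes to accommodate the in-expectation guarantee. First, I would run the \smp algorithm not on $f$ itself but on the truncated objective $f_\tau = \min\{f,\tau\}$, which is still monotone, submodular, and nonnegative. Second, at each guess I would perform $r = \lceil \log_2(1/\delta)\rceil$ independent trials and stop at the smallest $i$ for which some trial returns a set $X$ with $f(X) \geq (1-\epsilon)\tau$; the \smp algorithm deterministically ensures $|X|\leq \beta\kappa_i$.

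The main technical step is to show that each single trial at a guess $\kappa_i \geq |OPT|$ succeeds with probability at least $1/2$. The point of working with $f_\tau$ is that $f_\tau(OPT) = \tau$ and $f_\tau(Y) \leq \tau$ for every $Y \subseteq U$; hence the optimum $OPT_{SM}$ of \smp on $f_\tau$ with budget $\kappa_i \geq |OPT|$ satisfies $f_\tau(OPT_{SM}) = \tau$, and the in-expectation guarantee of the \smp algorithm gives $\mathbb{E}[f_\tau(X)] \geq (1-\epsilon/2)\tau$. Coupled with the deterministic upper bound $f_\tau(X)\leq \tau$, a Markov-type inequality yields the desired constant probability: letting $p = \Pr[f_\tau(X) \geq (1-\epsilon)\tau]$, one has $(1-\epsilon/2)\tau \leq p\tau + (1-p)(1-\epsilon)\tau$, which simplifies to $p \geq 1/2$. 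Because $f \geq f_\tau$ pointwise, the acceptance test on $f$ passes whenever the truncated bound does.

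Now let $i^* = \lceil \log_{1+\alpha}|OPT|\rceil$, so $\kappa_{i^*} \geq |OPT|$. Since the $r$ trials at iteration $i^*$ are independent, the probability that all of them fail is at most $2^{-r}\leq \delta$. Hence with probability at least $1-\delta$ the algorithm stops at some $i \leq i^*$, and the returned $X$ satisfies $|X|\leq \beta\kappa_i \leq (1+\alpha)\beta|OPT|$ and $f(X)\geq(1-\epsilon)\tau$ by construction. The running time is bounded by $(i^*+1)\, r \cdot \mathcal{T}(n) = O(\log_{1+\alpha}|OPT|\cdot \ln(1/\delta)\cdot \mathcal{T}(n))$. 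The main obstacle is the conversion of an in-expectation approximation into a constant-probability one without losing any further factor; the truncation trick---replacing $f$ by $f_\tau$---provides exactly the deterministic upper bound needed for a clean Markov argument, after which the rest is routine independent-trials amplification combined with the doubling strategy for $|OPT|$.
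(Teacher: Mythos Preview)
Your proposal is correct and follows essentially the same route as the paper: run the \smp subroutine on the truncated objective $f_\tau$, repeat $\Theta(\log(1/\delta))$ independent trials per guess, and use a Markov argument on the nonnegative random variable $\tau - f_\tau(X)$ to get a per-trial success probability of at least $1/2$ once $\kappa_i\ge|OPT|$. The paper's pseudocode (\convr) and proof mirror your outline almost exactly, including the use of $f_\tau$ and the $\lceil\log_2(1/\delta)\rceil$-fold repetition.
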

Therefore by applying Theorem \ref{thm:convert} to the stochastic greedy algorithm of \citeauthor{mirzasoleiman2015lazier}, we have a $(1+\alpha,1-1/e-\epsilon)$-bicriteria approximation algorithm for \mscp with high probability in $O(n\log_{1+\alpha}(|OPT|)\ln(1/\delta)\ln(1/\epsilon))$ queries of $f$.
However, a factor of $1-1/e-\epsilon$ of $\tau$ is not very close to feasible, and further the \convr method wastes many queries of $f$ essentially doing the same computations for different guesses of $|OPT|$. Therefore we focus the rest of this section on developing an algorithm, \sgc, that uses the techniques of the stochastic greedy algorithm more directly for \mscp.

The idea behind the stochastic greedy algorithm for \smp is that instead of computing the marginal gains of all elements at each iteration, we take a uniformly random sampled subset from $U$ and pick the element with the highest marginal gain among the sampled subset. If the sampled subset is sufficiently large, in particular of size at least $(n/\kappa)\ln(1/\epsilon)$ where $\kappa$ is the budget for the instance of \smp, then with high probability a uniformly random element of $OPT_{SM}$ will appear in the sampled subset and the marginal gain of adding the element is nearly the same as the standard greedy algorithm in expectation. However, in \smp we know that $|OPT_{SM}|=\kappa$, but in \mscp $|OPT|$ is unknown. Therefore it is not obvious how to apply this technique in a more direct way than \convr.

We now introduce our algorithm \sgc for \mscp, pseudocode for which is provided in Algorithm \ref{alg:stoc_cover}. \sgc takes as input $\epsilon >0$, $\delta >0$, $\alpha > 0$, and an instance of \mscp. \sgc keeps track of $O(\ln(1/\delta))$ possibly overlapping solutions $S_1,S_2,...$ throughout a sequence of iterations. \sgc also keeps track of an estimate of $|OPT|$, $g$. During each iteration, for each solution $S_i$, \sgc uniformly randomly and independently samples a set $R$ of size $\min\{n,(n/g)\ln(3/\epsilon)\}$ and adds $u=\argmax\{\Delta f_{\tau}(S_i, x): x\in R\}$ to $S_i$. Every time $\frac{\alpha}{1+\alpha}\ln(3/\epsilon)g$ elements have been added to each $S_i$, $g$ is increased by a factor of $1+\alpha$. \sgc stops once there exists an $S_i$ such that $f(S_i)\geq (1-\epsilon)\tau$, and returns this solution.

\begin{algorithm}[t]
\caption{\sgc}\label{alg:stoc_cover}
\textbf{Input}: $\epsilon,\alpha,\tau$\\
\textbf{Output}: $S\subseteq U$
\begin{algorithmic}[1]
    \State $S_i\gets \emptyset$ $\forall i\in\left\{1,...,\ln(1/\delta)/\ln(2)\right\}$
  \State $r\gets 1, g\gets 1+\alpha$
  \While {$f(S_i) < (1-\epsilon)\tau$ $\forall i$}
    \For {$i\in\left\{1,...,\ln(1/\delta)/\ln(2)\right\}$} \label{sgc:iter_start}
        \State \label{line:sample_R} $R\gets$ sample $\min\{n, n\ln(3/\epsilon)/g\}$ elements from $U$
        \State \label{sgc:add_new}$u\gets\argmax_{x\in R}\Delta f_\tau(S_{j},x)$\label{line:stochadd}
        \State $S_j\gets S_j\cup\{u\}$
    \EndFor
    \State{$r\gets r+1$}
    \If{$r > \ln(3/\epsilon)g$} $g\gets (1+\alpha)g$\EndIf
  \EndWhile
  \State \textbf{return} $\argmin\{|S_i|: f(S_i)\geq(1-\epsilon)\tau\}$\label{line:complete}
\end{algorithmic}
\end{algorithm}

We now state the theoretical results for \sgc in Theorem \ref{theorem:stochastic}.
\begin{theorem}
    \label{theorem:stochastic}
    Suppose that \sgc is run for an instance of \mscp.
   Then with probability at least $1-\delta$, \sgc
   outputs a solution $S$ that satisfies a $((1+\alpha)\lceil\ln(3/\epsilon)\rceil,1-\epsilon)$-bicriteria approximation guarantee in at most $O\left(\frac{\alpha}{1+\alpha}n\ln(1/\delta)\ln^2(3/\epsilon)\log_{1+\alpha}(|OPT|)\right)$ queries of $f$.
\end{theorem}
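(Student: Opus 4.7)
The plan is to locate a ``good phase'' of \sgc in which the running estimate $g$ is a tight overestimate of $|OPT|$, and then decompose the analysis into a cardinality bound, an expected objective bound (boosted to high probability via Markov and parallel copies), and a query-complexity count. Let $k^{*}$ be the unique integer with $(1+\alpha)^{k^{*}-1} < |OPT| \leq (1+\alpha)^{k^{*}}$, so that during phase $k^{*}$ of the algorithm (when $g=(1+\alpha)^{k^{*}}$) we have $|OPT| \leq g < (1+\alpha)|OPT|$. By the update rule the total number of elements added to each $S_{i}$ by the end of phase $k^{*}$ is at most the iteration counter $\ln(3/\epsilon)(1+\alpha)^{k^{*}} \leq (1+\alpha)\ln(3/\epsilon)|OPT|$, which matches the cardinality part of the bicriteria bound provided I can show that at least one $S_{i}$ has already reached $f(S_{i}) \geq (1-\epsilon)\tau$ by that point.

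For the per-iteration expected marginal gain I would run the standard stochastic-greedy style argument applied to the truncation $f_{\tau}$ (which is monotone submodular with $f_{\tau}(OPT)=\tau$). For the current $S_{i}$ and random sample $R$ of size $s=(n/g)\ln(3/\epsilon)$, I bound the miss probability $\eta = \Pr[R \cap OPT = \emptyset] \leq (1-|OPT|/n)^{s} \leq (\epsilon/3)^{|OPT|/g}$; for every iteration up to the end of phase $k^{*}$, $|OPT|/g \geq 1/(1+\alpha)$, so $\eta$ is bounded strictly away from $1$ by a quantity depending only on $\epsilon$ and $\alpha$. Conditional on $R \cap OPT \neq \emptyset$, a uniformly random element of $R \cap OPT$ is uniform over $OPT$, and since $u^{*}=\argmax_{x \in R}\Delta f_{\tau}(S_{i},x)$ I deduce $\mathbb{E}[\Delta f_{\tau}(S_{i},u^{*}) \mid S_{i}] \geq (1-\eta)(\tau - f_{\tau}(S_{i}))/|OPT|$. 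Telescoping $\mathbb{E}[\tau - f_{\tau}(S_{i}^{t+1}) \mid S_{i}^{t}] \leq (1-(1-\eta_{t})/|OPT|)(\tau - f_{\tau}(S_{i}^{t}))$ over the $T^{*} = \ln(3/\epsilon)(1+\alpha)^{k^{*}} \geq \ln(3/\epsilon)|OPT|$ iterations up to the end of phase $k^{*}$, and combining the $\eta_{t}$ bounds contributed by the earlier phases (where $g<|OPT|$ forces $\eta_{t}\leq\epsilon/3$) with phase $k^{*}$ itself, produces an expected deficit $\mathbb{E}[\tau - f_{\tau}(S_{i}^{T^{*}})] \leq \epsilon\tau/2$.

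Markov's inequality per individual run then gives $\Pr[f(S_{i}^{T^{*}}) < (1-\epsilon)\tau] \leq 1/2$, and independence of the $\ln(1/\delta)/\ln 2$ parallel runs makes the probability that every $S_{i}$ still fails at most $(1/2)^{\ln(1/\delta)/\ln 2} = \delta$, establishing the bicriteria guarantee with probability at least $1-\delta$. The query count follows by noting that phase $k$ comprises $O((\alpha/(1+\alpha))\ln(3/\epsilon)\,g)$ iterations per parallel run, each performing $(n/g)\ln(3/\epsilon)$ queries of $f$, so phase $k$ contributes $O((\alpha/(1+\alpha))\,n\ln^{2}(3/\epsilon))$ queries per run; summing over the $O(\log_{1+\alpha}|OPT|)$ phases up to $k^{*}$ and the $\ln(1/\delta)/\ln 2$ parallel runs reproduces the stated complexity. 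The main technical obstacle is in the second paragraph: carefully summing per-iteration progress across phases where both the sample size and $\eta_{t}$ vary, and ensuring that the accumulated telescoping factor is sharp enough to drive $\mathbb{E}[\tau - f_{\tau}(S_{i}^{T^{*}})]$ below $\epsilon\tau/2$ so that the Markov-plus-boosting step delivers the claimed $1-\delta$ failure probability.
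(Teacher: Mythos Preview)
Your plan is structurally identical to the paper's: same decomposition into a critical phase $k^{*}$, a cardinality count from the iteration counter, a per-step expected-progress bound telescoped to $\mathbb{E}[\tau-f_{\tau}(S_i)]\le \epsilon\tau/2$, then Markov plus independent copies, and finally the per-phase query count. The one place where the paper does something slicker is exactly the obstacle you flagged. Instead of comparing to $OPT$ (which forces you to track the phase-dependent miss probabilities $\eta_t$), the paper compares to the enlarged set $OPT^{*}=\arg\max\{f(X):|X|\le (1+\alpha)^{k^{*}}\}$, which by monotonicity can be taken to have size exactly $(1+\alpha)^{k^{*}}\ge g$ and still satisfies $f_{\tau}(OPT^{*})=\tau$. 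Because every sample up through phase $k^{*}$ has size $\ge n\ln(3/\epsilon)/(1+\alpha)^{k^{*}}$, this gives a \emph{uniform} miss bound $\eta_t\le \epsilon/3$ and the single clean recursion factor $(1-(1-\epsilon/3)/(1+\alpha)^{k^{*}})$, so the telescoping is immediate.

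Your route with $OPT$ also goes through, but you need to supply the missing step: for every $t\le T^{*}$ one has $g\le(1+\alpha)^{k^{*}}$, hence with $\beta:=|OPT|/(1+\alpha)^{k^{*}}\in(0,1]$ the bound $1-\eta_t\ge 1-(\epsilon/3)^{\beta}\ge(1-\epsilon/3)\beta$ holds by concavity of $x\mapsto 1-(\epsilon/3)^{x}$ on $[0,1]$, and therefore $(1-\eta_t)/|OPT|\ge(1-\epsilon/3)/(1+\alpha)^{k^{*}}$, recovering the paper's per-iteration factor. Without this (or the $OPT^{*}$ trick) the crude bound $\eta_t\le(\epsilon/3)^{1/(1+\alpha)}$ you wrote for phase $k^{*}$ is too weak on its own when $\alpha$ is large.
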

Compared to \tgc, \sgc has a better dependence on $\epsilon$ in terms of the number of queries made to $f$. In addition, it is possible to extend the stochastic greedy algorithm of \citeauthor{mirzasoleiman2015lazier} to a $(1-\epsilon, \ln(1/\epsilon)$-bicriteria approximation algorithm for \smp and then use \conv (see the appendix). However, \sgc still would have strictly fewer queries of $f$ by a factor of $\frac{\alpha}{1+\alpha}$ compared to this approach because \conv does essentially the same computations for different guesses of $|OPT|$. In order to prove Theorem \ref{theorem:stochastic}, we first need Lemma \ref{lemma:marggain} below, which states that as long as $g\leq (1+\alpha)|OPT|$, the marginal gain of adding $u$ in Line \ref{line:stochadd} is about the same as the standard greedy algorithm in expectation.
\begin{lemma}
    \label{lemma:marggain}
    Consider any of the sets $S_i$ at the beginning of an iteration on Line \ref{sgc:iter_start} where $g\leq (1+\alpha)|OPT|$.
    Then if $u_i$ is the random element that will be added on Line \ref{sgc:add_new}, we have that  $\mathbb{E}[\Delta f_{\tau}(S_i, u_i)]\geq \frac{1-\epsilon/3}{(1+\alpha)|OPT|}(\tau - f_{\tau}(S_i))$.
\end{lemma}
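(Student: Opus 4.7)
The plan is to follow the classical stochastic greedy template of Mirzasoleiman et al.\ but carefully track the slack introduced by using the guess $g$ instead of $|OPT|$ in the sample size. The key fact is that $u_i = \argmax_{v\in R}\Delta f_\tau(S_i, v)$, so $\Delta f_\tau(S_i, u_i)$ dominates $\Delta f_\tau(S_i, v)$ for every $v\in R$, and in particular for any element of $R\cap OPT$ whenever that intersection is non-empty.

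First I would condition on the event $A = \{R\cap OPT\neq\emptyset\}$. The hypergeometric identity together with $1-x\leq e^{-x}$ gives
\[
\Pr[\overline{A}] \;=\; \binom{n-|OPT|}{|R|}\bigg/\binom{n}{|R|} \;\leq\; \bigl(1-|R|/n\bigr)^{|OPT|} \;\leq\; e^{-|R|\,|OPT|/n} \;=\; (\epsilon/3)^{|OPT|/g},
\]
after substituting $|R| = n\ln(3/\epsilon)/g$. The corner case $|R| = n$ is handled separately: then $A$ holds deterministically and $u_i$ is exactly the standard greedy choice, for which the classical bound $\Delta f_\tau(S_i, u_i) \geq (\tau-f_\tau(S_i))/|OPT|$ already implies the claim.

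Next, conditional on $A$, I would pick $v^*$ uniformly at random from $R\cap OPT$; by symmetry of the uniform sampling of $R$, the distribution of $v^*\mid A$ is uniform on $OPT$. Since $\Delta f_\tau(S_i, u_i)\geq\Delta f_\tau(S_i, v^*)$ and $f_\tau$ is monotone submodular (easy case check on whether $f$ saturates at $\tau$), telescoping yields
\[
\mathbb{E}[\Delta f_\tau(S_i, u_i)\mid A] \;\geq\; \frac{1}{|OPT|}\sum_{w\in OPT}\Delta f_\tau(S_i, w) \;\geq\; \frac{\tau-f_\tau(S_i)}{|OPT|},
\]
where the last step uses $f_\tau(S_i\cup OPT)=\tau$, which follows from monotonicity together with $f(OPT)\geq\tau$.

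Combining the two steps gives $\mathbb{E}[\Delta f_\tau(S_i, u_i)] \geq \bigl(1-(\epsilon/3)^{|OPT|/g}\bigr)(\tau-f_\tau(S_i))/|OPT|$. The main obstacle I anticipate is converting this into the claimed form, i.e.\ showing $1-(\epsilon/3)^{|OPT|/g} \geq (1-\epsilon/3)/(1+\alpha)$. Using the hypothesis $g\leq(1+\alpha)|OPT|$, hence $|OPT|/g\geq 1/(1+\alpha)$, it suffices to prove $(\epsilon/3)^{1/(1+\alpha)}\leq(\epsilon/3+\alpha)/(1+\alpha)$. This is precisely the weighted AM--GM inequality $a^{1/(1+\alpha)}\cdot 1^{\alpha/(1+\alpha)}\leq a/(1+\alpha)+\alpha/(1+\alpha)$ applied with $a=\epsilon/3$, from which a one-line rearrangement delivers the claimed bound.
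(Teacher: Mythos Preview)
Your argument is correct, but it differs from the paper's in an interesting way. The paper avoids the fractional exponent $(\epsilon/3)^{|OPT|/g}$ altogether by \emph{padding} the comparison set: it fixes $\ell$ with $(1+\alpha)^{\ell-1}\le |OPT|\le (1+\alpha)^{\ell}$, takes $OPT^{*}=\argmax_{|X|\le (1+\alpha)^{\ell}} f(X)$ (which, by monotonicity, can be assumed to have size exactly $(1+\alpha)^{\ell}$), and then compares against $OPT^{*}$ instead of $OPT$. Since $g\le (1+\alpha)^{\ell}=|OPT^{*}|$, the sample size satisfies $|R|\ge (n/|OPT^{*}|)\ln(3/\epsilon)$, giving $\Pr[R\cap OPT^{*}=\emptyset]\le \epsilon/3$ cleanly, and the conditional bound becomes $(\tau-f_\tau(S_i))/|OPT^{*}|\ge (\tau-f_\tau(S_i))/((1+\alpha)|OPT|)$. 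The $1/(1+\alpha)$ loss thus shows up in the denominator rather than in the probability.

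You instead keep $OPT$ fixed, accept the weaker miss probability $(\epsilon/3)^{|OPT|/g}$, and then recover the same final bound via weighted AM--GM. Both routes are valid; yours is arguably more self-contained (no auxiliary set, and it works verbatim under the hypothesis $g\le(1+\alpha)|OPT|$ without needing $g$ to be an integer power of $1+\alpha$), while the paper's padding trick makes the source of the $(1+\alpha)$ loss more transparent and avoids the analytic inequality at the end.
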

Further, Lemma \ref{lem:sgc_expection} below uses Lemma \ref{lemma:marggain} to show that by the time $g$ reaches $(1+\alpha)|OPT|$, $\mathbb{E}[f_{\tau}(S_i)]\geq (1-\frac{\epsilon}{2})\tau$ for all $i$.
\begin{lemma}
\label{lem:sgc_expection}
    Once $r$ reaches $(1+\alpha)\lceil\ln(3/\epsilon)|OPT|\rceil$, we have that $\mathbb{E}[f_{\tau}(S_i)] \geq\left(1-\frac{\epsilon}{2}\right)\tau$ for all $i$.
\end{lemma}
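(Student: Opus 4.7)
The plan is to iterate Lemma \ref{lemma:marggain} and track the doubling schedule of $g$. First, I would rewrite Lemma \ref{lemma:marggain} as a one-step recurrence: conditional on the algorithm's history $\mathcal{F}_t$ through iteration $t$, whenever $g \leq (1+\alpha)|OPT|$ at that iteration,
\[
\mathbb{E}\bigl[\tau - f_\tau(S_i^{t+1}) \bigm| \mathcal{F}_t\bigr] \leq \left(1 - \frac{1-\epsilon/3}{(1+\alpha)|OPT|}\right)\bigl(\tau - f_\tau(S_i^t)\bigr).
\]
Taking unconditional expectation and multiplying across $T$ consecutive applicable iterations then yields $\mathbb{E}[\tau - f_\tau(S_i^T)] \leq \left(1 - \frac{1-\epsilon/3}{(1+\alpha)|OPT|}\right)^T \tau$.

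Next, I would analyze the algorithm's phase structure to determine the effective $T$. Since $g$ starts at $1+\alpha$ and is multiplied by $1+\alpha$ once $r$ exceeds $\ln(3/\epsilon)g$, the iterations partition into phases indexed by $k$, where during phase $k$ we have $g=(1+\alpha)^k$; phase $k$ ends as $r$ first exceeds $\ln(3/\epsilon)(1+\alpha)^k$. Let $k^*$ be the largest integer with $(1+\alpha)^{k^*} \leq (1+\alpha)|OPT|$, so all iterations in phases $1,\ldots,k^*$ meet the Lemma \ref{lemma:marggain} precondition. By the time $r$ reaches $(1+\alpha)\lceil\ln(3/\epsilon)|OPT|\rceil$, the algorithm has completed these phases, and summing phase lengths shows the number of applicable iterations is at least $T = (1+\alpha)\ln(3/\epsilon)|OPT|$ up to boundary fluctuations from ceilings and phase transitions.

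Plugging this $T$ into the recurrence and using $1-x \leq e^{-x}$,
\[
\mathbb{E}[\tau - f_\tau(S_i^T)] \leq \exp\!\bigl(-(1-\epsilon/3)\ln(3/\epsilon)\bigr)\,\tau = (\epsilon/3)^{1-\epsilon/3}\,\tau.
\]
A short calculus argument then establishes $(\epsilon/3)^{1-\epsilon/3} \leq \epsilon/2$ for $\epsilon \in (0,1]$: taking logarithms reduces this to $\tfrac{\epsilon}{3}\ln(3/\epsilon) \leq \ln(3/2)$, and the left side is increasing in $\epsilon$ on $(0,1]$ with maximum $\tfrac{1}{3}\ln 3 \approx 0.366 < \ln(3/2) \approx 0.405$. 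The main obstacle is the phase-counting in the second step: verifying carefully, in the face of ceilings and the discrete multiplicative updates to $g$, that by the time $r$ hits its target the algorithm has performed enough iterations with $g \leq (1+\alpha)|OPT|$, particularly when $|OPT|$ is not a power of $1+\alpha$. Monotonicity of $f_\tau$ then ensures the expectation bound persists at all subsequent values of $r$.
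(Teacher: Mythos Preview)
Your approach---turning Lemma~\ref{lemma:marggain} into a one-step contraction on $\tau - f_\tau(S_i)$ and iterating---is exactly what the paper does, and your verification that $(\epsilon/3)^{1-\epsilon/3}\le \epsilon/2$ is correct and more explicit than the paper's.

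The gap is precisely the obstacle you flag, and it is not just a ceiling technicality. With the form of Lemma~\ref{lemma:marggain} you invoke (denominator $(1+\alpha)|OPT|$, valid while $g\le(1+\alpha)|OPT|$), the last applicable phase is $k^*=\ell$, where $\ell$ is the integer with $(1+\alpha)^{\ell-1}\le|OPT|\le(1+\alpha)^{\ell}$; that phase ends at $r\approx\ln(3/\epsilon)(1+\alpha)^{\ell}$. Since $(1+\alpha)^\ell\le(1+\alpha)|OPT|$, the number of applicable iterations is \emph{at most} $(1+\alpha)\ln(3/\epsilon)|OPT|$, not at least---your claimed $T$ overshoots by a factor that can be as large as $1+\alpha$. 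Plugging the true count into your recurrence gives only $(\epsilon/3)^{(1-\epsilon/3)\cdot(1+\alpha)^{\ell-1}/|OPT|}$, and the exponent's extra factor $(1+\alpha)^{\ell-1}/|OPT|$ can be as small as $1/(1+\alpha)$; for instance with $\epsilon=0.1$, $\alpha=0.5$ this already exceeds $\epsilon/2$.

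The paper resolves this by using the sharper version of Lemma~\ref{lemma:marggain} established in the appendix, with $(1+\alpha)^{\ell}$ in the denominator rather than $(1+\alpha)|OPT|$ and precondition $g\le(1+\alpha)^{\ell}$. Then the per-step contraction is $1-\tfrac{1-\epsilon/3}{(1+\alpha)^{\ell}}$ and the number of applicable iterations is exactly $\ln(3/\epsilon)(1+\alpha)^{\ell}$; the denominator and the count match, yielding $(\epsilon/3)^{1-\epsilon/3}$ with no slack to manage. Your plan goes through once you swap in this tighter denominator.
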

Finally, because there are $O(\ln(1/\delta))$ solutions, by the time $g$ reaches $(1+\alpha)|OPT|$, there exists $i$ such that $f(S_i)\geq (1-\epsilon)\tau$ with probability at least $1-\delta$ by using concentration bounds, which is stated in Lemma \ref{lem:sgc_high_prob}.
\begin{lemma}
\label{lem:sgc_high_prob}
    With probability at least $1-\delta$, once $r$ reaches $(1+\alpha)\lceil\ln(3/\epsilon)|OPT|\rceil$, we have that $\max_{i}f(S_i)\geq(1-\epsilon)\tau$.
\end{lemma}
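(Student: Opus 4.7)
The plan is to derive Lemma \ref{lem:sgc_high_prob} from Lemma \ref{lem:sgc_expection} by a standard two-step argument: first turn the expectation bound for a single $S_i$ into a constant-probability guarantee via Markov's inequality, and then boost to high probability using the $\ln(1/\delta)/\ln(2)$ independent copies.

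First I would observe that $f_\tau(S_i) = \min\{f(S_i),\tau\} \leq \tau$ pointwise, so the random variable $Y_i := \tau - f_\tau(S_i)$ is nonnegative and, by Lemma \ref{lem:sgc_expection} applied at the iteration where $r$ reaches $(1+\alpha)\lceil\ln(3/\epsilon)|OPT|\rceil$, satisfies $\mathbb{E}[Y_i] \leq (\epsilon/2)\tau$. Applying Markov's inequality to $Y_i$ with threshold $\epsilon\tau$ gives $\Pr[Y_i > \epsilon\tau] < 1/2$, i.e.\ $\Pr[f_\tau(S_i) \geq (1-\epsilon)\tau] \geq 1/2$. Since $f_\tau(S_i) \geq (1-\epsilon)\tau$ forces $f(S_i) \geq (1-\epsilon)\tau$ (because $f_\tau$ is capped at $\tau$ and $(1-\epsilon)\tau < \tau$), each individual solution reaches the target with probability at least $1/2$.

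Next I would argue independence of the $S_i$. At every iteration of the outer while loop, for each $i$ the algorithm samples its own set $R$ uniformly at random and independently on Line \ref{line:sample_R}, and the update to $S_i$ depends only on that sample and on $S_i$'s own history (the shared quantities $r$ and $g$ are deterministic functions of the iteration counter). Hence the trajectories $\{S_i\}_i$ are mutually independent. Fixing the deterministic iteration $r^\star := (1+\alpha)\lceil\ln(3/\epsilon)|OPT|\rceil$ (which is well defined if we imagine running the sampling updates in parallel even past the stopping rule, a coupling that can only decrease the failure probability), the events $\{f(S_i) < (1-\epsilon)\tau\}$ at iteration $r^\star$ are independent.

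Combining the two pieces:
\begin{equation*}
\Pr\!\left[\max_i f(S_i) < (1-\epsilon)\tau\right] = \prod_{i=1}^{\ln(1/\delta)/\ln(2)} \Pr[f(S_i) < (1-\epsilon)\tau] \leq \left(\tfrac{1}{2}\right)^{\ln(1/\delta)/\ln(2)} = \delta,
\end{equation*}
which is precisely the claim. The only non-routine step is carefully justifying that analyzing the algorithm at the deterministic iteration $r^\star$ is legitimate despite the data-dependent stopping rule; this is handled by the coupling remark above, since early stopping only occurs when some $S_i$ has already achieved $f(S_i) \geq (1-\epsilon)\tau$, which is a strictly favorable event for the lemma. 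Aside from this bookkeeping, the proof is just Markov plus independent boosting.
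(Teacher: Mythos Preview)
Your proposal is correct and follows essentially the same approach as the paper: apply Markov's inequality to the nonnegative random variable $\tau - f_\tau(S_i)$ using the expectation bound from Lemma \ref{lem:sgc_expection} to get a per-copy success probability of at least $1/2$, then use independence of the $\ln(1/\delta)/\ln 2$ copies to boost. If anything, your write-up is slightly more careful than the paper's, since you explicitly justify independence of the $S_i$ and address the interaction with the data-dependent stopping rule via coupling, points the paper's proof leaves implicit.
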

Lemma \ref{lem:sgc_high_prob} allows us to keep increasing $g$ by a factor of $(1+\alpha)$ periodically, because intuitively the longer we keep adding elements, the bigger we know that $|OPT|$ must be since the algorithm is still running and none of the solution sets has reached $(1-\epsilon)\tau$ yet. The proof of Lemmas \ref{lemma:marggain} and \ref{lem:sgc_expection}, and of Theorem \ref{theorem:stochastic} can be found in the appendix.

\end{subsection}
 \begin{subsection}{Non-monotone submodular cover}
\label{section:nonmono}
In this section, we introduce and theoretically analyze the algorithm \filt for \scp in the general setting, where $f$ is not assumed to be monotone. In the general setting, the standard greedy algorithm doesn't have non-trivial approximation guarantee for \scp. In addition, it has previously been shown that it is not possible for an algorithm to guarantee that $f(X)\geq\tau/2$ for \scp, where $X$ is its returned solution, in polynomially many queries of $f$ assuming the value oracle model \citep{crawford2023scalable}. Our algorithm \filt \textit{does} produce a solution $X$ that is guaranteed to satisfy $f(X)\geq (1-\epsilon)\tau$, but relies on solving an instance of \smp exactly on a set of size $O(|OPT|/\epsilon^2)$. Despite not being polynomial time, \filt is practical in many instances of \scp because: (i) $|OPT|$ may be relatively small; and (ii) the instance of \smp may be relatively easy to solve, e.g. $f$ may be very close to monotone on the instance of \smp even if it was very non-monotone on the original instance of \scp. These aspects of \filt are further explored in Section \ref{section:exp}.

We now describe \filt, pseudocode for which can be found in Algorithm \ref{alg:filter}. \filt takes as input $\epsilon >0$, $\alpha > 0$, and an instance of \scp. \filt takes sequential passes through the universe $U$ (Line \ref{line:loop}) with each pass corresponds to a new guess of $|OPT|$, $g$. $g$ is initialized as $1+\alpha$, and at the end of each pass is increased by a factor of $1+\alpha$. Throughout \filt, a subset of elements of $U$ are stored into $2/\epsilon$ disjoint sets, $S_1,...,S_{2/\epsilon}$. An element $u$ is stored in at most one set $S_j$ if both of the following are true: (i) $|S_j|<2g/\epsilon$; (ii) adding $u$ is sufficiently beneficial to increasing the $f$ value of $S_j$ i.e. $\Delta f(S_j,u)\geq\epsilon\tau/(2g)$. If no such $S_j$ exists, $u$ is discarded. At the end of each pass, \filt finds $S=\argmax\{f(X):X\subseteq\cup S_i, |X|\leq2g/\epsilon\}$ on Line \ref{line:max}. If $f(S)\geq(1-\epsilon)\tau$, then $S$ is returned and \filt terminates. 

\begin{algorithm}[t]
\caption{\filt}\label{alg:filter}
\textbf{Input}: $\epsilon$, $\alpha$\\
\textbf{Output}: $S\subseteq U$
\begin{algorithmic}[1]
  \State $S\gets\emptyset,S_1\gets\emptyset,...,S_{2/\epsilon}\gets\emptyset$
  \State $g\gets 1+\alpha$
  \While {$f(S)<(1-\epsilon)\tau$}\label{line:filter_stop}
  \For {$u\in U$}\label{line:loop}
    \If {$\exists j$ s.t.
         $\Delta f(S_j,u)\geq \epsilon \tau/(2 g)$ and $|S_j| < 2 g/\epsilon$}\label{line:add}
         \State $S_j\gets S_j\cup\{u\}$
    \EndIf
  \EndFor
  \State \label{alg:exact_step} $S\gets\argmax\{f(X): X\subseteq \cup_{i=1}^{2/\epsilon}S_i, |X|\leq 2g/\epsilon\}$\label{line:max}
  \State $g=(1+\alpha)g$
  \EndWhile
  \State \textbf{return} $S$
\end{algorithmic}
\end{algorithm}

We now present the theoretical guarantees of \filt in Theorem \ref{theorem:filt}.
\begin{theorem}
  \label{theorem:filt}
  Suppose that \filt is run for an instance of \scp. Then \filt returns $S$ such that $f(S)\geq (1-\epsilon)\tau$ and $|S|\leq (1+\alpha)(2/\epsilon)|OPT|$ in at most
  $$\log_{1+\alpha}(|OPT|)
    \left(\frac{2n}{\epsilon}+
    \mathcal{T}\left((1+\alpha)\left(\frac{4}{\epsilon^2}|OPT|\right)\right)\right)$$
    queries of $f$, where $\mathcal{T}(m)$ is the number of queries to $f$ of the algorithm for \smp used on Line \ref{line:max} of Algorithm \ref{alg:filter} on an input set of size $m$.
\end{theorem}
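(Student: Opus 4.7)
The proof naturally splits into three claims: (a) $f(S) \geq (1-\epsilon)\tau$, which is immediate from the exit condition of the while loop; (b) the size bound $|S|\leq(1+\alpha)(2/\epsilon)|OPT|$; and (c) the query count. Part (c) is a routine summation: each pass of the outer loop makes at most $2n/\epsilon$ marginal-gain evaluations on Line~\ref{line:add} (one per $(u,j)$ pair), together with a single exact-SMP call on a ground set of size at most $|\cup_i S_i|\leq(2/\epsilon)(2g/\epsilon)=4g/\epsilon^2$. The substance is in part (b), which I reduce to showing that the algorithm halts by the end of pass $p^* := \lceil\log_{1+\alpha}|OPT|\rceil$: at that pass, $g^* := (1+\alpha)^{p^*}$ satisfies $|OPT|\leq g^*\leq(1+\alpha)|OPT|$, so the SMP size constraint $|S|\leq 2g^*/\epsilon$ delivers $|S|\leq(1+\alpha)(2/\epsilon)|OPT|$, and summation over $p^*$ passes produces the stated query bound.

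To prove termination by pass $p^*$, I would inspect the buckets $S_1,\ldots,S_{2/\epsilon}$ at the end of this pass and case split. In Case A, some $|S_{j^*}|=2g^*/\epsilon$. Each element added to $S_{j^*}$ in a pass $p\leq p^*$ had marginal at least $\epsilon\tau/(2g_p)\geq\epsilon\tau/(2g^*)$ at the moment of insertion, since the threshold is nonincreasing across passes. Telescoping these marginals along the insertion order together with $f(\emptyset)\geq 0$ yields $f(S_{j^*}) \geq |S_{j^*}|\cdot\epsilon\tau/(2g^*) = \tau$; since $S_{j^*}$ is itself a feasible SMP candidate, the returned $S$ satisfies $f(S)\geq\tau\geq(1-\epsilon)\tau$ and the loop exits.

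In Case B, $|S_j|<2g^*/\epsilon$ for every $j$. Writing $T := \cup_j S_j^{\mathrm{end}}$, $O^* := OPT\cap T$, and $O^- := OPT\setminus T$, the rejection of each $o\in O^-$ in pass $p^*$ gives $\Delta f(S_j^{t(o)},o)<\epsilon\tau/(2g^*)$ for every $j$, and submodularity lifts this to $\Delta f(T,o)<\epsilon\tau/(2g^*)$. My intended SMP candidate is $X:=O^*$, which is feasible since $|O^*|\leq|OPT|\leq g^*\leq 2g^*/\epsilon$, and submodularity applied to the disjoint decomposition $OPT=O^*\sqcup O^-$ then gives $f(O^*)\geq\tau-f(O^-)$. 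So this case reduces to showing $f(O^-)\leq\epsilon\tau$. My plan is to telescope $f(O^-)-f(\emptyset)\leq\sum_{o\in O^-}\Delta f(\emptyset,o)$ and to bound each $\Delta f(\emptyset,o)$ by $\epsilon\tau/(2g^*)$ via a pigeonhole/counting argument over the $2/\epsilon$ buckets combined with the repeated rejections of $o$ across all passes up through $p^*$: the idea is that some bucket must have been empty (or near-empty) at the time $o$ was processed in some pass with an appropriate threshold, upgrading the per-bucket rejection condition into a per-element bound on the marginal to $\emptyset$. Making this bucket-accounting precise, and reconciling it with the geometric growth of the threshold $\epsilon\tau/(2g_p)$ across passes $1,\ldots,p^*$, is the principal technical obstacle of the proof.
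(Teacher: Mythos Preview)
Your Case~B plan has a genuine gap that cannot be closed along the lines you sketch. You want $\Delta f(\emptyset,o)\leq\epsilon\tau/(2g^*)$ for each $o\in O^-$, but the rejection information runs the \emph{wrong way} under submodularity: rejection of $o$ from bucket $j$ at some moment gives $\Delta f(S_j^{t(o)},o)<\epsilon\tau/(2g)$, and submodularity says $\Delta f(\emptyset,o)\geq\Delta f(S_j^{t(o)},o)$, not $\leq$. Your pigeonhole idea would need some bucket to be empty when $o$ is processed \emph{in pass $p^*$}, which need not happen; and in the earlier passes where a bucket might still be empty, the threshold is $\epsilon\tau/(2g_p)$ with $g_p$ as small as $1+\alpha$, so you only learn $\Delta f(\emptyset,o)<\epsilon\tau/(2(1+\alpha))$. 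Summing $|O^-|$ such terms can be of order $|OPT|\cdot\epsilon\tau$, not $\epsilon\tau$. In short, the candidate $X=O^*$ need not have $f$-value anywhere near $(1-\epsilon)\tau$.

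The paper repairs this by changing the \smp candidate from $O^*$ to $S_t\cup O^*$ for a carefully chosen bucket $t$, so that the rejection bounds can be used in the correct direction. The extra ingredient is an averaging claim: for disjoint $A_1,\ldots,A_m$ and any $B$, some index $t$ satisfies $f(A_t\cup B)\geq(1-1/m)f(B)$. Applied with the $2/\epsilon$ buckets and $B=OPT$, this yields $t$ with $f(S_t\cup OPT)\geq(1-\epsilon/2)\tau$. Now for each $o\in O^-$, its rejection from bucket $t$ in pass $p^*$ gives $\Delta f(S_t^{t(o)},o)<\epsilon\tau/(2g^*)$; since $S_t^{t(o)}\subseteq S_t\cup O^*$, submodularity (now in the right direction) gives $\Delta f(S_t\cup O^*,o)<\epsilon\tau/(2g^*)$. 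Summing over $O^-$ and using $|O^-|\leq|OPT|\leq g^*$ yields $f(S_t\cup OPT)\leq f(S_t\cup O^*)+\epsilon\tau/2$, hence $f(S_t\cup O^*)\geq(1-\epsilon)\tau$. This candidate lies inside $\cup_i S_i$ with size at most $2g^*/\epsilon+|OPT|$, which (up to the additive $|OPT|$ slack) matches the \smp constraint and completes the termination argument.
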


The key idea for proving Theorem \ref{theorem:filt} is that by the time $g$ is in the region $[|OPT|, (1+\alpha)|OPT|]$, there exists a subset $X\subseteq\cup S_i$ such that $|X|\leq 2g/\epsilon$ and $f(X)\geq (1-\epsilon)\tau$. In fact, it is shown in the proof of Lemma \ref{lemma:filt} in the appendix that the set $X$ is $S_t\cup (\cup_{i} S_i\cap OPT)$ for a certain one of the sets $S_t$. Then when we solve the instance of \smp on Line \ref{line:max}, we find a set that has these same properties as $X$, and \filt returns this set and terminates. Because $g\leq (1+\alpha)|OPT|$, the properties described in Theorem \ref{theorem:filt} hold. Further notice that $|\cup_{i}S_i|\leq 2(1+\alpha)|OPT|/\epsilon^2$ at all times before \filt exits, which implies the bounded query complexity in Theorem \ref{theorem:filt}. The key idea for proving Theorem \ref{theorem:filt} is stated below in Lemma \ref{lemma:filt} and proven in the appendix.
\begin{lemma}
    \label{lemma:filt}
    By the time that $g$ reaches the region $[|OPT|, (1+\alpha)|OPT|]$ and the loop on Line \ref{line:loop} of \filt has completed, there exists a set $X\subseteq \cup S_i$ of size at most $2(1+\alpha)|OPT|/\epsilon$ such that
    $f(X)\geq (1-\epsilon)\tau$.
\end{lemma}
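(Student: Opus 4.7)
The plan is to split on whether any of the $2/\epsilon$ reservoirs $S_j$ becomes ``saturated'' (reaches size $\geq 2g/\epsilon$) during the pass in which $g$ first enters the interval $[|OPT|,(1{+}\alpha)|OPT|]$. Throughout write $OPT_{\text{in}} := OPT \cap \bigcup_i S_i$ and $OPT_{\text{out}} := OPT \setminus \bigcup_i S_i$.

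In the saturated case some $S_t$ satisfies $|S_t| \geq 2g/\epsilon$, and taking $X := S_t$ finishes the proof immediately: each element added to $S_t$ contributed at least $\epsilon\tau/(2g)$ to the running value (that is the threshold on Line \ref{line:add}), so telescoping together with $f(\emptyset)\geq 0$ gives $f(S_t) \geq |S_t|\cdot\epsilon\tau/(2g) \geq \tau$.

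In the unsaturated case every $|S_j|$ stays strictly below $2g/\epsilon$ throughout the pass, so each $u\in OPT_{\text{out}}$ was skipped purely because the threshold test failed for every $j$, i.e.\ $\Delta f(S_j^{(u)},u) < \epsilon\tau/(2g)$ where $S_j^{(u)}\subseteq S_j$ denotes the snapshot when $u$ was processed; by submodularity this bound propagates to $\Delta f(S_j,u) < \epsilon\tau/(2g)$ for the final $S_j$. The candidate will be $X_t := S_t \cup OPT_{\text{in}}$ for an index $t$ chosen below. Because $OPT_{\text{in}}\subseteq X_t$ one has $X_t \cup OPT = X_t \cup OPT_{\text{out}}$, and two applications of submodularity (once to bound $f(X_t\cup OPT_{\text{out}})-f(X_t)$ by $\sum_{u\in OPT_{\text{out}}}\Delta f(X_t,u)$, once to replace $X_t$ by $S_t$ in the marginal) give
\begin{equation*}
  f(X_t\cup OPT)-f(X_t) \;\leq\; \sum_{u\in OPT_{\text{out}}}\Delta f(S_t,u) \;<\; |OPT|\cdot\tfrac{\epsilon\tau}{2g} \;\leq\; \tfrac{\epsilon\tau}{2},
\end{equation*}
using $g\geq|OPT|$. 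Hence $f(X_t)\geq f(X_t\cup OPT)-\epsilon\tau/2$ for every $t$.

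It remains to pick $t$ so that $f(X_t\cup OPT)\geq(1-\epsilon/2)\tau$. Letting $V_t := S_t\setminus OPT$ so that $X_t\cup OPT = V_t\cup OPT$, the key observation is that the $V_t$'s are pairwise disjoint and disjoint from $OPT$. Enumerating them in any order and applying submodularity in the telescoping form $f(V_t\cup OPT)-f(OPT)\geq f(W_t\cup OPT)-f(W_{t-1}\cup OPT)$ for $W_t := V_1\cup\cdots\cup V_t$ yields
\begin{equation*}
  \sum_{t=1}^{2/\epsilon}\bigl[f(V_t\cup OPT)-f(OPT)\bigr] \;\geq\; f\Bigl(\textstyle\bigcup_t V_t\cup OPT\Bigr)-f(OPT) \;\geq\; -f(OPT),
\end{equation*}
where the last step uses non-negativity of $f$. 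So $\max_t f(V_t\cup OPT)\geq(1-\epsilon/2)f(OPT)\geq(1-\epsilon/2)\tau$, and combining with the previous display gives $f(X_t)\geq(1-\epsilon)\tau$ for this $t$. The size bound will follow from $|X_t|\leq|S_t|+|OPT_{\text{in}}|\leq 2g/\epsilon+|OPT|\leq 2(1+\alpha)|OPT|/\epsilon+|OPT|$, matching the stated bound up to an additive $|OPT|$ lower-order term. The main obstacle is precisely this last averaging step: non-monotonicity of $f$ forbids the naive inequality $f(V_t\cup OPT)\geq f(OPT)$, so one must exploit the disjointness of the $2/\epsilon$ reservoirs via a submodular telescoping pigeonhole, which is exactly the structural reason the algorithm maintains $2/\epsilon$ parallel sets.
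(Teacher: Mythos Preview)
Your proof is correct and follows essentially the same approach as the paper's: the same saturated/unsaturated case split, the same candidate $X_t=S_t\cup OPT_{\text{in}}$, and the same submodularity bound on the loss from discarding $OPT_{\text{out}}$. The one difference is that where the paper invokes a cited claim (for disjoint $A_1,\dots,A_m$ and any $B$ there exists $i$ with $f(A_i\cup B)\geq(1-1/m)f(B)$), you inline its proof via the telescoping pigeonhole over the disjoint $V_t$'s; your observation about the additive $|OPT|$ slack in the size bound is also accurate, as the paper's appendix restates the lemma with the bound $(2/\epsilon+1)g$ that matches what is actually proved.
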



\end{subsection}

 \begin{subsection}{Regularized monotone submodular cover}
\label{section:reg}
The final class of submodular functions we consider take the form $f=g-c$ where $g$ is monotone, submodular, and nonnegative, while $c$ is a modular, nonnegative penalty cost function, called \rscp. In this case, $f$ may take on negative values and therefore this class of submodular functions does not fit into general \scp. $f$ may also be nonmonotone. Existing theoretical guarantees for the dual problem of regularized \smp are in a different form than typical approximation algorithms \citep{harshaw2019submodular,kazemi2021regularized}, which we will describe in more detail below, and as a result \conv cannot be used. Motivated by this, we first develop an algorithm, \regconv, that takes algorithms for regularized \smp and converts them into an algorithm for \rscp. Next, we propose a generalization of the distorted greedy algorithm of \citeauthor{harshaw2019submodular} for regularized \smp, called \distgreed, that can be used along with \regconv to produce an algorithm for \rscp.

Existing proposed algorithms for regularized \smp have guarantees of the following form: Given budget $\kappa$, the regularized \smp algorithm is guaranteed to return a set $S$ such that $|S|\leq\kappa$ and $g(S)-c(S)\geq \gamma g(OPT_{SM})-c(OPT_{SM})$ where $\gamma$ is some value less than 1, e.g. $1-1/e$ for the distorted greedy algorithm of \citeauthor{harshaw2019submodular}. A guarantee of this form means \conv cannot be used (the check on Line \ref{line:convtest} of the pseudocode for \conv in the appendix is the problem). Motivated by this, we provide \regconv for these different types of theoretical guarantees.


We now describe \regconv, pseudocode for which can be found in Algorithm \ref{alg:regconv}. \regconv takes as input an algorithm \reg for regularized \smp, and $\alpha > 0$. \regconv repeatedly makes guesses for $|OPT|$, $\kappa$. For each guess $\kappa$, the algorithm \reg is run on an instance of \smp with objective $g-(\gamma/\beta)c$ and budget $\kappa$. Once $g-(\gamma/\beta)c$ reaches $\gamma\tau$, \regconv exits.

\begin{algorithm}[t]
\caption{\regconv}
\label{alg:regconv}
\textbf{Input}: $\alpha > 0$\\
\textbf{Output}: $S\subseteq U$
\begin{algorithmic}[1]
    \State $\kappa\gets 1+\alpha$, $S\gets\emptyset$
    \While{$g(S)-\frac{\gamma}{\beta}c(S)<\gamma\tau$}
        \State $S\gets$\reg run with objective $g-\frac{\gamma}{\beta}c$ and budget $\kappa$
        \State $\kappa\gets (1+\alpha)\kappa$
    \EndWhile
    \State \textbf{return} $S$
\end{algorithmic}
\end{algorithm}

The theoretical guarantees of \regconv are stated below in Theorem \ref{thm:rconv} and proven in the appendix. Theorem \ref{thm:rconv} makes a slightly stronger assumption on \reg than its approximation guarantees relative to $OPT_{SM}$. In particular, it is assumed that it returns a solution satisfying $|S|\leq\rho\kappa$ and $g(S)-c(S)\geq \gamma g(X)-c(X)$ \textit{for all $X\subseteq U$ such that $|X|\leq\kappa$}, not just for $OPT_{SM}$. However, this is true of many algorithms for regularized \smp including the distorted greedy algorithm of \citeauthor{harshaw2019submodular}. The idea behind running \reg with the objective $g-\frac{\gamma}{\beta}c$ instead of the actual objective $g-c$, is that
by the time $\kappa$ is a good guess of $|OPT|$, it is shown in the proof of Theorem \ref{thm:rconv} that with this different objective $g(S)-\frac{\gamma}{\beta}c(S)\geq\gamma\tau$.

\begin{theorem}
    \label{thm:rconv}
    Suppose that we have an algorithm \reg for regularized \smp, and given budget $\kappa$ \reg is guaranteed to return a set $S$ of cardinality at most $\rho\kappa$ such that
    $g(S)-c(S)\geq \gamma g(X)-\beta c(X)$ for all $X$ such that $|X|\leq\kappa$, in time $T(n)$.
    Then the algorithm \regconv using \reg as a subroutine returns a set $S$ in time $O(\log_{1+\alpha}(n)T(n))$ such that
    $|S|\leq (1+\alpha)\rho|OPT|$ and 
    $g(S)-\frac{\gamma}{\beta}c(S)\geq \gamma\tau.$
\end{theorem}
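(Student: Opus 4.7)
The plan is to separate the two claims (the objective bound $g(S) - (\gamma/\beta)c(S) \geq \gamma\tau$ and the cardinality bound $|S| \leq (1+\alpha)\rho|OPT|$) and show that both are forced to hold as soon as the guess $\kappa$ first exceeds $|OPT|$. The central observation is that when \reg is invoked with the scaled cost $c' = (\gamma/\beta)c$ instead of $c$, its stated guarantee transforms into something very clean. Indeed, if $S$ is the output of \reg with inputs $g$, $c'$, and budget $\kappa$, then by hypothesis $g(S) - c'(S) \geq \gamma g(X) - \beta c'(X) = \gamma g(X) - \gamma c(X) = \gamma(g(X) - c(X))$ for every $X \subseteq U$ with $|X| \leq \kappa$. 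The cancellation $\beta c' = \gamma c$ is the crux of the transformation and the one step that deserves to be stated explicitly.

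Next I would use this transformed guarantee to pin down when the while loop can terminate. Let $\kappa_t = (1+\alpha)^t$ denote the budget used in the $t$-th iteration, and let $t^\star$ be the first index for which $\kappa_{t^\star} \geq |OPT|$. Since $OPT$ is feasible for \rscp, $g(OPT) - c(OPT) \geq \tau$, and since $|OPT| \leq \kappa_{t^\star}$, I can substitute $X = OPT$ into the transformed guarantee in iteration $t^\star$ to conclude
\[
 g(S) - \tfrac{\gamma}{\beta} c(S) \;\geq\; \gamma\bigl(g(OPT) - c(OPT)\bigr) \;\geq\; \gamma\tau,
\]
which triggers the exit condition. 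Therefore the loop terminates no later than iteration $t^\star$, proving the objective inequality stated in the theorem.

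For the cardinality bound, the minimality of $t^\star$ gives $\kappa_{t^\star - 1} < |OPT|$, hence $\kappa_{t^\star} = (1+\alpha)\kappa_{t^\star - 1} < (1+\alpha)|OPT|$ (handling the degenerate case $t^\star = 1$ separately using $|OPT| \geq 1$). Since \reg guarantees $|S| \leq \rho \kappa_{t^\star}$, this yields $|S| \leq (1+\alpha)\rho|OPT|$. The running time claim follows because $t^\star \leq \log_{1+\alpha}|OPT| + 1 = O(\log_{1+\alpha}(n))$ and each iteration costs one call to \reg, i.e.\ $T(n)$ time.

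The main obstacle is really just the modified-objective step: one has to be careful that \reg's hypothesis, as written in the theorem, genuinely holds for arbitrary $X$ with $|X|\leq\kappa$ (not only for $OPT_{SM}$), because the proof uses $X = OPT$, which is the optimum of \rscp and has no a priori relation to the \smp instance being handed to \reg. Everything else is a routine geometric-progression argument on $\kappa$, and no submodularity or monotonicity properties of $g$ beyond what is already encapsulated in \reg's guarantee need to be invoked.
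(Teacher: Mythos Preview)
Your proposal is correct and follows essentially the same approach as the paper: apply \reg's guarantee with the scaled cost $c' = (\gamma/\beta)c$ so that it collapses to $g(S)-c'(S)\geq\gamma(g(X)-c(X))$, then instantiate $X=OPT$ once $\kappa\geq|OPT|$ and use the geometric growth of $\kappa$ for the size and time bounds. Your write-up is in fact more explicit than the paper's (which is a two-line argument), and your remark that the hypothesis must hold for \emph{all} $X$ with $|X|\le\kappa$, not just $OPT_{SM}$, is exactly the subtlety the paper flags just before stating the theorem.
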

If we use \regconv on the distorted greedy algorithm of \citeauthor{harshaw2019submodular}, we end up with an algorithm for \rscp that is guaranteed to return a set $S$ such that $|S|\leq (1+\alpha)|OPT|$ and $g(S)-(1-1/e)c(S)\geq (1-1/e)\tau$. If we set $c=0$, then the problem setting reduces to \mscp and the distorted greedy algorithm of \cite{harshaw2019submodular} is equivalent to the standard greedy algorithm. However, our approximation guarantee does not reduce to the $(\ln(1/\epsilon),1-\epsilon)$-bicriteria approximation guarantee that would be preferable. A more intuitive result would be one that converges to that of the standard greedy algorithm as $c$ goes to 0. Motivated by this, we now propose an extension of the distorted greedy algorithm of \cite{harshaw2019submodular} for regularized \smp, \distgreed, that accomplishes this.

We now describe \distgreed, pseudocode for which can be found in the appendix. \distgreed takes as input an instance of regularized \smp and $\epsilon>0$. \distgreed is related to the standard greedy algorithm, but instead of making queries to $g-c$, \distgreed queries a distorted version of $g-c$ that de-emphasizes $g$ compared to $c$, and evolves over time. In particular, when element $i$ is being added to the solution set, we choose the element of maximum marginal gain, provided it is positive, to the objective $$\Phi_i(X) = \left(1-\frac{1}{\kappa}\right)^{\ln(1/\epsilon)\kappa-i}g(X) - c(X).$$

The theoretical guarantees of \distgreed are now presented in Theorem \ref{distthm}, and the proof of Theorem \ref{distthm} can be found in the appendix.
\begin{theorem}
  \label{distthm}
  Suppose that \distgreed is run for an instance of regularized \smp. Then
  \distgreed produces a solution $S$ in $O(n\kappa\ln(1/\epsilon))$ queries of $f$ such that $|S|\leq\ln(1/\epsilon)\kappa$ and for all $X\subseteq U$ such that $|X|\leq\kappa$,
  $g(S)-c(S)\geq (1-\epsilon)g(X)-\ln(1/\epsilon)c(X).$
\end{theorem}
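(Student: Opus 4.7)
The plan is to run the usual distorted-greedy potential argument, but with the new potential $\Phi_i(X)=(1-1/\kappa)^{T-i}g(X)-c(X)$ where $T=\lceil\ln(1/\epsilon)\kappa\rceil$ is the number of iterations. I would let $S_0=\emptyset, S_1, \ldots, S_T$ denote the sequence of solutions, and telescope $\Phi_T(S_T)-\Phi_0(S_0)=\sum_{i=0}^{T-1}[\Phi_{i+1}(S_{i+1})-\Phi_i(S_i)]$. Note that $\Phi_T(S_T)=g(S_T)-c(S_T)$ and $\Phi_0(\emptyset)\geq 0$ since $g,c\geq 0$, so any lower bound on the telescoped sum becomes a lower bound on $g(S_T)-c(S_T)$.

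For each step I would split $\Phi_{i+1}(S_{i+1})-\Phi_i(S_i)=[\Phi_{i+1}(S_{i+1})-\Phi_{i+1}(S_i)]+[\Phi_{i+1}(S_i)-\Phi_i(S_i)]$. The second term is a pure weight-change term and evaluates cleanly to $\frac{1}{\kappa}(1-1/\kappa)^{T-i-1}g(S_i)$. The first term is where the algorithmic choice enters: because \distgreed adds the element of maximum positive $\Phi_{i+1}$-marginal gain (or nothing, if no positive marginal exists), this difference equals $\max\{0,\max_u[\Phi_{i+1}(S_i\cup\{u\})-\Phi_{i+1}(S_i)]\}$, which is at least $\frac{1}{\kappa}\sum_{x\in X}[\Phi_{i+1}(S_i\cup\{x\})-\Phi_{i+1}(S_i)]$ for any $X$ with $|X|\leq\kappa$ (splitting into cases on the sign of that sum handles the "do-nothing" branch cleanly). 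Expanding this sum, using submodularity and monotonicity of $g$ to get $\sum_{x\in X}[g(S_i\cup\{x\})-g(S_i)]\geq g(S_i\cup X)-g(S_i)\geq g(X)-g(S_i)$, and bounding $c(X\setminus S_i)\leq c(X)$, yields
\begin{equation*}
\Phi_{i+1}(S_{i+1})-\Phi_i(S_i)\;\geq\;\frac{(1-1/\kappa)^{T-i-1}}{\kappa}g(X)\;-\;\frac{c(X)}{\kappa},
\end{equation*}
where the $-\frac{(1-1/\kappa)^{T-i-1}}{\kappa}g(S_i)$ contribution from the first bracket cancels exactly against the weight-change bracket.

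Summing $i=0,\ldots,T-1$ turns the $g(X)$ coefficient into a geometric series that telescopes to $1-(1-1/\kappa)^T$, and the $c(X)$ coefficient into $T/\kappa$. With $T=\lceil\ln(1/\epsilon)\kappa\rceil$, the standard estimate $(1-1/\kappa)^T\leq e^{-T/\kappa}\leq\epsilon$ gives $1-(1-1/\kappa)^T\geq 1-\epsilon$ and $T/\kappa\leq\ln(1/\epsilon)+1/\kappa$; absorbing the $1/\kappa$ into $\ln(1/\epsilon)$ (or stating the theorem with $T=\lceil\ln(1/\epsilon)\kappa\rceil$) delivers $g(S_T)-c(S_T)\geq(1-\epsilon)g(X)-\ln(1/\epsilon)c(X)$. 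The cardinality and query bounds are immediate: at most one element is added per iteration, giving $|S|\leq T=\ln(1/\epsilon)\kappa$, and each iteration scans all $n$ candidates once, giving $O(n\kappa\ln(1/\epsilon))$ queries.

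The only place I expect friction is handling the "add only if marginal gain is positive" branch uniformly: one has to verify that the averaging-over-$X$ lower bound on the per-step $\Phi$-increase still goes through when the algorithm declines to add, which works because the step's $\Phi$-increase is then nonnegative and the right-hand side $\frac{1}{\kappa}\sum_{x\in X}[\cdots]$ is handled by splitting on its sign. Everything else is essentially a reindexed repetition of the Harshaw--Feldman--Karbasi analysis, with the new exponent $T-i$ (versus the original $\kappa-i$) chosen so that the final $g$-coefficient is $1-\epsilon$ rather than $1-1/e$, at the cost of a $\ln(1/\epsilon)$ blow-up in both the $c(X)$-coefficient and the cardinality.
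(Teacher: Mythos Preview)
Your proposal is correct and follows essentially the same route as the paper's proof: the paper isolates your per-step inequality as a separate lemma (Lemma~\ref{distlemma}), obtained via exactly the split $\Phi_{i+1}(S_{i+1})-\Phi_i(S_i)=\Delta\Phi_{i+1}(S_i,s_{i+1})+\frac{1}{\kappa}(1-1/\kappa)^{t-i-1}g(S_i)$ together with the greedy-choice/submodularity/monotonicity chain you describe, and then telescopes and sums the geometric series just as you do. Your sign-splitting treatment of the ``decline to add'' branch is equivalent to the paper's device of taking a dummy element $s_{i+1}\in S_i$ (so that $\Delta\Phi_{i+1}(S_i,s_{i+1})=0$) and observing that all marginals are nonpositive in that iteration.
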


Therefore by running \regconv with \distgreed as a subroutine for regularized \smp, we end up with an algorithm for regularized \rscp that is guaranteed to return a set $S$ such that $|S|\leq (1+\alpha)\ln(1/\epsilon)|OPT|$ and $g(S)-(1-\epsilon)c(S)/\ln(1/\epsilon)\geq (1-\epsilon)\tau$ in $O((1+\alpha)n|OPT|\log_{1+\alpha}(n)\log(1/\epsilon))$ queries of $f$. 

\end{subsection}
\end{section}
\section{Experiments}
\label{section:exp}
In this section, we experimentally evaluate the algorithms proposed in Sections \ref{section:monotone_objectives} and \ref{section:nonmono} on real instances of \scp. In particular, the emphasis of Section \ref{sec:experiments_mono} is on evaluation of our algorithm \sgc on instances of data summarization, an application of \mscp. Next, we evaluate \filt on instances of graph cut, an application of \scp that is not monotone, in Section \ref{sec:experiments_nonmono}. Additional details about the applications, setup, and results can be found in the appendix.
\subsection{Monotone submodular objective}
\label{sec:experiments_mono}
We first compare the solutions returned by \sgc ("SG"), \gc ("G"), \tgc ("TG"), and \conv using the \textit{bicriteria} extension of the stochastic greedy algorithm of \citeauthor{mirzasoleiman2015lazier} (see appendix) ("SG2") on instances of data summarization. The data summarization instance featured here in the main paper is the delicious dataset of URLs tagged with topics, and $f$ takes a subset of URLs to the number of distinct topics represented by those URLs ($n=5000$ with $8356$ tags) \citep{soleimani2016semi}. Additional datasets are explored in the appendix. We run the algorithms with input $\epsilon$ in the range $(0,0.15)$ and threshold values between 0 and $f(U)$ ($f(U)$ is the total number of tags). When $\epsilon$ is varied, $\tau$ is fixed at $0.6f(U)$. When $\tau$ is varied, $\epsilon$ is fixed at $0.2$. The parameter $\alpha$ is set to be $0.1$ and the initial guess of $|OPT|$ for \sgc and \conv is set to be $\tau/\max_{s}f(s)$.

The results in terms of the $f$ values and size of the solutions are presented in Figure \ref{fig:cover5000_tau_f} and \ref{fig:cover5000_tau_c}. From the plots, one can see that the $f$ values and size of solutions returned by \sgc, \gc, \tgc are nearly the same, and are smaller than the ones returned by \conv. This is unsurprising, because the theoretical guarantees on $f$ and size are about the same for the different algorithms, but \conv tends to perform closer to its worst case guarantee on size.
The number of queries to $f$ for different $\epsilon$ and $\tau$ are depicted in Figures \ref{fig:cover5000_eps_q} and \ref{fig:cover5000_tau_q}. Recall that the theoretical worst case number of queries to $f$ for \sgc, \gc, \tgc and \conv are $O((\alpha/(1+\alpha))n\ln^2(1/\epsilon)\log_{1+\alpha}(|OPT|))$, $O(n\ln(1/\epsilon)|OPT|)$, $O(n\log(|OPT|/\epsilon)/\epsilon)$, and $O(n\ln^2(1/\epsilon)\log_{1+\alpha}(|OPT|)$ respectively. As expected based on these theoretical guarantees, \gc does the worst and increases rapidly as $\tau$ (and therefore $|OPT|$) increases. \tgc tends to do worse compared to \sgc and \conv as $\epsilon$ gets smaller. \sgc consistently performs the fastest out of all of the algorithms.

\begin{figure*}[t!]
    \centering
    \hspace{-0.5em}
     \subfigure[cover $f$]
{\label{fig:cover5000_tau_f}\includegraphics[width=0.24\textwidth]{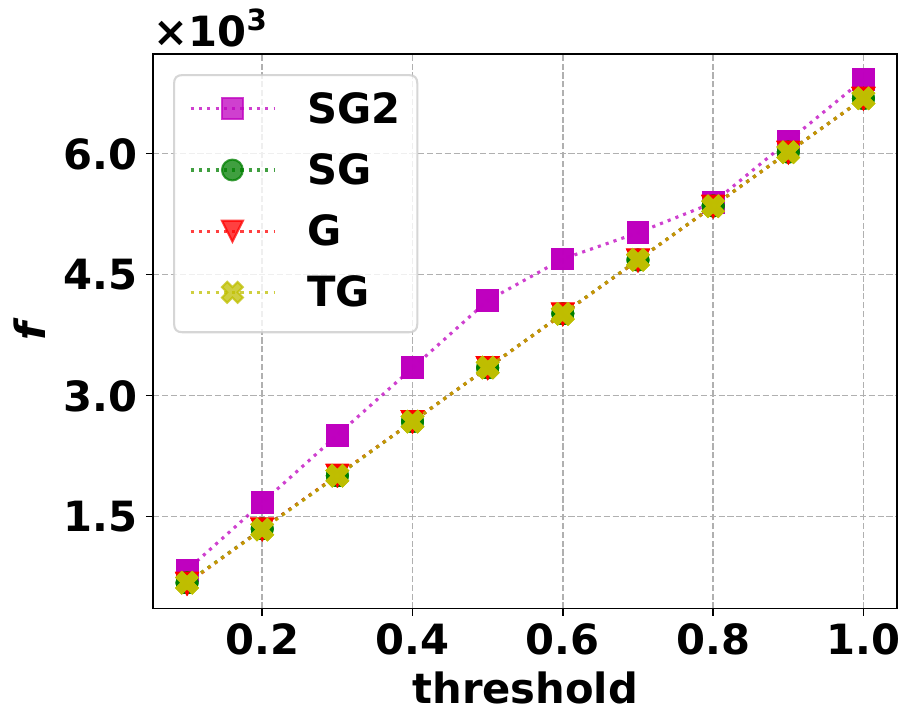}} 
\hspace{-0.5em}
     \subfigure[cover $c$]
{\label{fig:cover5000_tau_c}\includegraphics[width=0.24\textwidth]{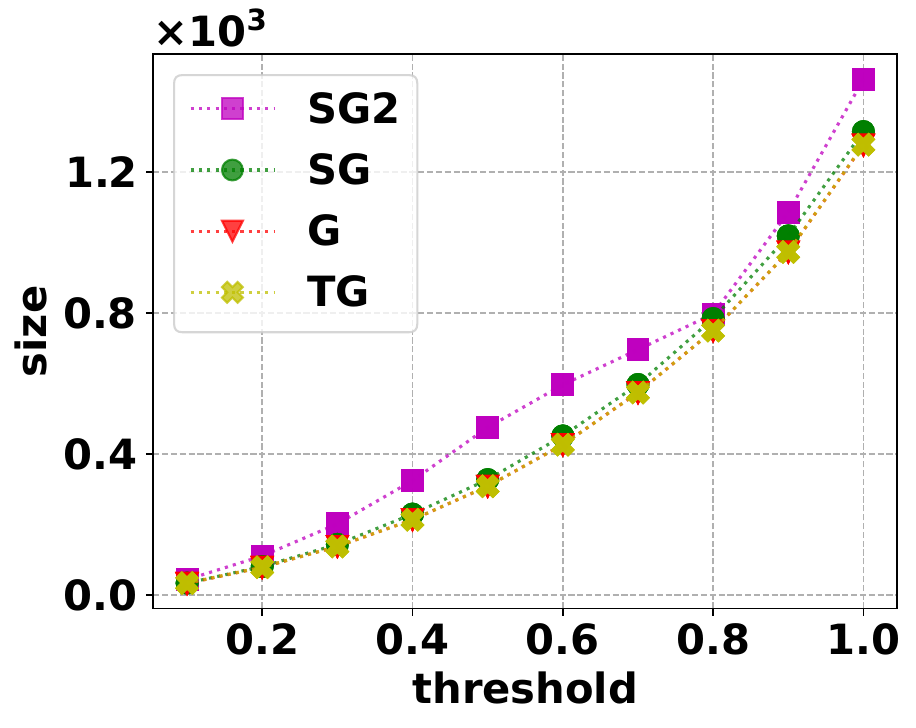}} 
    \hspace{-0.5em}
     \subfigure[cover queries]
{\label{fig:cover5000_tau_q}\includegraphics[width=0.24\textwidth]{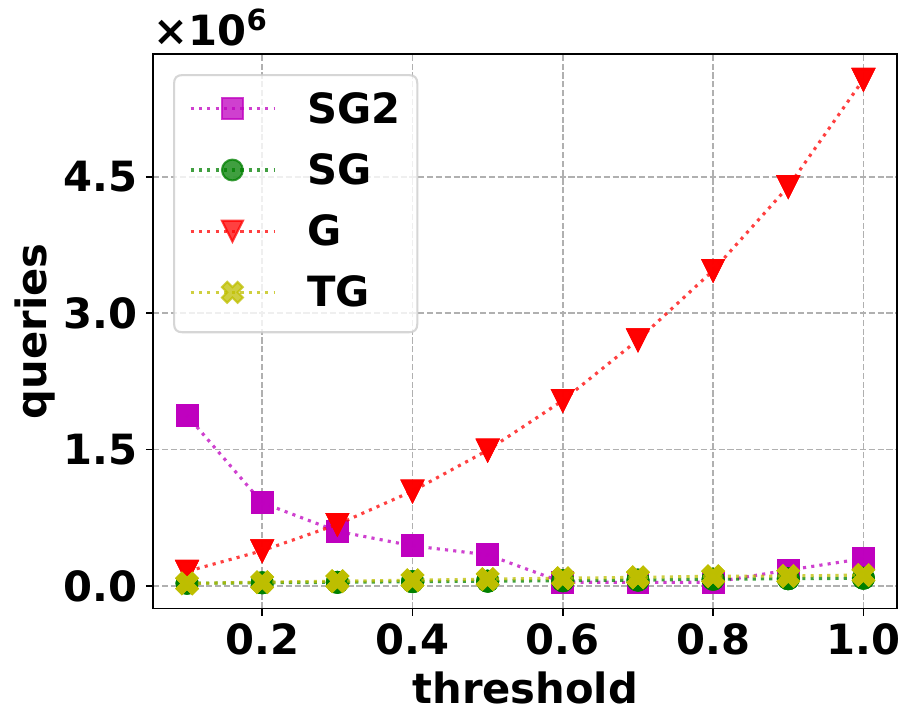}} 
\hspace{-0.5em}
\subfigure[cover queries]
{\label{fig:cover5000_eps_q}\includegraphics[width=0.24\textwidth]{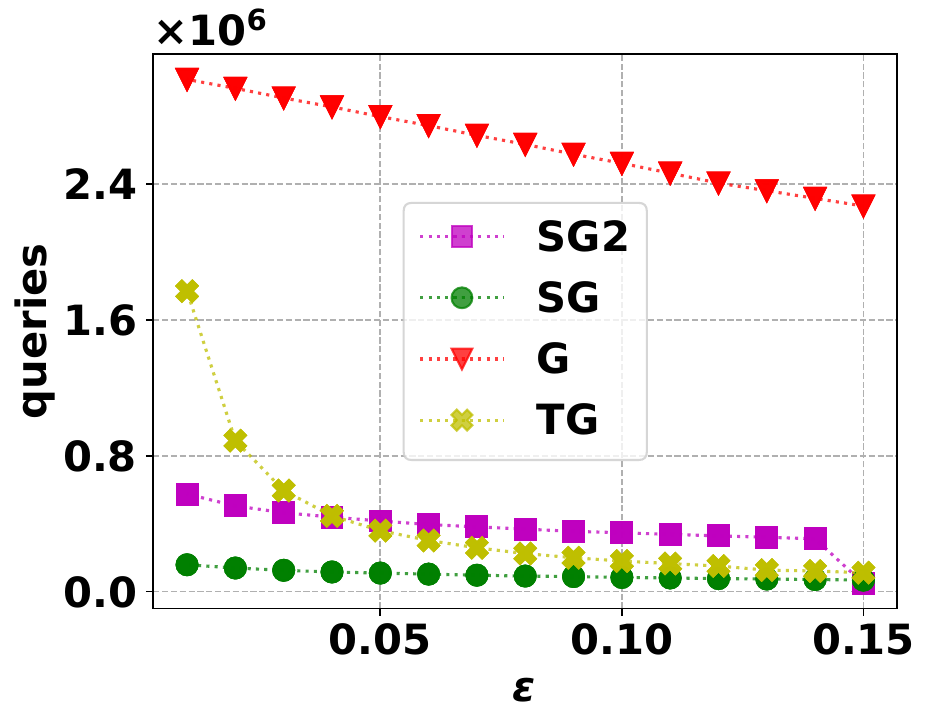}}
\hspace{-0.5em}
\subfigure[euall $f$]
{\label{fig:euall_tau_f}\includegraphics[width=0.24\textwidth]{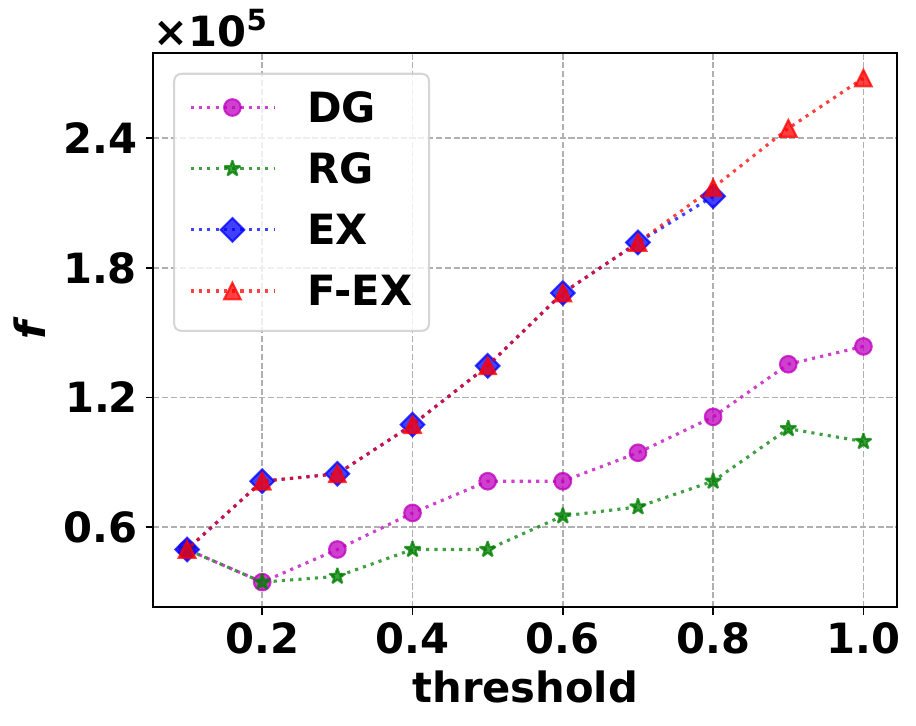}}
\hspace{-0.5em}
\subfigure[euall $c$]
{\label{fig:euall_tau_c}\includegraphics[width=0.24\textwidth]{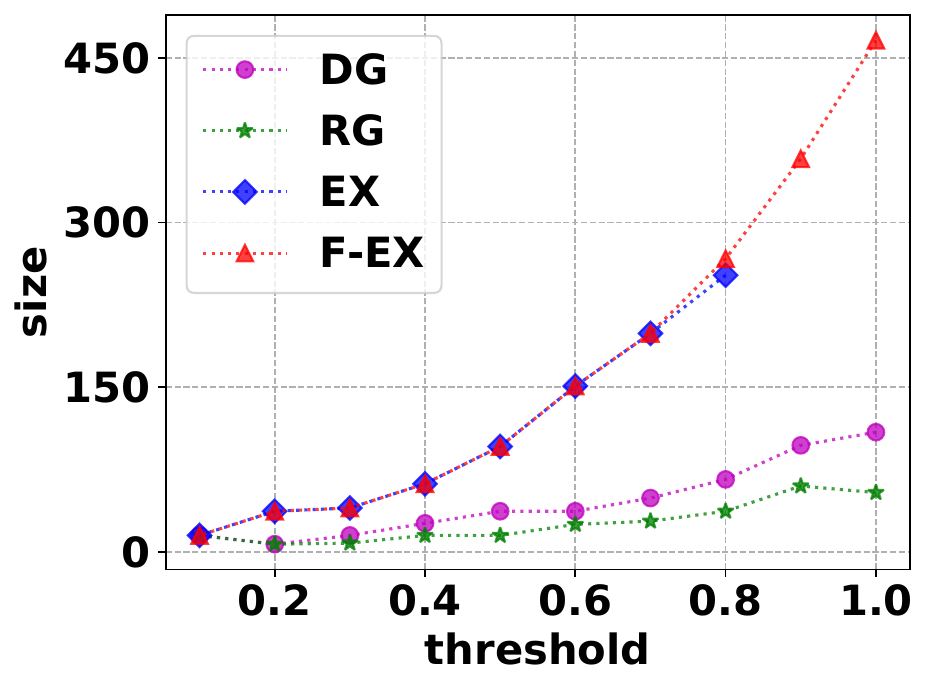}}
\hspace{-0.5em}
\subfigure[euall queries]
{\label{fig:euall_tau_q}\includegraphics[width=0.24\textwidth]{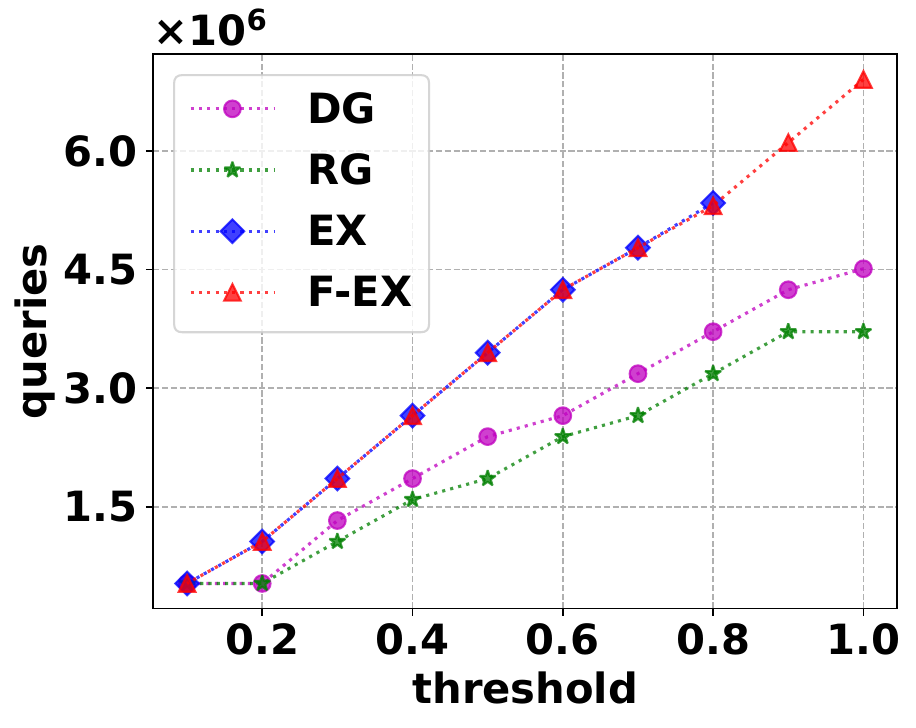}}
\hspace{-0.5em}
\subfigure[euall queries]
{\label{fig:euall_eps_q}\includegraphics[width=0.24\textwidth]{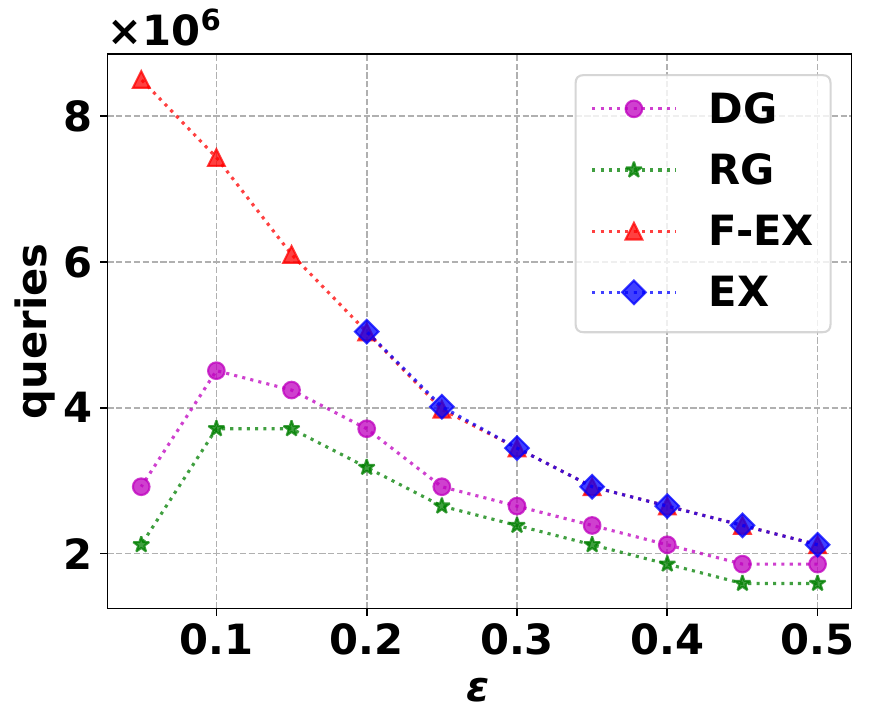}}
\caption{The experimental results of running the monotone algorithms on instances of data summarization on the delicious URL dataset ("cover") and running \filt on the instances of graph cut on the email-EuAll graph ("euall").}
        \label{fig:}
\end{figure*}

\subsection{Non-Monotone Submodular Objective}
\label{sec:experiments_nonmono}
We now analyze the performance of \filt on several instances of graph cut over real social network data. The universe $U$ is all nodes in the network, and $f$ is the number of edges between a set and its complement. The network featured in the main paper is the email-EuAll dataset ($n=265214$, 420045 edges) from the SNAP large network collection \citep{leskovec2016snap} and additional datasets can be found in the appendix.  We run \filt with input $\epsilon$ in the range $(0,0.5)$ and threshold values between 0 and $f(X)$ where $X$ is a solution returned by the unconstrained submodular maximization algorithm of \cite{buchbinder2015tight} on the instance. When $\epsilon$ is varied, $\tau$ is fixed at $0.9f(X)$. When $\tau$ is varied, $\epsilon$ is fixed at $0.15$.

We compare the performance of \filt using several possible algorithms for the subroutine of \smp over $\cup S_i$ (see line \ref{alg:exact_step} in Algorithm \ref{alg:filter}), including a polynomial time approximation algorithm and an unconstrained submodular maximization algorithm. In particular, we use the random greedy approximation algorithm for \smp that is proposed in \cite{buchbinder2014submodular} ("RG"), and the double greedy approximation algorithm for unconstrained submodular maximization proposed in \cite{buchbinder2015tight} ("DG"). Random greedy and double greedy are both approximation algorithms ($1/e$ in expectation and $1/2$ in expectation respectively), and therefore the stopping conditions are set to be $\frac{(1-\epsilon)\tau}{e}$ and $\frac{(1-\epsilon)\tau}{2}$ respectively. We also consider an exact algorithm ("EX"), which essentially is a greedy heuristic followed by an exact search of all (exponentially many) possible solutions if the greedy fails. On instances where the exact algorithm was unable to complete in a time period of $5$ minutes, we did not include a data point. We further discuss the use of these algorithms in the appendix.

Before introducing the fourth subroutine, we discuss an interesting pattern that we saw in our instances of graph cut. We noticed that it was often the case that: (i) $\cup S_i$ tended to be small compared to its upper bound and in fact typically $|\cup S_i|$ was smaller than the \smp constraint, making the subroutine an instance of unconstrained submodular maximization; (ii) The majority of elements (if not all) were "monotone" in the sense that for many $x\in\cup S_i$, $\Delta f(\cup S_i/x,x)\geq 0$. Let $M\subseteq\cup S_i$ be the set of monotone elements. Then if (i) holds the instance of submodular maximization is equivalent to $\arg\max_{X\in \cup S_i/M}f(X\cup M)$. If $M$ is large in $\cup S_i$, this new problem instance is relatively easy to solve exactly. This motivates our fourth algorithm, fast-exact ("F-EX"), used on instances where (i) holds, and is to separate $\cup S_i$ into monotone and non-monotone and search for the best subset amongst the non-monotone elements in a similar manner as the plain exact algorithm. We explore to what extent properties (i) and (ii) hold on different instances, as well as give additional details about the fast exact algorithm, in the appendix.

The results in terms of the $f$ values and size of the output solutions returned by the four algorithms are plotted in Figure \ref{fig:euall_tau_f} and Figure \ref{fig:euall_tau_c}. From the plots, one can see that the $f$ values satisfy that $f(S_{\text{exact}})\approx f(S_{\text{f-exact}})>f(S_{\text{DG}})>f(S_{\text{RG}})$. This is due to the stopping conditions for each algorithm, which follow from each algorithms approximation guarantee on $f$ of $1-\epsilon$, $1-\epsilon$, 1/2, and $1/e$ respectively. On the other hand, the size mirrors the $f$ value, since it tends to be the case that reaching a higher $f$ value requires more elements from $U$. The number of queries made by the algorithms can be seen in Figure \ref{fig:euall_eps_q} and \ref{fig:euall_tau_q}. As expected, the exact algorithms make more queries compared to the approximation algorithms, and in some cases "EX" doesn't even finish. However, by taking advantage of the properties (i) and (ii) discussed above, "F-EX" is able to run even for smaller $\epsilon$. Therefore, depending on the application, an exact algorithm on the relatively small set $\cup S_i$ may be a practical choice in order to achieve a solution that is very close to feasible.
\newpage
\bibliography{paper,icml2019}
\newpage
\section{Appendix for Section \ref{section:theoretical}}
\label{appdx:theo_sec}
In this portion of the appendix, we present missing details and proofs from Section \ref{section:theoretical} in the main paper. We first present missing content from Section \ref{section:monotone_objectives} in Section \ref{appdx:mono_sec}, followed by missing content from Section \ref{section:nonmono} in Section \ref{appendix:nonmono}, and finally missing content from Section \ref{section:reg} is presented in Section \ref{appendix:reg}.
In this section, we include a more detailed comparison of the algorithms presented in this paper.
In addition, pseudocode for the algorithm \conv of \cite{iyer2013} is presented in Algorithm \ref{alg:conver}.

\begin{table}[t!]
\centering
\caption{Theoretical guarantees of a subset of algorithms in this paper}
\label{tab:theo_results}
\begin{tabular}{|c|c|c|c|}
\hline
Alg name & $f$ value & soln size & number of queries \\
\hline
\gc & $(1-\epsilon)\tau$ & $\ln(1/\epsilon)$ & $O(n\ln(1/\epsilon)|OPT|)$\\
\hline
\sgc & $(1-\epsilon)\tau$ & $(1+\alpha)\ln(3/\epsilon)$ & $O(\frac{\alpha}{1+\alpha}n\ln(1/\delta)\ln^2(1/\epsilon)\ln(|OPT|))$\\
\hline
\tgc & $(1-\epsilon)\tau$ & $\ln(2/\epsilon)+1$ & $O(\frac{n}{\epsilon}\ln(\frac{|OPT|}{\epsilon}))$\\
\hline
\filt & $(1-\epsilon)\tau$ & $(1+\alpha)(2/\epsilon+1)|OPT|$ & $O(\log(|OPT|)(\frac{n}{\epsilon} + \mathcal{T}((1+\alpha)|OPT|/\epsilon^2))$\\

\hline

\end{tabular}
\end{table}

\begin{algorithm}[t!]
\caption{\conv}\label{alg:conver}
\textbf{Input}: An SC instance with threshold $\tau$, a $(\gamma,\beta)$-bicriteria approximation algorithm for \smp, $\alpha>0$\\
\textbf{Output}: $S\subseteq U$
\begin{algorithmic}[1]
\State $\kappa\gets(1+\alpha)$, $S\gets\emptyset$.
\While{$f(S)<\gamma\tau$}\label{line:convtest}
\State $S\gets(\gamma,\beta)$-bicriteria approximation for \smp with budget $\kappa$\label{line:convmax}
\State $\kappa\gets(1+\alpha)\kappa$
\EndWhile
\State \textbf{return} $S$
\end{algorithmic}
\end{algorithm}

\subsection{Additional content to Section \ref{section:monotone_objectives}}
\label{appdx:mono_sec}
In this section, we present the detailed statement and proofs of the randomized converting theorem \convr, and the proofs for the theoretical results of the \tgc and \sgc algorithms. The pseudocode for \tgc is in Algorithm \ref{alg:tg}. We now provide a proof of Theorem \ref{thm:threshold}.

\textbf{Theorem \ref{thm:threshold}. }\textit{
\tgc produces a solution with $(\ln(2/\epsilon)+1,1-\epsilon)$-bicriteria approximation guarantee to \mscp, in $O(\frac{n}{\epsilon}\log(\frac{n}{\epsilon}))$ number of queries of $f$.
}

\begin{proof}
   Let $r_0$ be defined as $r_0=\ln(\frac{2}{\epsilon})|OPT|$. By Claim 3.1 in \cite{badanidiyuru2014fast}, we have that any $s\in U$ that is added to $S$ on Line \ref{line:threshadd} of Algorithm \ref{alg:tg} would satisfy
   \begin{align*}
       \Delta f(S,s)\geq\frac{1-\epsilon/2}{|OPT|} \sum_{u\in OPT/S}\Delta f(S,u)
   \end{align*}
   when it was added.
   It follows by submodularity that $$\Delta f(S,s)\geq\frac{1-\epsilon/2}{|OPT|} \sum_{u\in OPT/S}\Delta f(S,u)\geq\frac{1-\epsilon/2}{|OPT|}(f(OPT)-f(S)).$$
   And by re-arranging the above equation, and using induction over the elements added to $S$, we have that
   $$f(S)\geq \left(1-\left(1-\frac{1-\epsilon/2}{|OPT|}\right)^i\right)f(OPT).$$
   Then at the $r_0$-th time of adding new element, it is the case that
   \begin{align*}
       f(S)\geq \left(1-\left(1-\frac{1-\epsilon/2}{|OPT|}\right)^{r_0}\right)f(OPT)\geq (1-\epsilon)f(OPT)\geq (1-\epsilon)\tau.
   \end{align*}
Thus the condition on Line \ref{line:threshstop} is satisfied by this point and the algorithm stops after adding the $r_0$-th element. The output solution set satisfies that $|S|\leq |r_0|$.

We now analyze the number of queries made to $f$. We claim that the algorithm stops before $w=\epsilon\max_{u\in U}f(u)/|OPT|$. This is because at the end of the iteration of the \textbf{while} loop in Algorithm \ref{alg:tg} corresponding to $w=\epsilon\max_{u\in U}f(u)/|OPT|$, we have that $\Delta f(S,s)< w$ for any $s\in U\setminus S$. Therefore
\begin{align*}
    \sum_{s\in OPT/S}\Delta f(S,s)< |OPT/S|w\leq \epsilon\max_{u\in U}f(u)\leq \epsilon f(OPT).
\end{align*}
It follows that at the end of this iteration of the \textbf{while} loop that $f(S)> (1-\epsilon)f(OPT)$, which means the condition on Line \ref{line:threshstop} is not satisfied and the algorithm terminates. The number of iterations is thus $O\left(\frac{\ln(|OPT|/\epsilon)}{\epsilon}\right)$ and the query complexity is $O\left(\frac{n}{\epsilon}\ln\left(\frac{|OPT|}{\epsilon}\right)\right)$.
\end{proof}

\begin{algorithm}[t]
\caption{\tgc}\label{alg:tg}
\textbf{Input}: $\epsilon$\\
\textbf{Output}: $S\subseteq U$
\begin{algorithmic}[1]
  \State $S\gets\emptyset$
  \State $w=\max_{u\in U}f(u)$
  \While{$f(S)<(1-\epsilon)\tau$}\label{line:threshstop}
      \For { each $u$ in $U$}
      \If {$\Delta f(S,u)\geq w $}
        \State $S\gets S\cup \{u\}$\label{line:threshadd}
      \EndIf
      \EndFor
  \State $w=w(1-\frac{\epsilon}{2})$
  \EndWhile
  \State 
 \textbf{return} $S$
\end{algorithmic}
\end{algorithm}

Next, we consider the algorithm \convr that converts algorithms for monotone \smp to ones for \mscp. Pseudocode for \convr is provided in Algorithm \ref{alg:conv_rand}. We now present the proof of Theorem \ref{thm:convert}.

\begin{algorithm}[t]
\caption{\convr}\label{alg:conv_rand}
\textbf{Input}: A \mscp instance with threshold $\tau$, a $(1-\frac{\epsilon}{2},\beta)$-bicriteria approximation algorithm for monotone \smp where $\epsilon$ is in expectation, $\alpha>0$\\
\textbf{Output}: $S\subseteq U$
\begin{algorithmic}[1]
\State $S_i\gets\emptyset$, $\forall i\in\{1,..., \ln(1/\delta)/\ln(2)\}$
\State $g\gets(1+\alpha)$
\While{$f(S_i)<(1-\epsilon))\tau$ $\forall i$}\label{line:convrtest}
    \For{$i\in\{1,..., \ln(1/\delta)/\ln(2)\}$}
        \State $S_i\gets(1-\epsilon/2,\beta)$-bicriteria approximation for monotone \smp with objective function $f_\tau$ and budget $g$\label{line:convrmax}
    \EndFor
\State $g\gets(1+\alpha)g$
\EndWhile
\State \textbf{return} $S$
\end{algorithmic}
\end{algorithm}

\textbf{Theorem \ref{thm:convert}. }\textit{
   Any randomized $(1-\epsilon/2,\beta)$-bicriteria approximation algorithm for monotone \smp that runs in time $\mathcal{T}(n)$ where $\epsilon$ holds only in expectation can be converted into a $((1+\alpha)\beta,1-\epsilon)$-bicriteria approximation algorithm for \mscp that runs in time $O(\log_{1+\alpha}(|OPT|)\ln(1/\delta)\mathcal{T}(n))$ where $\epsilon$ holds with probability at least $1-\delta$.}

\begin{proof}
Consider the run of the algorithm for monotone \smp on Line \ref{line:convmax} of Algorithm \ref{alg:conv_rand} when the parameter $g$ falls into the region $|OPT|\leq g\leq(1+\alpha)|OPT|$. Notice that it is the case that $f_\tau$ is also monotone and submodular. By the theoretical guarantees of the algorithm for monotone \smp, we have that for all $i\in\{1,..., \ln(1/\delta)/\ln(2)\}$
\begin{align*}
    \mE f_\tau(S_i)&\geq(1-\epsilon/2)\max_{|X|\leq g}\{f_\tau(X)\}\geq(1-\epsilon/2)\max_{|X|\leq |OPT|}\{f_\tau(X)\}\geq(1-\epsilon/2)\tau.
\end{align*}

From Markov's inequality, we have that for each $i\in\{1,..., \ln(1/\delta)/\ln(2)\}$
\begin{align*}
    P(f_\tau(S_i)\leq(1-\epsilon)\tau)\leq P(\tau-f_{\tau}(S_i)>2(\tau-\mathbb{E}f_\tau(S_i)))
    \leq1/2.
\end{align*}
Then the probability that none of the subsets $S_i$ can reach the stopping condition can be bounded by 
\begin{align*}
    P(f(S_i)\leq(1-\epsilon)\tau, \forall i)&=P(f_\tau(S_i)\leq(1-\epsilon)\tau, \forall i)\notag\\
&=\prod_{i=1}^{\ln(1/\delta)/\ln(2)}P(f_\tau(S_i)\leq(1-\epsilon)\tau)\\
&\leq(1/2)^{\ln(1/\delta)/\ln(2)}=\delta.
\end{align*}
This means with probability at least $1-\delta$, \convr stops when $g$ reaches the region where $|OPT|\leq g\leq(1+\alpha)|OPT|$ since the condition of the \textbf{while} loop is not satisfied. 

\end{proof}

Before we present the proof of the theoretical guarantees of the algorithm \sgc, we first introduce an algorithm for monotone \smp with bicriteria approximation guarantee called \sg. \sg is an extension of the stochastic greedy algorithm of \cite{mirzasoleiman2015lazier}, which produces an infeasible solution to the instance of monotone \smp that has $f$ value arbitrarily close to that of the optimal solution. Pseudocode for \sg is provided in Algorithm \ref{alg:sg}. Notice that if we use the \sg as a subroutine for \convr, we can obtain an algorithm for \mscp that runs in $O(n\ln^2(3/\epsilon)\ln(1/\delta)\log_{1+\alpha}|OPT|)$, which is less queries than the standard greedy algorithm, but worse than our algorithm \sgc by a factor of $(1+\alpha)/\alpha$.

\begin{algorithm}[t]
\caption{\sg}
\textbf{Input: }$\epsilon > 0$\\
\textbf{Output: }$S\subseteq U$
\label{alg:sg}
\begin{algorithmic}[1]

  \State $S\gets\emptyset$
  \While{$|S|< \ln(\frac{3}{2\epsilon})\kappa$}
    \State $R\gets$ sample $\frac{n}{\kappa}\ln(\frac{3}{2\epsilon})$ elements from $U$
    \State $u\gets\argmax_{x\in R}\Delta f(S,x)$
  \EndWhile
  \State \textbf{return} $S$
\end{algorithmic}
\end{algorithm}

\noindent
\begin{theorem}
    \label{thm:stochbi}
    \sg produces a solution with $(1-\epsilon,\ln(\frac{3}{2\epsilon}))$-bicriteria approximation guarantee to \msmp, where the $1-\epsilon$ holds in expectation, in $O(n\ln^2(\frac{3}{2\epsilon}))$ queries of $f$. 
\end{theorem}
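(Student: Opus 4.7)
The plan is to adapt the standard analysis of the stochastic greedy algorithm of \cite{mirzasoleiman2015lazier} by letting the inner loop run for $T = \lceil \kappa \ln(3/(2\epsilon)) \rceil$ iterations instead of $\kappa$. The extra iterations drive the expected residual $f(OPT_{SM}) - f(S)$ below $\epsilon f(OPT_{SM})$ at the cost of an $O(\ln(1/\epsilon))$ multiplicative blow-up of the cardinality; this yields a bicriteria guarantee of exactly the stated form.

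The heart of the argument is a per-iteration expected-gain lemma. Fix the current solution $S$, let $k = |OPT_{SM} \setminus S|$, and let $R$ be the random subset of size $s = (n/\kappa)\ln(3/(2\epsilon))$ drawn inside the loop. A union bound gives $\Pr[R \cap (OPT_{SM} \setminus S) \neq \emptyset] \geq 1 - (1 - k/n)^s \geq 1 - e^{-sk/n}$. Conditioned on this event, by symmetry over $OPT_{SM} \setminus S$ a uniformly random element of $R \cap (OPT_{SM} \setminus S)$ is uniformly distributed on $OPT_{SM} \setminus S$, so its conditional expected marginal gain is $(1/k)\sum_{x \in OPT_{SM} \setminus S}\Delta f(S, x) \geq (f(OPT_{SM}) - f(S))/k$ by submodularity and monotonicity. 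Since $u$ is the maximizer in $R$, combining these gives $\mathbb{E}[\Delta f(S, u) \mid S] \geq (1 - e^{-sk/n})(f(OPT_{SM}) - f(S))/k$. A short calculus check (showing $(y+1)e^{-y} \leq 1$ for $y \geq 0$) implies $k \mapsto (1 - e^{-sk/n})/k$ is nonincreasing, so substituting $k \leq \kappa$ yields the uniform bound
$$\mathbb{E}[\Delta f(S, u) \mid S] \geq \frac{1 - 2\epsilon/3}{\kappa}\bigl(f(OPT_{SM}) - f(S)\bigr).$$

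Unrolling this across the $T$ rounds via the tower property gives $\mathbb{E}[f(OPT_{SM}) - f(S_T)] \leq e^{-T(1 - 2\epsilon/3)/\kappa} f(OPT_{SM})$. With $T = \lceil \kappa \ln(3/(2\epsilon)) \rceil$, the right-hand side is at most $(2\epsilon/3)^{1 - 2\epsilon/3}f(OPT_{SM})$, and an elementary check (reducing to $(2\epsilon/3)\ln(2\epsilon/3) \geq \ln(2/3)$, which follows from $\min_{y > 0} y\ln y = -1/e > \ln(2/3)$) shows this is at most $\epsilon f(OPT_{SM})$. Hence $\mathbb{E}[f(S_T)] \geq (1 - \epsilon)f(OPT_{SM})$, the size bound $|S| \leq \ln(3/(2\epsilon))\kappa$ is immediate from the loop guard, and the query count is $T \cdot |R| = O(n\ln^2(3/(2\epsilon)))$. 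The main obstacle is the uniformity of the per-iteration lemma: the sample size $s$ is calibrated tightly at $k = \kappa$, and extending the bound to smaller $k$ requires the monotonicity of $(1 - e^{-sk/n})/k$; the remainder is routine exponential unrolling plus the numerical check to absorb the slack factor $1 - 2\epsilon/3$.
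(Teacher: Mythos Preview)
Your proof is correct and follows essentially the same approach as the paper: both establish the per-iteration expected-gain bound $\mathbb{E}[\Delta f(S,u)\mid S]\geq \frac{1-2\epsilon/3}{\kappa}(f(OPT_{SM})-f(S))$ and then unroll it over $\ln(3/(2\epsilon))\kappa$ rounds, finishing with the same numerical check $(2\epsilon/3)^{1-2\epsilon/3}\leq\epsilon$. The only minor difference is that you work with $OPT_{SM}\setminus S$ (hence need the monotonicity of $k\mapsto(1-e^{-sk/n})/k$), whereas the paper works directly with $OPT_{SM}$ and $|OPT_{SM}|=\kappa$, sidestepping that step; this is a cosmetic variation, not a different route.
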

\begin{proof}
Our argument to prove Theorem \ref{thm:stochbi} follows a similar approach as \cite{mirzasoleiman2015lazier}.
Let the partial solution $S$ before $i$-th iteration of the \textbf{while} loop of Algorithm \ref{alg:sg} be $S_i$, the item that is added to set $S$ during iteration $i$ be $u_i$, and the sampled set $R$ during iteration $i$ be $R_i$. Consider the beginning of any iteration $i$ of the \textbf{while} loop.
Then because of the greedy choice for $u_i$ it is the case that if $R_i\cap OPT\neq\emptyset$,
\begin{align*}
    \Delta f(S_i,u_i)\geq \Delta f(S_i,x)
\end{align*}
for all $x\in R_i\cap OPT$. Therefore we have that in expectation over the randomly chosen set $R_i$
\begin{align}
\label{eqn:stoc_greedy}
    \mE[\Delta f(S_i,u_i)|S_i] 
    &\geq\textbf{Pr}(R_i\cap OPT\neq \emptyset)\mE[\Delta f(S_i,u_i)|S_i, R_i\cap OPT\neq \emptyset]\notag \\
    &\geq\textbf{Pr}(R_i\cap OPT\neq \emptyset)\frac{1}{|R_i\cap OPT|}\sum_{x\in R_i\cap OPT} \Delta f(S_i,x) \notag \\
    &= \textbf{Pr}(R_i\cap OPT\neq \emptyset)\frac{1}{\kappa}\sum_{x\in OPT} \Delta f(S_i,x)
\end{align}
where the last inequality follows since the elements of $R_i$ are chosen uniformly randomly, implying that all elements of $OPT$ are equally likely to be chosen.

Let the number of samples $s=\frac{n}{k}\ln(\frac{3}{2\epsilon})$.
The probability $\textbf{Pr}(R_i\cap OPT\neq \emptyset)$ can be lower bounded as
\begin{align}
    \label{eqn:prob_of_intersection}
    \textbf{Pr}(R_i\cap OPT\neq \emptyset)&=1-\textbf{Pr}(R_i\cap OPT= \emptyset)\notag\\
    &=1-\frac{\binom{n-\kappa}{s }}{\binom{n}{s}}\notag\\
    &\geq1-(\frac{n-\kappa}{n})^{s }\notag\\
    &\geq1-\frac{2\epsilon}{3} ,
\end{align}
where the last inequality comes from the fact that $(1-\frac{1}{x})^x\leq e^{-1}$. 
By submodularity, we have 
\begin{align}
\label{eqn:submodularity_CSM}
   \sum_{u\in OPT} \Delta f(S_i,u)\geq f(OPT)-f(S_i).
\end{align}
Plug Equations (\ref{eqn:prob_of_intersection}) and (\ref{eqn:submodularity_CSM}) into (\ref{eqn:stoc_greedy}) yields that
\begin{align}
    \mE[\Delta f(S_i,u_i)|S_i]\geq \frac{1-2\epsilon/3 }{\kappa }(f(OPT)-f(S_i)).
\end{align}
Therefore,
\begin{align*}
    \mE[ f(S_{i+1})|S_i]\geq \frac{1-2\epsilon/3 }{\kappa }f(OPT)+(1-\frac{1-2\epsilon/3 }{\kappa })f(S_i)).
\end{align*}
By induction and by taking expectation over $S_i$, we can get that at the last iteration
\begin{align*}
    \mE[ f(S)]&\geq \frac{1-2\epsilon/3 }{\kappa }\sum_{i=0}^{|S|-1}(1-\frac{1-2\epsilon/3 }{\kappa })^if(OPT)\\
    &\geq 
    \{1-(1-\frac{1-2\epsilon/3 }{\kappa })^{|S|}\}f(OPT)\\
    &\geq 
    (1-(2\epsilon/3)^{1-2\epsilon/3})f(OPT)\geq  
    (1-\epsilon)f(OPT),
\end{align*}
where in the last inequality, we use the fact that $|S|=\ln(3/2\epsilon)\kappa$.
\end{proof}

We now present the omitted proofs for Lemmas \ref{lemma:marggain}, \ref{lem:sgc_expection}, and \ref{lem:sgc_high_prob} about our algorithm \sgc. Once these lemmas are proven, we next prove Theorem \ref{theorem:stochastic} using the lemmas.

\textbf{Lemma \ref{lemma:marggain}. }\textit{
Define $\ell$ to be the integer such that $(1+\alpha)^{\ell-1}\leq |OPT| \leq (1+\alpha)^{\ell}$.
Consider any of the sets $S_i$ at the beginning of an iteration on Line \ref{sgc:iter_start} where $g\leq (1+\alpha)^{\ell}$.
    Then if $u_i$ is the random element that will be added on Line \ref{sgc:add_new}, we have that  $$\mathbb{E}[\Delta f_{\tau}(S_i, u_i)|S_i]\geq \frac{1-\epsilon/3}{(1+\alpha)^{\ell}}(\tau - f_{\tau}(S_i))  \geq \frac{1-\epsilon/3}{(1+\alpha)|OPT|}(\tau - f_{\tau}(S_i)).$$}
\begin{proof}
    Define $OPT^*=\arg\max_{|X|\leq(1+\alpha)^{\ell}}f(X).$ Then we have $f(OPT^*)\geq f(OPT)\geq\tau$. Therefore $f_{\tau}(OPT^*)\geq\tau$.
    Consider an iteration of the \textbf{while} loop on Line \ref{sgc:iter_start} of \sgc where $g\leq (1+\alpha)^{\ell}$, and some iteration of inside \textbf{for} loop corresponding to set $S_i$. 
    From Line $\ref{line:sample_R}$ of \sgc, we have that the size of the randomly sampled subset $R_i$ for $S_i$ is $\min\{n, n\ln(3/\epsilon)/g\}$. Then if $n\ln(3/\epsilon)/g<n$, we have that 
    \begin{align*}
        |R_i|= \frac{n}{g}\ln(3/\epsilon)\geq\frac{n}{(1+\alpha)^{\ell}}\ln(3/\epsilon).
    \end{align*}
    This implies that if $n\ln(3/\epsilon)/g<n$, when we randomly sample $R_i$
    \begin{align*}
        P(R_i\cap OPT^* = \emptyset) = (1-|OPT^*|/n)^{|R_i|}\leq (1-(1+\alpha)^{\ell}/n)^{\frac{n\ln(3/\epsilon)}{(1+\alpha)^{\ell}}}
        \leq \epsilon/3.
    \end{align*}
    On the other hand, if $n\ln(3/\epsilon)/g \geq n$, then $P(R_i\cap OPT^* = \emptyset)=0$ and the above inequality also holds.

    Notice that $f_{\tau}$ is monotone and submodular.
    Let $u_i$ be the elements of $R_i$ which will be added to $S_i$. Then
    \begin{align*}
        \mathbb{E}[\Delta f_\tau(S_i, u_i)| R_i\cap OPT^* \neq\emptyset]&\geq \frac{1}{|OPT^*|}\sum_{o\in OPT^*}\Delta f_\tau(S_i,o)\\
         &\explaineq{(i)}{\geq}\frac{1}{|OPT^*|}(f_\tau(OPT^*) - f_\tau(S_i))\\
        &\geq \frac{1}{(1+\alpha)^{\ell}}(\tau - f_\tau(S_i)).
    \end{align*}
    where (i) is implied by the fact that $f_{\tau}$ is monotone and submodular.
    Altogether we have that 
    \begin{align*}
        \mathbb{E}[\Delta f_\tau(S_i, u_i)|S_i]&\geq P(R\cap OPT^* \neq \emptyset) \mathbb{E}[\Delta f_\tau(S_i, u_i)| R\cap OPT^* \neq\emptyset]\\
        &\geq\frac{1-\epsilon/3}{(1+\alpha)^{\ell}}(\tau - f_{\tau}(S_i)).
    \end{align*}
\end{proof}

\noindent
\textbf{Lemma \ref{lem:sgc_expection}. }\textit{Once $r\geq (1+\alpha)\ln(3/\epsilon)|OPT|$, we have that $\mathbb{E}[f_{\tau}(S_i)] \geq\left(1-\frac{\epsilon}{2}\right)\tau$ for all $i$.}
\begin{proof}
    Define $\ell$ to be the integer such that $(1+\alpha)^{\ell-1}\leq |OPT| \leq (1+\alpha)^{\ell}$. Then notice that throughout \sgc until $r$ is incremented to be $\ln(3/\epsilon)(1+\alpha)^{\ell}$, it is the case that $g\leq (1+\alpha)^{\ell}$. Therefore Lemma \ref{lemma:marggain} applies for all marginal gains of adding elements to the sets until the end of the $\ln(3/\epsilon)(1+\alpha)^{\ell}$ iteration of the \textbf{while} loop.
    Consider any of the sets $S_i$.
    By applying recursion to Lemma \ref{lemma:marggain}, we have that at the end of any iteration $r\leq \ln(3/\epsilon)(1+\alpha)^{\ell}$ of the \textbf{while} loop
    \begin{align*}
        \mathbb{E}[ f_\tau(S_i)]\geq\left(1-\left(\frac{1-\epsilon/3}{(1+\alpha)^{\ell}}\right)^r\right)\tau.
    \end{align*}
    Therefore, Once $r$ reaches $\ln(3/\epsilon)(1+\alpha)^{\ell}$, the expectation of $S_i$ is
    \begin{align*}
         \mathbb{E}[ f_\tau(S_i)]\geq\left(1-\left(\frac{1-\epsilon/3}{(1+\alpha)^{\ell}}\right)^{(1+\alpha)^{\ell}\ln(3/\epsilon)}\right)\tau\geq\left(1-\frac{\epsilon}{2}\right)\tau.
    \end{align*}
    Because $\ln(3/\epsilon)(1+\alpha)^{\ell}\leq (1+\alpha)\ln(3/\epsilon)|OPT|$ and $f$ is monotonic, Lemma \ref{lem:sgc_expection} holds.
\end{proof}

\noindent
\textbf{Lemma \ref{lem:sgc_high_prob}. }\textit{With probability at least $1-\delta$, once $r\geq (1+\alpha)\ln(3/\epsilon)|OPT|$, we have that $\max_{i}f(S_i)\geq(1-\epsilon)\tau$.}



\begin{proof}
    Since $f_\tau(S)=\min\{f(S),\tau\}$, we have that $\tau-f_\tau(S)\in[0,\tau]$ for any $S\in  U$. By applying the Markov's inequality, we have that
    \begin{align*}
        P(\tau-f_\tau(S_j)\geq 2\mathbb{E}(\tau-f_\tau(S_j))\leq\frac{1}{2}
    \end{align*}
From Lemma \ref{lem:sgc_expection}, we have $\mathbb{E}(\tau-f_\tau(S_j))\leq\frac{\epsilon\tau}{2}$. Then 
\begin{align*}
    P(f_\tau(S_j)\leq(1-\epsilon)\tau)&=P(\tau-f_\tau(S_j)\geq\epsilon\tau)\\
    &\leq P(\tau-f_\tau(S_j)\geq 2\mathbb{E}(\tau-f_\tau(S_j)))\\
    &\leq\frac{1}{2}.
\end{align*}
Thus we have
\begin{align*}
    P(\max_j f_\tau(S_j)>(1-\epsilon)\tau)&=1-P( f_\tau(S_j)\leq(1-\epsilon)\tau,\forall j)\\
    &=1- P( f_\tau(S_1)\leq(1-\epsilon)\tau)^{\ln(1/\delta)/\ln(2)}\\
    &\geq1-\left(\frac{1}{2}\right)^{\ln(1/\delta)/\ln(2)}\geq1-\delta.
\end{align*}
\end{proof}

\textbf{Theorem \ref{theorem:stochastic}. }\textit{   Suppose that \sgc is run for an instance of \mscp.
   Then with probability at least $1-\delta$, \sgc
   outputs a solution $S$ that satisfies a $((1+\alpha)\ln(3/\epsilon),1-\epsilon)$-bicriteria approximation guarantee in at most $$O\left(\frac{\alpha}{1+\alpha}n\ln(1/\delta)\ln^2(3/\epsilon)\log_{1+\alpha}(|OPT|)\right)$$ queries of $f$.}
    \begin{proof}
        From Lemma \ref{lem:sgc_high_prob}, we have that the algorithm stops by the time $r$ reaches $(1+\alpha)\ln(3/\epsilon)|OPT|$ with probability at least $1-\delta$. Before $r$ reaches this point, consider the number of queries made to $f$ over the duration that $g$ is a certain value $(1+\alpha)^{m}$. Then for each of the $O(\ln(1/\delta))$ sets a total of at most
        \begin{align*}
            \left(\ln\left(\frac{3}{\epsilon}\right)(1+\alpha)^m-\ln\left(\frac{3}{\epsilon}\right)(1+\alpha)^{m-1}\right)\frac{n\ln(\frac{3}{\epsilon})}{(1+\alpha)^m}
            = \frac{\alpha}{1+\alpha}n\ln^2(3/\epsilon)
        \end{align*}
        queries are made to $f$. Because $g$ takes on at most $O(\log_{1+\alpha}(|OPT|))$ values before $r$ reaches $(1+\alpha)\ln(3/\epsilon)|OPT|$, Theorem \ref{theorem:stochastic} follows.
    \end{proof}

\subsection{Additional content for Section \ref{section:nonmono}}
\label{appendix:nonmono}
In this section, we present the proofs of theoretical results from Section \ref{section:nonmono}. First, in Claim \ref{claim:mono} we present a useful claim from \cite{crawford2023scalable} about submodular functions in order to prove Lemma \ref{lemma:filt}. Next, we use Claim \ref{claim:mono} in order to prove Lemma \ref{lemma:filt}. Finally, Lemma \ref{lemma:filt} is used to prove the main result for \filt, Theorem \ref{theorem:filt}.

\begin{claim}[Claim 1 in \cite{crawford2023scalable}]
\label{claim:mono}
Let $A_1,...,A_m\subseteq U$ be disjoint, and $B\subseteq U$.
Then there exists $i\in\{1,...,m\}$ such that
$f(A_i\cup B)\geq (1-1/m)f(B)$.
\end{claim}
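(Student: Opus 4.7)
The plan is to reduce the claim to an averaging argument about the shifted function $g(S) := f(B \cup S)$. Since $f$ is nonnegative and submodular on $2^U$, so is $g$, with $g(\emptyset) = f(B)$ and $g(A_i) = f(A_i \cup B)$. The claim then amounts to showing that for disjoint $A_1,\dots,A_m$ at least one index satisfies $g(A_i) \geq (1 - 1/m)\, g(\emptyset)$. I would prove this by lower bounding the sum $\sum_i g(A_i)$ and then invoking pigeonhole.

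To bound the sum I would telescope over an arbitrary ordering of the $A_i$. Setting $S_0 = \emptyset$ and $S_i = \bigcup_{j \leq i} A_j$, disjointness guarantees $S_i = S_{i-1} \cup A_i$ with $S_{i-1} \cap A_i = \emptyset$, so the diminishing-marginal-returns form of submodularity (applied to the chain $\emptyset \subseteq S_{i-1}$ and the fresh set $A_i$) yields $g(S_i) - g(S_{i-1}) \leq g(A_i) - g(\emptyset)$. Summing this inequality for $i = 1,\dots,m$ and telescoping the left-hand side gives
\[
g(S_m) - g(\emptyset) \;\leq\; \sum_{i=1}^m g(A_i) \;-\; m\, g(\emptyset),
\]
which rearranges to $\sum_i g(A_i) \geq g(S_m) + (m-1)\, g(\emptyset) \geq (m-1)\, f(B)$, where the last step uses nonnegativity of $g$.

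The final step is a direct averaging: since $m$ nonnegative values sum to at least $(m-1)\, f(B)$, at least one index $i$ must satisfy $g(A_i) \geq (1 - 1/m)\, f(B)$, which is exactly the conclusion $f(A_i \cup B) \geq (1 - 1/m)\, f(B)$. I do not anticipate any real obstacle here; the only subtle point is that disjointness of the $A_i$ is essential for the telescoping step, since it ensures each $A_i$ is a genuinely new addition to the growing chain $S_{i-1}$ so that the marginal-returns inequality can be invoked cleanly. Nonnegativity of $f$ is likewise used only at the very end, to discard the $g(S_m)$ term.
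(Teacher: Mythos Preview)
Your proof is correct. The paper itself does not give a proof of this claim; it is quoted verbatim as Claim~1 of \cite{crawford2023scalable} and used as a black box in the proof of Lemma~\ref{lemma:filt}, so there is no in-paper argument to compare against.

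For the record, your route---pass to the contraction $g(S)=f(B\cup S)$, apply the lattice form of submodularity $g(S_{i-1})+g(A_i)\ge g(S_{i-1}\cup A_i)+g(S_{i-1}\cap A_i)$ along the disjoint telescope, sum, drop $g(S_m)\ge 0$, and average---is the standard one and is clean. One cosmetic point: in the final averaging step you write ``$m$ nonnegative values,'' but nonnegativity of the individual $g(A_i)$ is not needed there; the pigeonhole bound on the maximum follows from the sum bound alone.
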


\textbf{Lemma \ref{lemma:filt}. }
\textit{ By the time that $g$ reaches the region $[|OPT|, (1+\alpha)|OPT|]$ and the loop on Line \ref{line:loop} of \filt has completed, there exists a set $X\subseteq \cup S_i$ of size at most $(2/\epsilon+1)g$ such that
    $f(X)\geq (1-\epsilon)\tau$.}
\begin{proof}
    Suppose that \filt has reach the end of the loop on Line \ref{line:loop} when $g\geq |OPT|$. We first consider the case where there exists some set $S_t$ $t\in\{1,...,2/\epsilon\}$ such that $|S_t|=2g/\epsilon$. In this case it follows that $f(S_t)=\sum_{i=1}^{2g/\epsilon}\Delta f(S_t^i,u_i)\geq \tau$, where $S_t^i$ is the set of $S_t$ before adding the $i$-th element $u_i$, and so the Lemma statement is proven.

    Next, we consider the case where $|S_j|<2g/\epsilon$, $\forall j\in[2/\epsilon]$. Let $OPT_1=OPT\cap (\cup_{i=1}^{2/\epsilon} S_i)$ and $OPT_2=OPT/OPT_1$. By Claim \ref{claim:mono}, there exists a set $S_t$ such that
    $$f(S_t\cup OPT)\geq (1-\epsilon/2)f(OPT)\geq(1-\epsilon/2)\tau.$$
    Since $|S_t|<2g/\epsilon$ at the end of the algorithm, we can see each element $o$ in $OPT_2$ is not added into $S_t$ because $\Delta f(S_t,o)\le\epsilon \tau/(2 g)$ at the time $o$ is seen in the loop on Line \ref{line:loop}. By submodularity and the fact that $g\geq |OPT|\geq |OPT_2|$, 
    $$\sum_{o\in OPT_2}\Delta f(S_t\cup OPT_1,o) < \epsilon|OPT_2|\tau/(2g) \leq \epsilon \tau/2.$$
    Therefore
    \begin{align*}
        (1-\epsilon/2)\tau &\leq f(S_t \cup OPT) \\
        &\leq f(S_t\cup OPT_1) + \Delta f(S_t\cup OPT_1, OPT_2)\\
        &\leq f(S_t\cup OPT_1) + \sum_{o\in OPT_2}\Delta f(S_t\cup OPT_1, o) \\
        &\leq f(S_t\cup OPT_1) + \epsilon \tau/2.
    \end{align*}
    Therefore $f(S_t\cup OPT_1)\geq (1-\epsilon)\tau$.
    Because $S_t\cup OPT_1\subseteq \cup S_i$, and it is the case that $|S_t\cup OPT_1|\leq |S_t|+|OPT_1|\leq 2g/\epsilon + g$,
    so the Lemma statement is proven.
\end{proof}

\textbf{Theorem \ref{theorem:filt}. }
   \textit{Suppose that \filt is run for an instance of \scp. Then \filt returns $S$ such that $f(S)\geq (1-\epsilon)\tau$ and $|S|\leq (1+\alpha)(2/\epsilon+1)|OPT|$ in at most
  $$\log_{1+\alpha}(|OPT|)
    \left(\frac{2n}{\epsilon}+
    \mathcal{T}\left((1+\alpha)\left(\frac{4}{\epsilon^2}|OPT|\right)\right)\right)$$
    queries of $f$, where $\mathcal{T}(m)$ is the number of queries to $f$ of the algorithm for \smp used on Line \ref{line:max} of Algorithm \ref{alg:filter} on an input set of size $m$.}
\begin{proof}
    By Lemma \ref{lemma:filt}, once \filt reaches $g\geq |OPT|$ then $f(S)\geq(1-\epsilon)\tau$ will be satisfied and therefore the \textbf{while} loop will exit. This implies that $g\leq(1+\alpha)|OPT|$ once \filt exits, and therefore $|S|\leq (2/\epsilon+1)g\leq (1+\alpha)(2/\epsilon+1)|OPT|$. Therefore the qualities of $S$ stated in Theorem \ref{theorem:filt} are proven. As far as the number of queries of $f$, it takes at most $\log_{1+\alpha}(|OPT|)$ iterations of the loop to reach the point that $g\geq |OPT|$. In addition, each iteration of the loop makes $\frac{2n}{\epsilon}+
    \mathcal{T}\left((1+\alpha)\left(\frac{4}{\epsilon^2}|OPT|\right)\right)$ queries of $f$. Therefore the bound on the number of queries of $f$ in Theorem \ref{theorem:filt} is proven.
\end{proof}

\subsection{Supplementary material to Section \ref{section:reg}}
\label{appendix:reg}
We now present additional theoretical details for Section \ref{section:reg}, where we considered regularized \scp. First, we prove Theorem \ref{thm:rconv} about our algorithm \regconv which converts algorithms for regularized \smp into ones for \rscp.

\textbf{Theorem \ref{thm:rconv}. }\textit{Suppose that we have an algorithm \reg for maximization of a regularized submodular function subject to a cardinality constraint $\kappa$, and that algorithm is guaranteed to return a set $S$ of cardinality at most $\rho\kappa$ such that
    $$g(S)-c(S)\geq \gamma g(X)-\beta c(X)$$ for all $X$ such that $|X|\leq\kappa$ in time $T(n)$.
    Then the algorithm \regconv using \reg as a subroutine returns a set $S$ such that
    $$g(S)-\frac{\gamma}{\beta}c(S)\geq \gamma\tau$$ and
    $|S|\leq (1+\alpha)\rho|OPT|$ in time $O(\log_{1+\alpha}(n)T(n))$.}
\begin{proof}
    Let $OPT$ be the optimal solution to the instance of regularized \scp.
    Consider the iteration of \regconv where $\kappa$ has just increased above $|OPT|$, i..e $|OPT|\leq\kappa\leq (1+\alpha)|OPT|$. Then we run \reg with input objective $g-\frac{\gamma}{\beta}c$ and budget $\kappa\geq|OPT|$. Then by the assumptions on \reg we have that
    $$g(S)-\frac{\gamma}{\beta}c(S)\geq \gamma(g(OPT)-c(OPT)) \geq\gamma\tau$$
    and $|S|\leq \rho\kappa\leq (1+\alpha)\rho|OPT|$.
\end{proof}

We now fill in the missing information concerning our algorithm \distgreed. Recall the definition of $$\Phi_i(X) = \left(1-\frac{1}{\kappa}\right)^{t-i}g(X) - c(X),$$ where $t=\ln(1/\epsilon)\kappa$, which is used in both the pseudocode for \distgreed and throughout the proofs. First, pseudocode for \distgreed is presented in Algorithm \ref{alg:distortedgreedy}. Next, we present and prove Lemma \ref{distlemma}, and then use Lemma \ref{distlemma} in order to prove our main result for \distgreed, Theorem \ref{distthm}.

\begin{lemma}
  \label{distlemma}
  Consider any iteration $i+1$ of the \textbf{while} loop in \distgreed. Let $S_i$ be defined to be $S$ after the $i$-th iteration of the \textbf{while} loop in \distgreed. Then for any $X\subseteq U$ such that $|X|\leq\kappa$, 
  \begin{align*}
    \Phi_{i+1}(S_{i+1})-\Phi_{i}(S_i) \geq
    \frac{1}{\kappa}\left(1-\frac{1}{\kappa}\right)^{t-(i+1)}g(X)-\frac{c(X)}{\kappa}.
  \end{align*}
\end{lemma}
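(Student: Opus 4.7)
The plan is to split the quantity $\Phi_{i+1}(S_{i+1}) - \Phi_i(S_i)$ into two contributions and bound each separately. Concretely, I would write
\begin{align*}
    \Phi_{i+1}(S_{i+1}) - \Phi_i(S_i) = \underbrace{\big[\Phi_{i+1}(S_i) - \Phi_i(S_i)\big]}_{\text{potential evolution}} + \underbrace{\big[\Phi_{i+1}(S_{i+1}) - \Phi_{i+1}(S_i)\big]}_{\text{greedy improvement}}.
\end{align*}
For the potential evolution piece, since only the coefficient of $g$ changes between $\Phi_i$ and $\Phi_{i+1}$, a direct calculation gives exactly $\tfrac{1}{\kappa}(1-\tfrac{1}{\kappa})^{t-i-1} g(S_i)$ (a telescoping identity using $1 - (1-\tfrac{1}{\kappa}) = \tfrac{1}{\kappa}$).

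For the greedy improvement piece, I would exploit the fact that \distgreed selects the element $u_{i+1}$ maximizing $\Delta \Phi_{i+1}(S_i,\cdot)$ when this maximum is positive, and adds nothing otherwise. In either case the improvement is at least $\max_{u \in U} \max\{0, \Delta\Phi_{i+1}(S_i,u)\}$, which in turn dominates $\tfrac{1}{\kappa}\sum_{x \in X}\Delta\Phi_{i+1}(S_i,x)$ for any $X$ with $|X|\le \kappa$: when the sum is nonnegative this is the standard max-beats-average estimate (and $|X|\le \kappa$ only helps), and when the sum is negative both sides are $\le 0$. Expanding $\Delta\Phi_{i+1}(S_i,x) = (1-\tfrac{1}{\kappa})^{t-i-1}\Delta g(S_i,x) - c_x$ and applying submodularity plus monotonicity of $g$ to get $\sum_{x\in X}\Delta g(S_i,x) \ge g(S_i\cup X) - g(S_i) \ge g(X) - g(S_i)$, together with modularity of $c$ so that $\sum_{x\in X} c_x = c(X)$, yields
\begin{align*}
    \Phi_{i+1}(S_{i+1}) - \Phi_{i+1}(S_i) \ge \tfrac{1}{\kappa}\Big[(1-\tfrac{1}{\kappa})^{t-i-1}\big(g(X) - g(S_i)\big) - c(X)\Big].
\end{align*}

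Adding the two pieces, the $\tfrac{1}{\kappa}(1-\tfrac{1}{\kappa})^{t-i-1}g(S_i)$ contributions cancel, leaving exactly the stated bound. The main obstacle I foresee is not the algebra but handling the ``add nothing'' branch cleanly: I need to argue that even when the distorted greedy rule rejects every element, the inequality still holds for an arbitrary $X$ (in particular one whose $\Phi_{i+1}$-marginal sum is negative). The sign-based case split above resolves this, but it is the one place where a small amount of care is required rather than routine calculation.
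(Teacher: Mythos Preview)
Your proposal is correct and follows essentially the same approach as the paper: the same decomposition into a ``potential evolution'' term (which both compute as $\tfrac{1}{\kappa}(1-\tfrac{1}{\kappa})^{t-i-1}g(S_i)$) plus a ``greedy improvement'' term, followed by the same chain of greedy-choice, submodularity, and monotonicity to bound the latter. The only minor difference is in handling the ``add nothing'' branch: the paper treats it by formally taking $s_{i+1}\in S_i$ (so $\Delta\Phi_{i+1}(S_i,s_{i+1})=0$ and the average bound holds because every marginal is nonpositive), whereas you use the $\max\{0,\cdot\}$ formulation with a sign-based case split---both are valid and amount to the same observation.
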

\begin{proof}
  First, suppose that during iteration $i+1$ an element $s_{i+1}$ was added to $S$.
  First, notice that
  \begin{align}
  \label{eqn:phi_recursion1}
    \Phi_{i+1}(S_{i+1})-\Phi_i(S_i) & =\Phi_{i+1}(S_{i+1})-\Phi_{i+1}(S_i)+\Phi_{i+1}(S_{i})-\Phi_i(S_i)\notag\\
    &= \Delta\Phi_{i+1}(S_i,s_{i+1})
       + \frac{1}{\kappa}\left(1-\frac{1}{\kappa}\right)^{t-(i+1)}g(S_i).
  \end{align}
  In addition, notice that for any subset of $U$ such that $|X|\leq \kappa$
  \begin{align*}
     \Delta \Phi_{i+1} (S_i,s_{i+1}) &\overset{(a)}{\geq} \frac{1}{\kappa}\sum_{o\in X}\left(\Delta \Phi_{i+1} (S_i,o) \right) \\
    &\geq \frac{1}{\kappa}\sum_{o\in X}\left(\left(1-\frac{1}{\kappa}\right)^{t-(i+1)}\Delta g(S_i,o)-\Delta c(S_i,o)\right) \\
    &\geq \frac{1}{\kappa}\left(1-\frac{1}{\kappa}\right)^{t-(i+1)}\sum_{o\in X}\Delta g(S_i,o)-\frac{c(X)}{\kappa}\\
    &\overset{(b)}{\geq}\frac{1}{\kappa}\left(1-\frac{1}{\kappa}\right)^{t-(i+1)}\left(g(S_i\cup X)-g(S_i)\right)-\frac{c(X)}{\kappa}\\
    &\overset{(c)}{\geq}\frac{1}{\kappa}\left(1-\frac{1}{\kappa}\right)^{t-(i+1)}\left(g(X)-g(S_i)\right)-\frac{c(X)}{\kappa}
  \end{align*}
  where (a) is because of the greedy choice of $s_i$; (b) is by the submodularity of
  $g$; and (c) is by the monotonicity of $g$. Combining the previous two equations gives the result of Lemma \ref{distlemma}. 
  
  Next, we consider the case where $\Delta\Phi_i(S_i,x)\leq 0$ for all $x\in U$, and so there is no new element added. In this case, we can randomly choose an element from the current solution $S_i$ as the added element $s_{i+1}$. It is not hard to verify that the above two inequalities still holds in this case, and therefore again we have the result of Lemma \ref{distlemma}.
\end{proof}
\begin{algorithm}[t]
\caption{\distgreed}
\textbf{Input}: $\epsilon$\\
\textbf{Output}: $S\subseteq U$
\label{alg:distortedgreedy}
\begin{algorithmic}[1]
  \State $S\gets\emptyset$
  \State $i\gets 1$
  \While{$|S|<\ln(1/\epsilon)\kappa$}
    \State $u\gets\argmax_{x\in U}\Delta \Phi_i(S,x)$
    \If{$\Delta \Phi_i(S,x) > 0$}
      \State $S\gets S\cup\{u\}$
      \State $i\gets i+1$
    \Else
      \State \textbf{break}
    \EndIf
  \EndWhile
  \State \textbf{return} $S$
\end{algorithmic}
\end{algorithm}


\textbf{Theorem \ref{distthm}. }\textit{
  Suppose that \distgreed is run for an instance of regularized \smp. Then
  \distgreed produces a solution $S$ in $O(n\kappa\ln(1/\epsilon))$ queries of $f$ such that $|S|\leq\ln(1/\epsilon)\kappa$ and for all $X\subseteq U$ such that $|X|\leq\kappa$,
  $g(S)-c(S)\geq (1-\epsilon)g(X)-\ln(1/\epsilon)c(X).$
}
\begin{proof}
  We will use Lemma \ref{distlemma} to prove Theorem \ref{distthm}. Define $t=\ln(1/\epsilon)\kappa$. Then we see that
  \begin{align*}
    g(S_t)- c(S_t) &\overset{(a)}{\geq} \Phi_t(S_t) - \Phi_0(S_0) \\
    &= \sum_{i=0}^{t-1}\left(\Phi_{i+1}(S_{i+1})-\Phi_i(S_i)\right) \\
    &\overset{(b)}{\geq} \sum_{i=0}^{t-1}\left(\frac{1}{\kappa}\left(1-\frac{1}{\kappa}\right)^{t-(i+1)}g(X)-\frac{c(X)}{\kappa}\right)\\
    &= \frac{g(X)}{\kappa}\sum_{i=0}^{t-1}\left(1-\frac{1}{\kappa}\right)^{t-(i+1)}-\ln\left(1/\epsilon\right)c(X)\\
    &= \frac{g(X)}{\kappa}\sum_{i=0}^{t-1}\left(1-\frac{1}{\kappa}\right)^i-\ln\left(1/\epsilon\right)c(X)\\
    &\overset{(c)}{=} \left(1-\left(1-\frac{1}{\kappa}\right)^t\right)g(X)-\ln\left(1/\epsilon\right)c(X)\\
    &\geq(1-\epsilon)g(X)-\ln\left(1/\epsilon\right)c(X)
  \end{align*}
  where (a) is because $g(\emptyset)\geq 0$;
  (b) is because Lemma \ref{distlemma};
  and (c) is by the formula for geometric series.
\end{proof}

\section{Supplementary material for Section \ref{section:exp}}
In this section, we present supplementary material to Section \ref{section:exp}. In particular, we present additional detail about the experimental setup in Section \ref{appendix:setup}, and additional experimental results in Section \ref{appendix:results}.

\subsection{Experimental setup}
\label{appendix:setup}
First of all, we provide more detail about the two applications used to evaluate the algorithms proposed in the main paper. For monotone \scp, the application considered here is data summarization, where $f$ is a function that represents how well a subset could summarize the whole dataset. The problem definition is as follows.
\begin{definition}{(\textbf{Data Summarization})}
Suppose there are a total of $n$ elements denoted as $U$. Let $T$ be a set of tags. Each element in $U$ is tagged with a set of elements from $T$ via function $t: U\rightarrow 2^T$. The function $f$ is defined as
\begin{align*}
    f(S)=|\cup_{s\in S}t(s)|,\qquad\forall S\in U.
\end{align*}
\end{definition}
From the definition, we can see that $f$ is both monotone and submodular. For general \scp, the application we consider is where $f$ is a graph cut function, which is a submodular but not necessarily monotone function. 
\begin{definition}{(\textbf{Graph cut})}
Let $G = (V, E)$ be a graph, and $w:E\rightarrow \mathbb{R}_{\geq 0}$ be a function that assigns a weight for evry edge in the graph. The function $f:2^U\rightarrow \mathbb{R}_{\geq 0}$ maps a subset of vertices $X\subseteq V$ to the total weight of edges between $X$ and $V\backslash X$. More specifically,
\begin{align*}
    f(S)=\sum_{x\in X, y\in V\backslash X}w(x,y).
\end{align*}
\end{definition}

In addition to the datasets considered in the main paper,
we also look at graph cut instances on the com-Amazon ($n=334863$, 925872 edges), ego-Facebook ($n=4039$,	88234 edges), and email-Enron ($n=36692$, 183831 edges) graphs from the SNAP large network collection \citep{leskovec2016snap}. 

We now present more detail about the algorithms ``EX'' and ``F-EX''. EX begins by using a greedy algorithm to find a solution that meets the desired threshold. If this greedy choice fails, EX begins a search of all feasible sets where each element considered is in order of decreasing marginal gains. Lazy updates are used for computing marginal gains. ``F-EX'' uses a similar approach as EX, but as described in the main paper limits the search to a smaller portion of the elements. In particular, as described in the main paper, the F-EX algorithm first checks if the optimization problem in line \ref{alg:exact_step} of \filt is an unconstrained \smp. If so, the algorithm adds all monotone elements from the the union of $\{S_1,...S_{2/\epsilon}\}$ to the solution set and then explores the non-monotone elements in the rest of $\cup_{i=1}^{2/\epsilon}S_i$. 

\subsection{Additional experimental results}
\label{appendix:results}
First of all, we present some additional monotone experimental results on the synthetic data and the corel dataset. The corel dataset is the Corel5k set of
images in \cite{duygulu2002object} ($n=4500$). The synthetic data we used here is a data summarization instance with a total of $m=4000$ elements and 2000 ($n=2000$) subset of elements. Each subset is randomly generated in the following way: let us denote the set of elements as $[m]=\{1,2,3...,m\}$, each elements $i\leq250$ is added to the subset with probability $0.4$, and each element $250<i\leq m$ is added with probability $0.002$. The four algorithms examined here can be found in Section \ref{section:exp}. We compare the four algorithms for different value of $\tau$ and $\epsilon$. When $\epsilon$ is varied, $\tau$ is fixed at $0.9f(U)$ for corel dataset and synthetic dataset with $U$ being the universe of the two instances respectively. When $\tau$ is varied, $\epsilon$ is fixed at  $0.05$. The results on the synthetic data are presented in Figure \ref{fig:synthetic_tau_f}, \ref{fig:synthetic_tau_c}, \ref{fig:synthetic_tau_q}, and \ref{fig:synthetic_eps_q}. The results on the corel dataset are plotted in Figure \ref{fig:corel_tau_f}, \ref{fig:corel_tau_c}, \ref{fig:corel_tau_q} and \ref{fig:corel_eps_q}.From the results, we can see that the results on the corel dataset is in line with the results in the main paper. However, on the synthetic dataset, it is worth noting that the "SG" (\sgc) algorithm requires less number of queries compared with other three algorithms even when $\epsilon$ is large, which further demonstrates the advantages of our algorithms.

\begin{figure*}[t!]
    \centering
\hspace{-0.5em}
\subfigure[synthetic $f$]
{\label{fig:synthetic_tau_f}\includegraphics[width=0.24\textwidth]{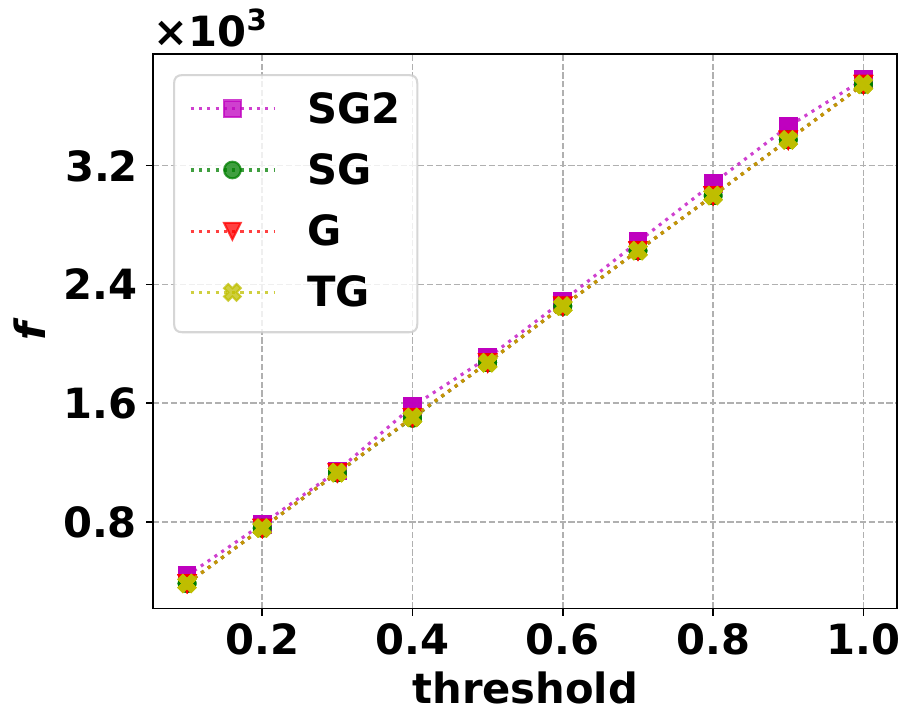}}
\hspace{-0.5em}
\subfigure[synthetic $c$]
{\label{fig:synthetic_tau_c}\includegraphics[width=0.24\textwidth]{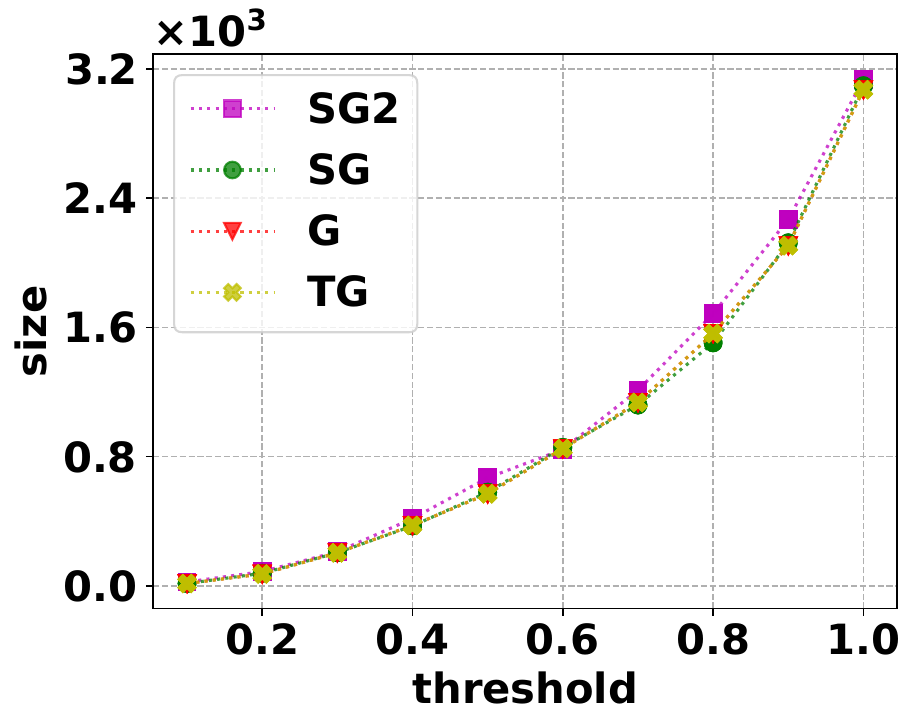}}
\hspace{-0.5em}
\subfigure[synthetic queries]
{\label{fig:synthetic_tau_q}\includegraphics[width=0.24\textwidth]{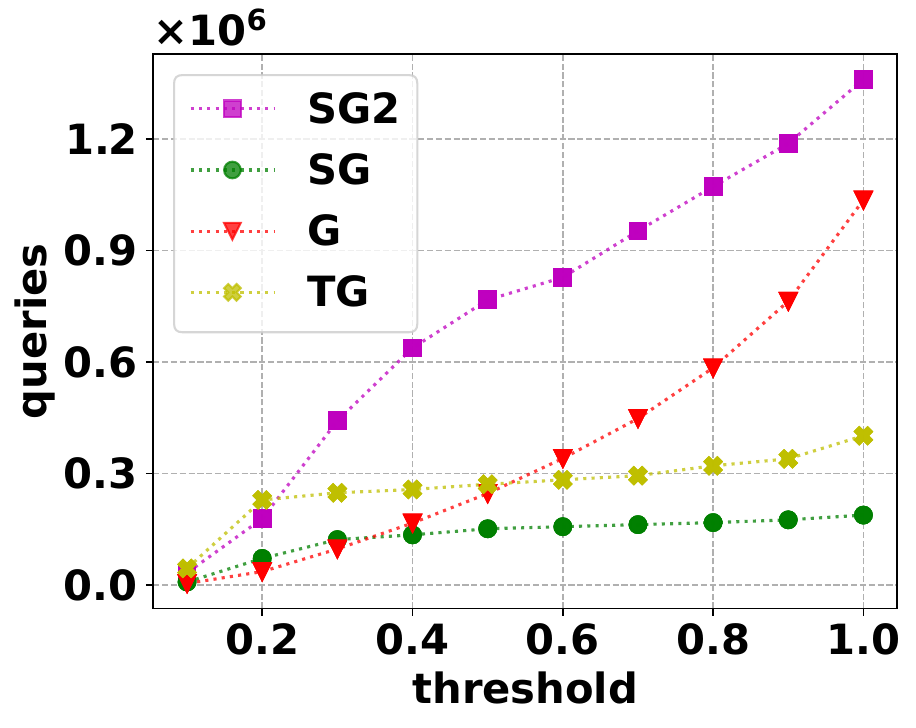}}
\hspace{-0.5em}
\subfigure[synthetic queries]
{\label{fig:synthetic_eps_q}\includegraphics[width=0.24\textwidth]{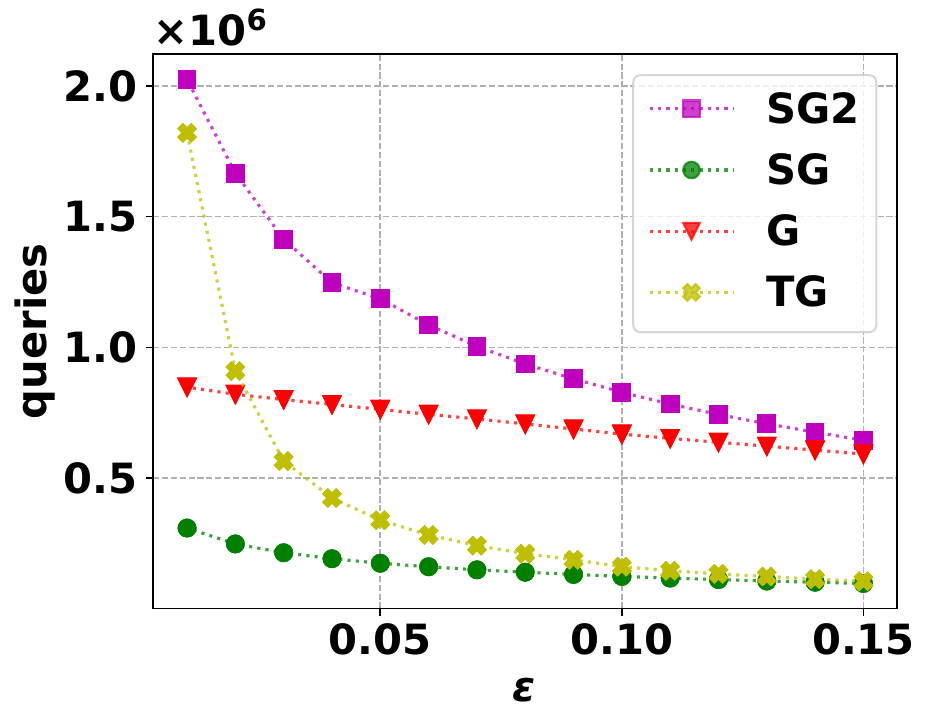}}
\hspace{-0.5em}
\subfigure[corel $f$]
{\label{fig:corel_tau_f}\includegraphics[width=0.24\textwidth]{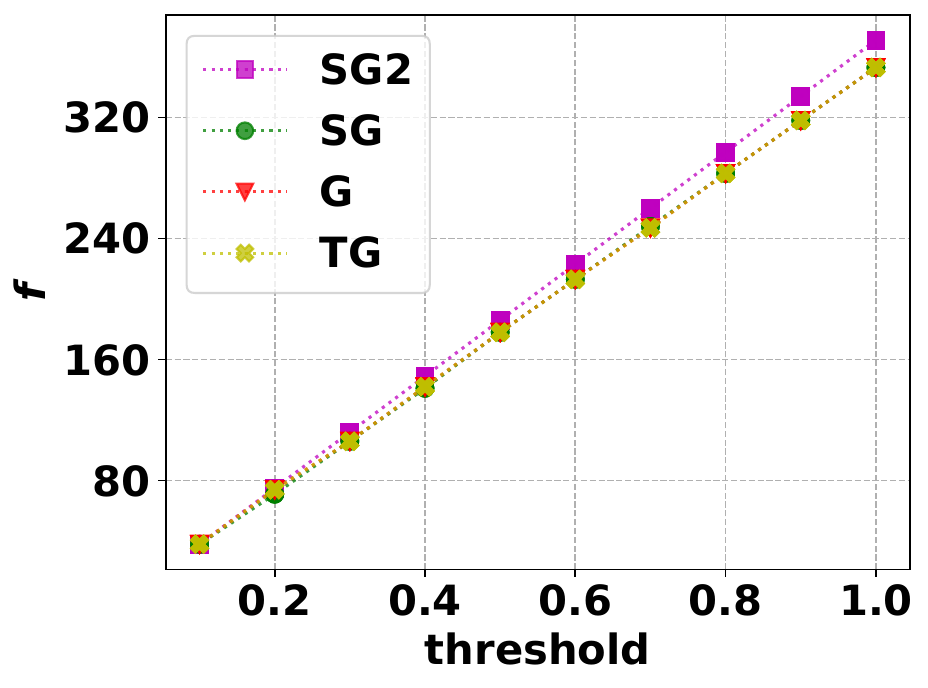}}
\hspace{-0.5em}
\subfigure[corel $c$]
{\label{fig:corel_tau_c}\includegraphics[width=0.24\textwidth]{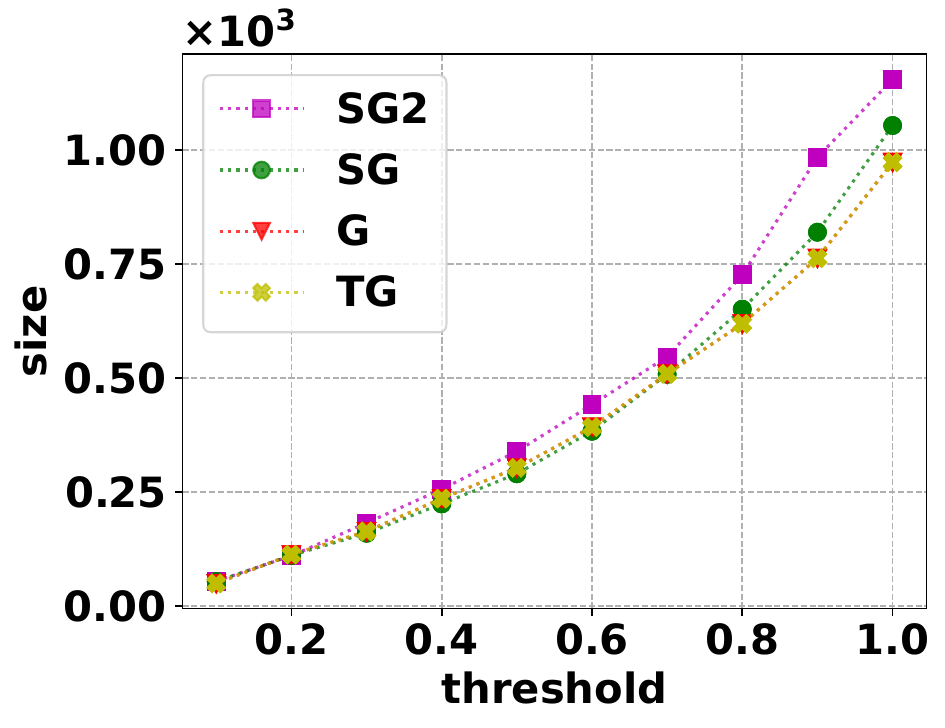}}
\hspace{-0.5em}
\subfigure[corel queries]
{\label{fig:corel_tau_q}\includegraphics[width=0.24\textwidth]{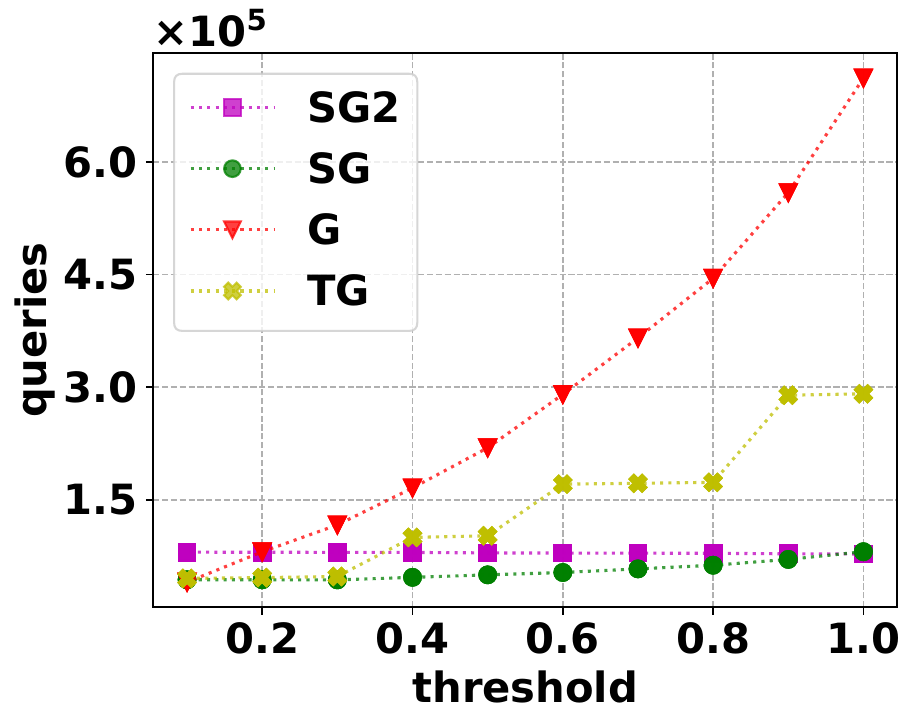}}
\hspace{-0.5em}
\subfigure[corel queries]
{\label{fig:corel_eps_q}\includegraphics[width=0.24\textwidth]{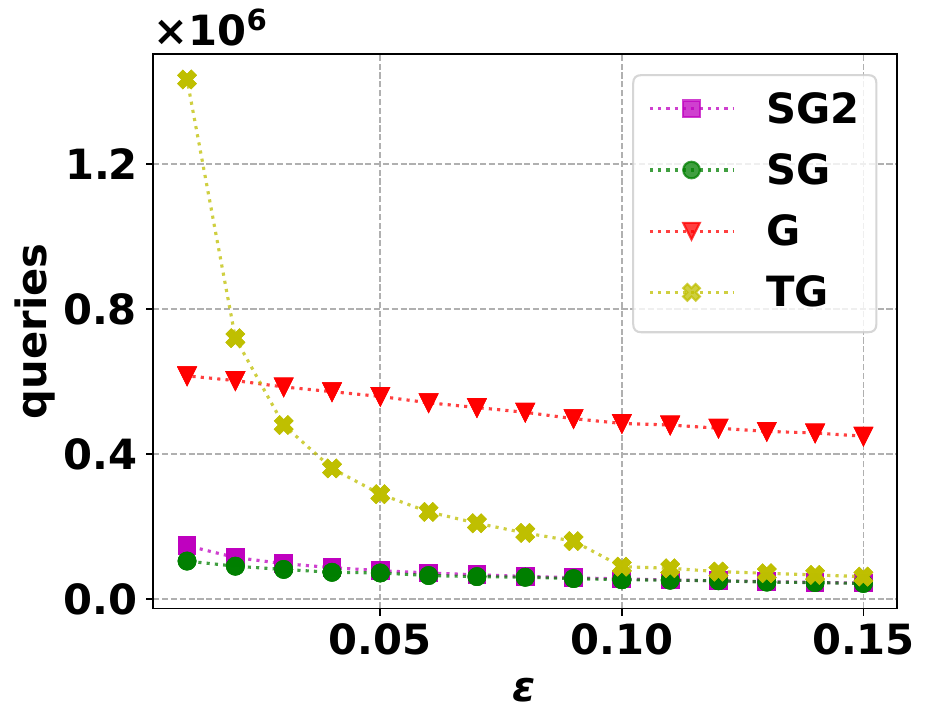}}
\caption{The experimental results of running different greedy algorithms on the instances of data summarization on the synthetic dataset, and corel dataset.}
        \label{fig:mono_synthetic}
\end{figure*}

The additional experiments we present are further exploration of our algorithm \filt.
In Figure \ref{fig:nonmono_filter}, we present additional experimental results analogous to those presented in the main paper for graph cut, but on the additional datasets described above. In summary, we see many of the same patterns exhibited on these instances as discussed in the main paper. In addition, we include an additional set of experiments analyzing how many non-monotone elements there are in the union of $\{S_1,...S_{2/\epsilon}\}$ in \filt in order to analyze how effective we would expect F-EX to be. The number (``num'') and portion (``pt'') of the nonmonotone elements the union of $\{S_1,...S_{2/\epsilon}\}$ on the instances with the graph cut objective are plotted in Figures \ref{fig:nonmono_eps} and \ref{fig:nonmono_tau}. If that instance did not require an exact search (meaning that the initial greedy heuristic found a solution), then -1 is plotted. From the figures, we can see that in most cases, the number of nonmonotone elements is either $0$ or very small, which explains why the fast exact algorithm requires less queries than the exact algorithm in these cases. This implies that in many instances of \scp, \filt is able to cut down the original instance to one that is nearly monotone and much easier to solve.

\begin{figure*}[t!]
    \centering
\hspace{-0.5em}
\subfigure[amazon $f$]
{\label{fig:amazon_tau_f}\includegraphics[width=0.24\textwidth]{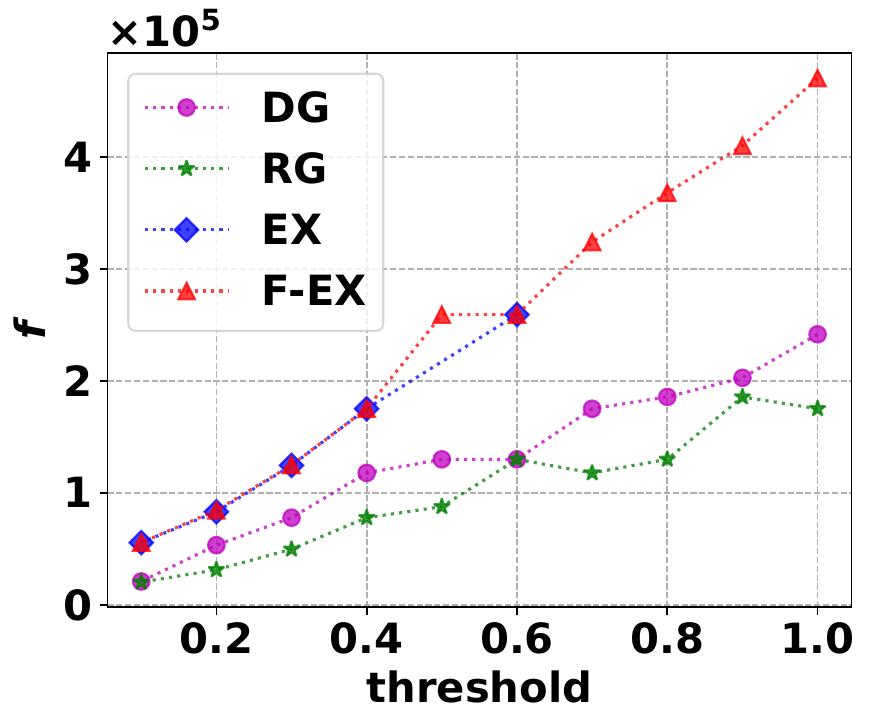}}
\hspace{-0.5em}
\subfigure[amazon $c$]
{\label{fig:amazon_tau_c}\includegraphics[width=0.24\textwidth]{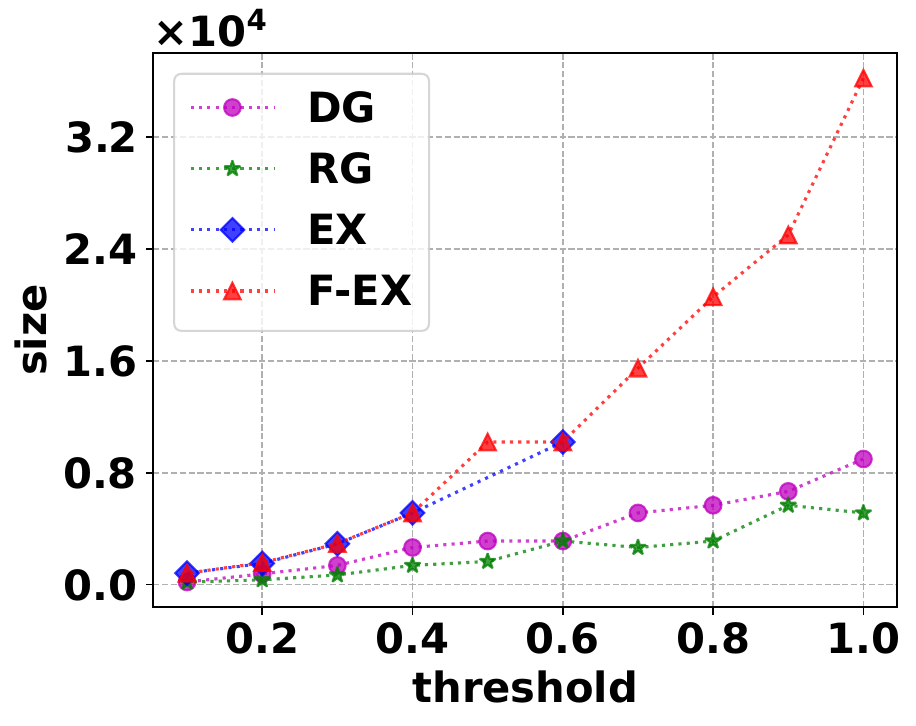}}
\hspace{-0.5em}
\subfigure[amazon queries]
{\label{fig:amazon_tau_q}\includegraphics[width=0.24\textwidth]{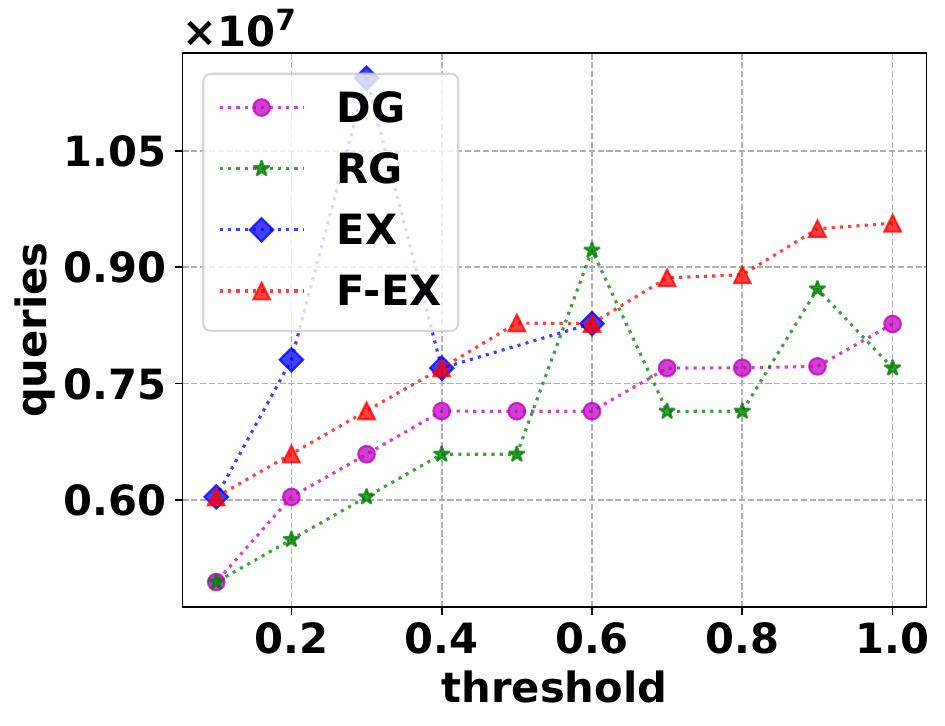}}
\hspace{-0.5em}
\subfigure[amazon queries]
{\label{fig:amazon_eps_q}\includegraphics[width=0.24\textwidth]{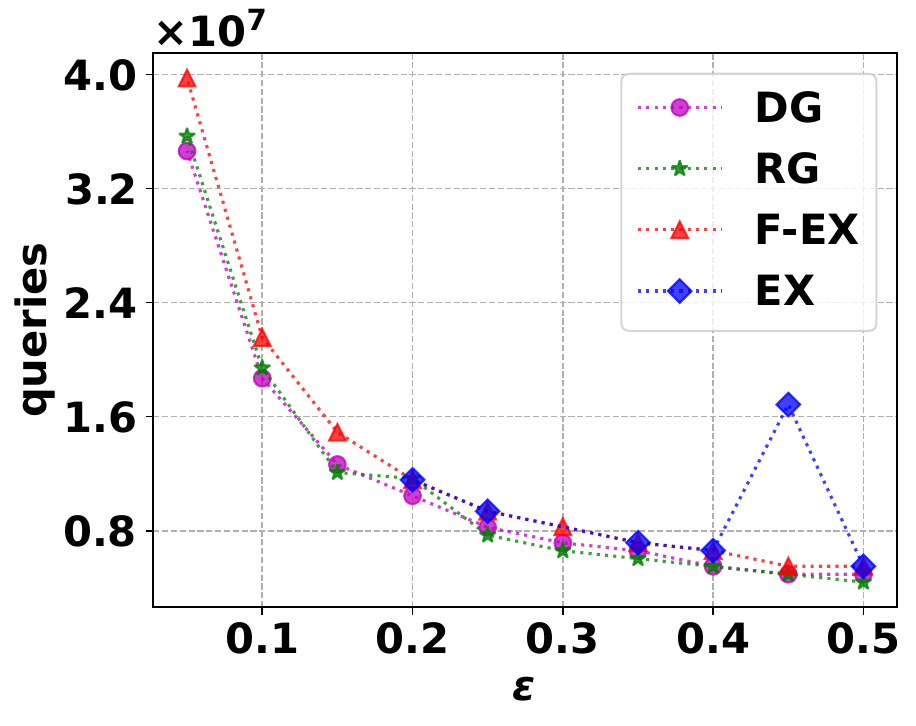}}
\hspace{-0.5em}
\subfigure[enron $f$]
{\label{fig:enron_tau_f}\includegraphics[width=0.24\textwidth]{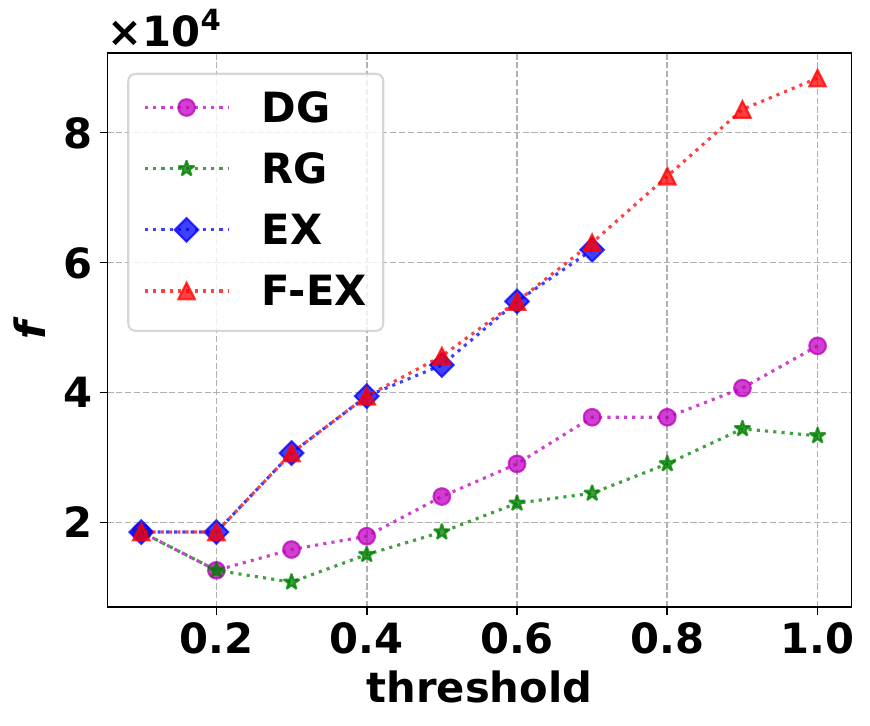}}
\hspace{-0.5em}
\subfigure[enron $c$]
{\label{fig:enron_tau_c}\includegraphics[width=0.24\textwidth]{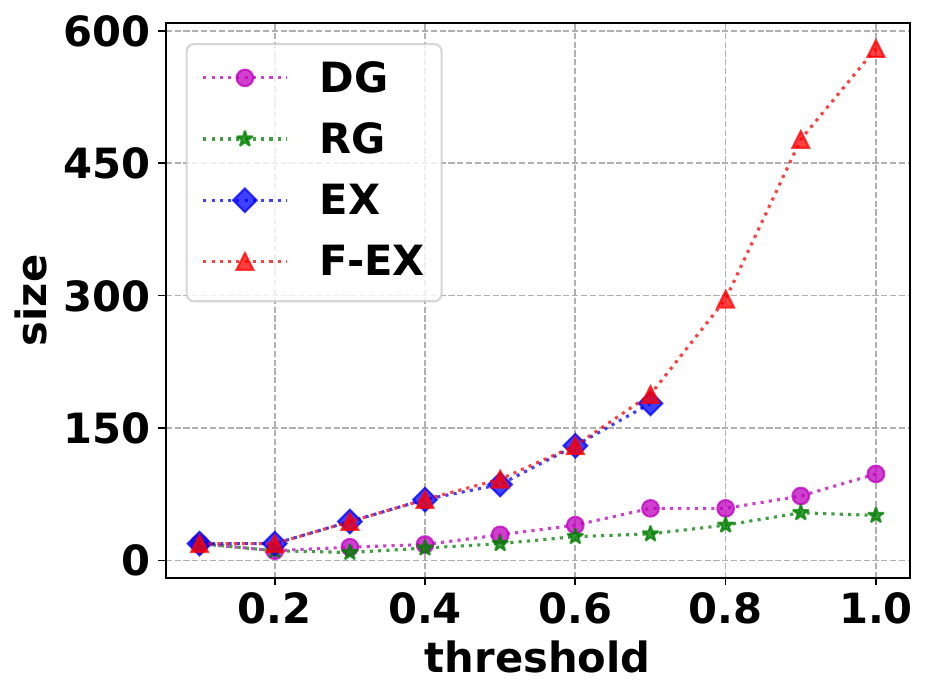}}
\hspace{-0.5em}
\subfigure[enron queries]
{\label{fig:enron_tau_q}\includegraphics[width=0.24\textwidth]{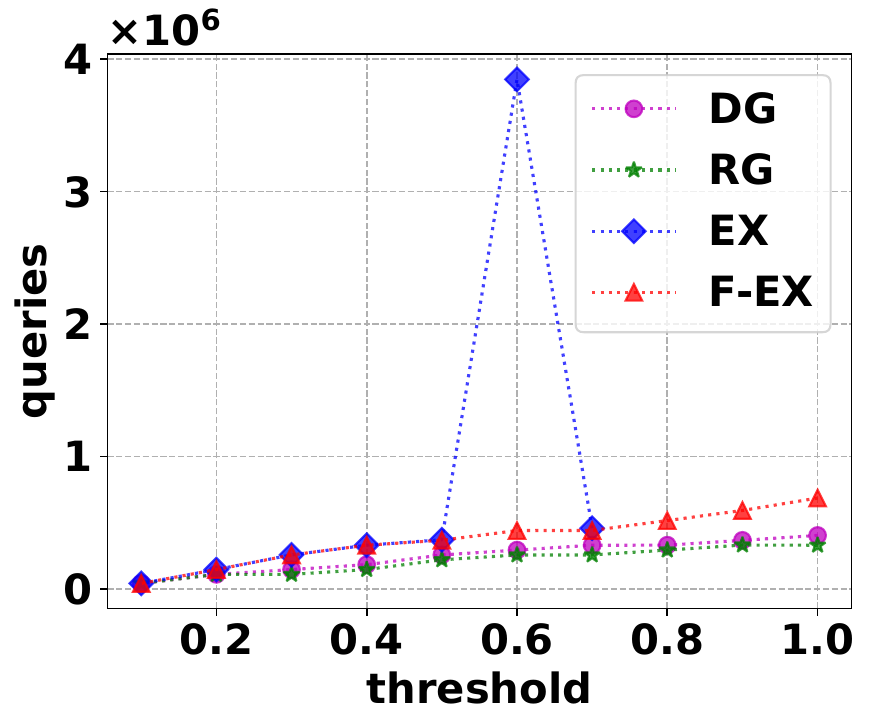}}
\hspace{-0.5em}
\subfigure[enron queries]
{\label{fig:enron_eps_q}\includegraphics[width=0.24\textwidth]{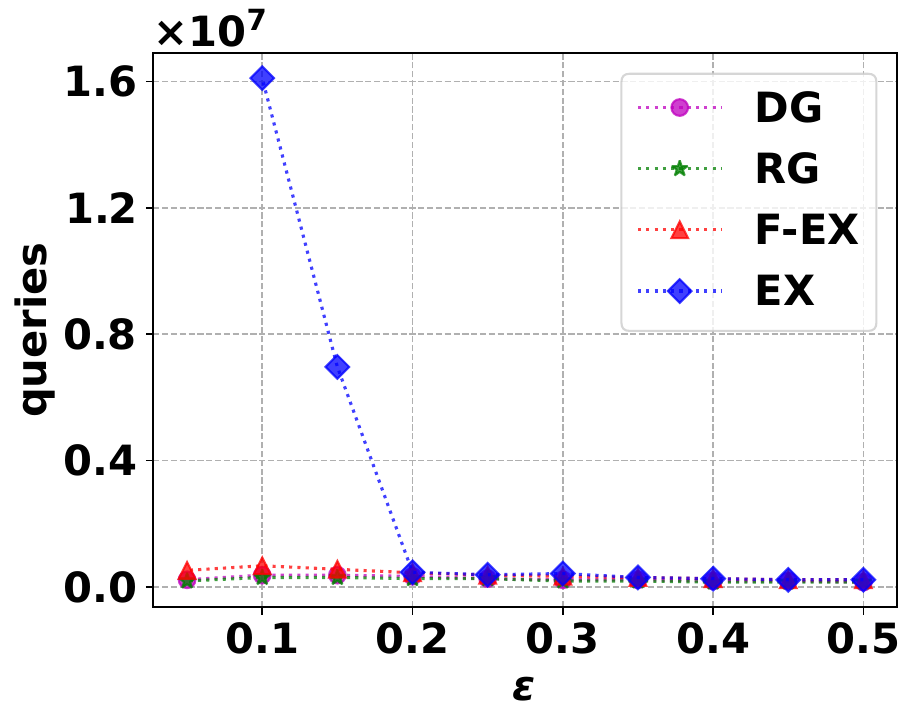}}
\hspace{-0.5em}
\subfigure[facebook $f$]
{\label{fig:facebook_tau_f}\includegraphics[width=0.24\textwidth]{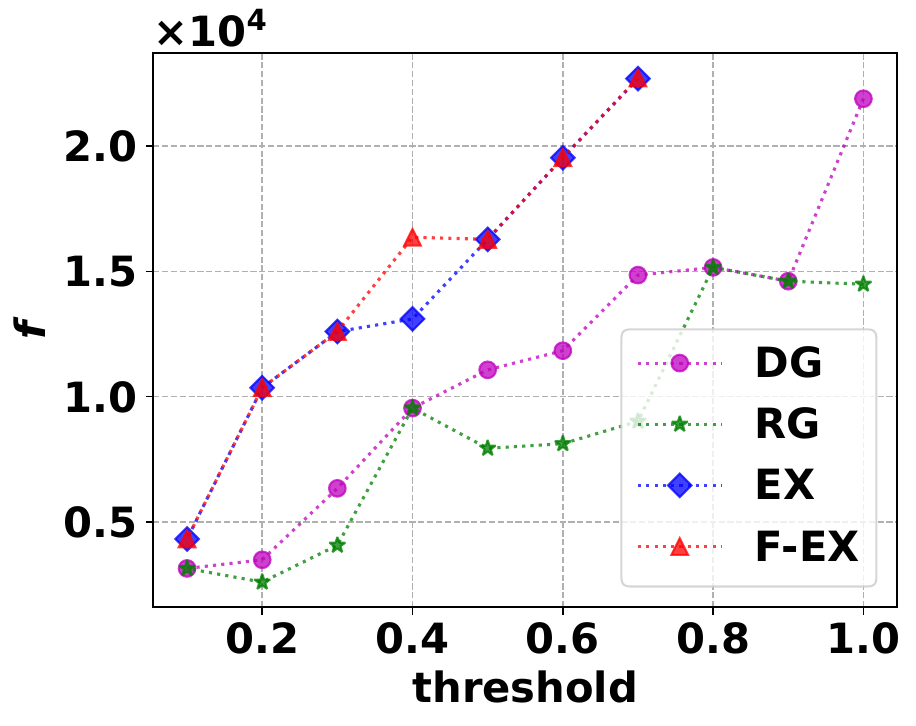}}
\hspace{-0.5em}
\subfigure[facebook $c$]
{\label{fig:facebook_tau_c}\includegraphics[width=0.24\textwidth]{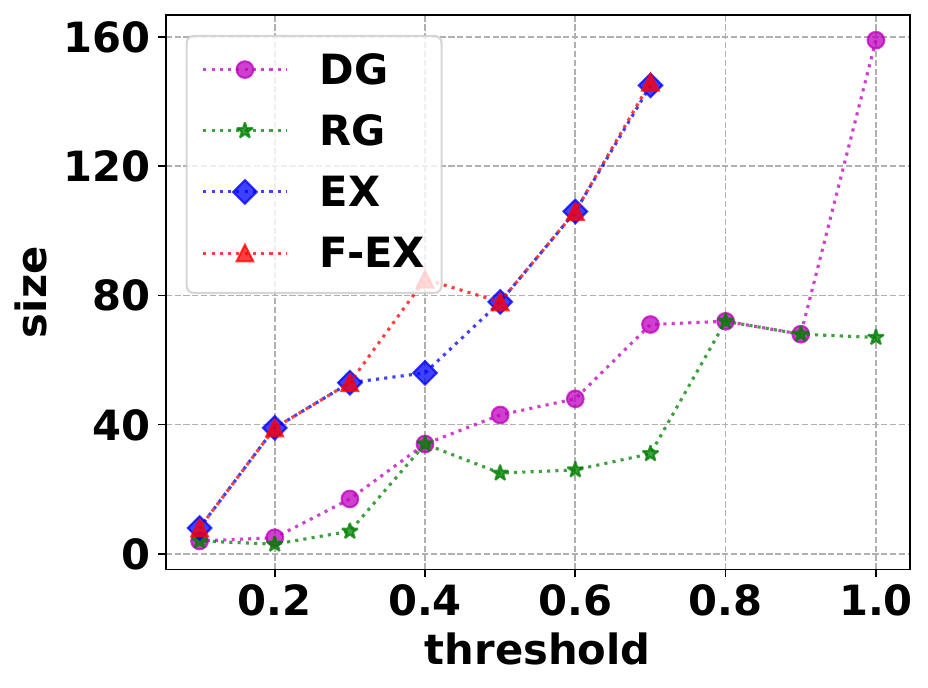}}
\hspace{-0.5em}
\subfigure[facebook queries]
{\label{fig:facebook_tau_q}\includegraphics[width=0.24\textwidth]{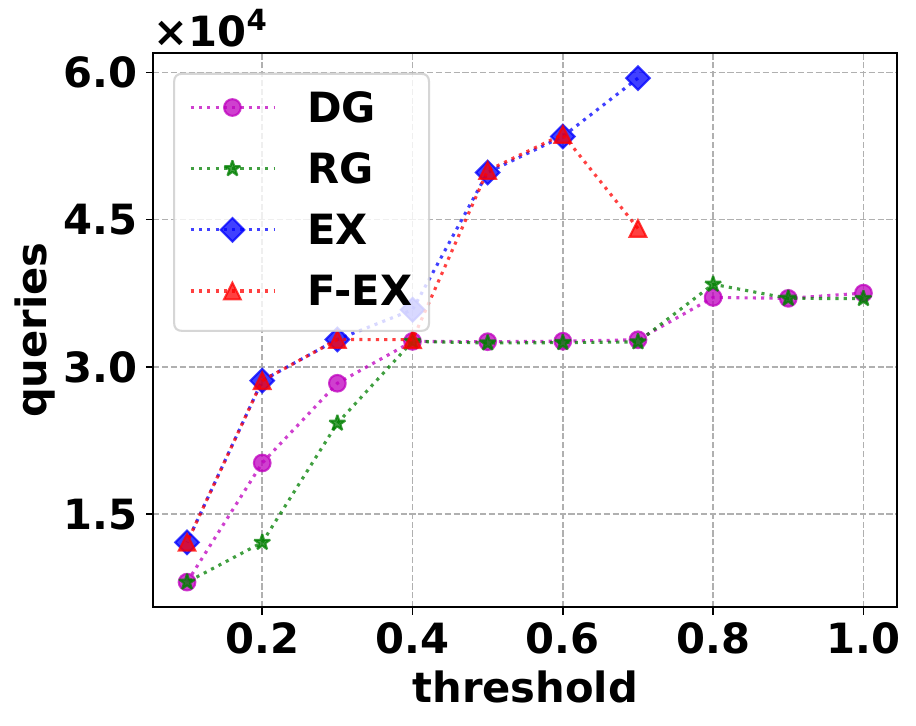}}
\hspace{-0.5em}
\subfigure[facebook queries]
{\label{fig:facebook_eps_q}\includegraphics[width=0.24\textwidth]{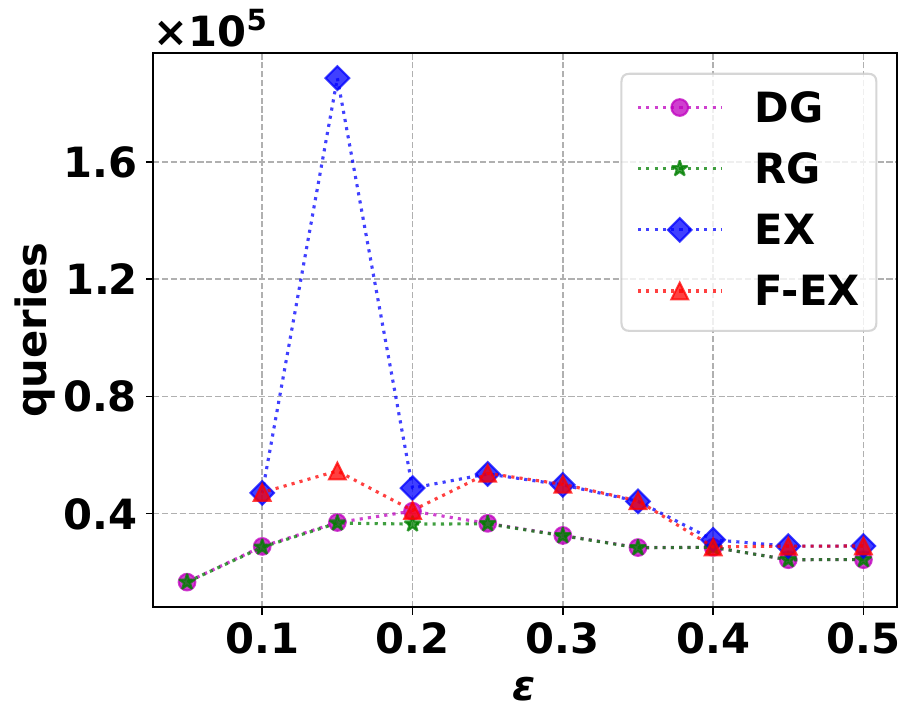}}
\caption{The experimental results of running \filt on the instances of graph cut on the com-Amazon graph ("amazon"), email-Enron ("enron") and ego-Facebook ("facebook") dataset.}
        \label{fig:nonmono_filter}
\end{figure*}

\begin{figure}[t]
    \centering
    \subfigure[amazon, $\tau=0.5\cdot f_{\text{ave-max}}$]{\label{fig:amazon_epsnum}\includegraphics[width=0.23\textwidth]{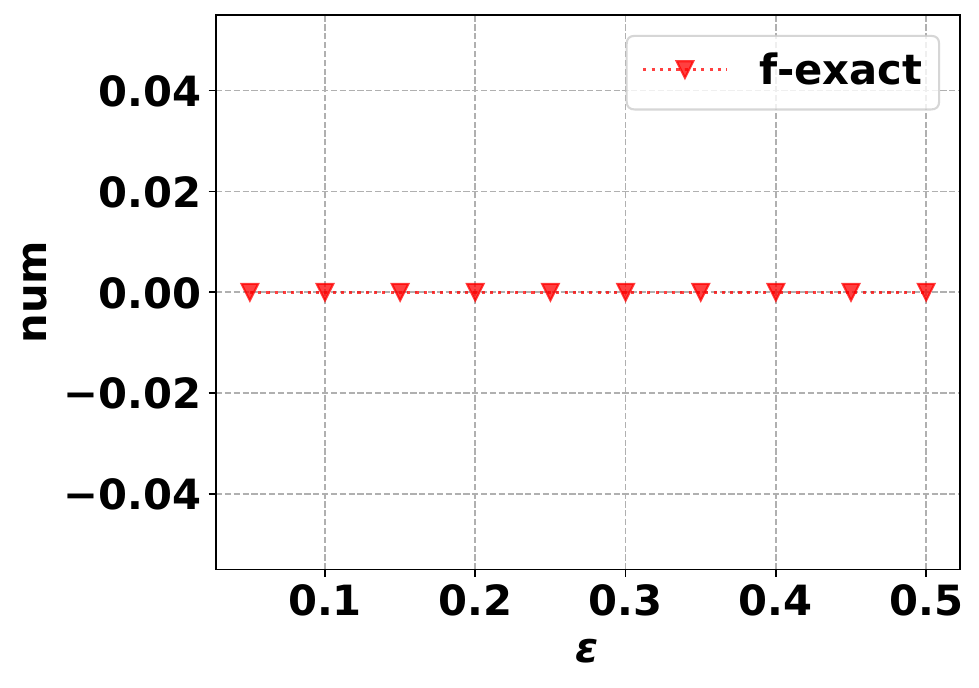}}
    \subfigure[facebook, $\tau=0.5\cdot f_{\text{ave-max}}$]{\label{fig:facebook_epsnum}\includegraphics[width=0.23\textwidth]{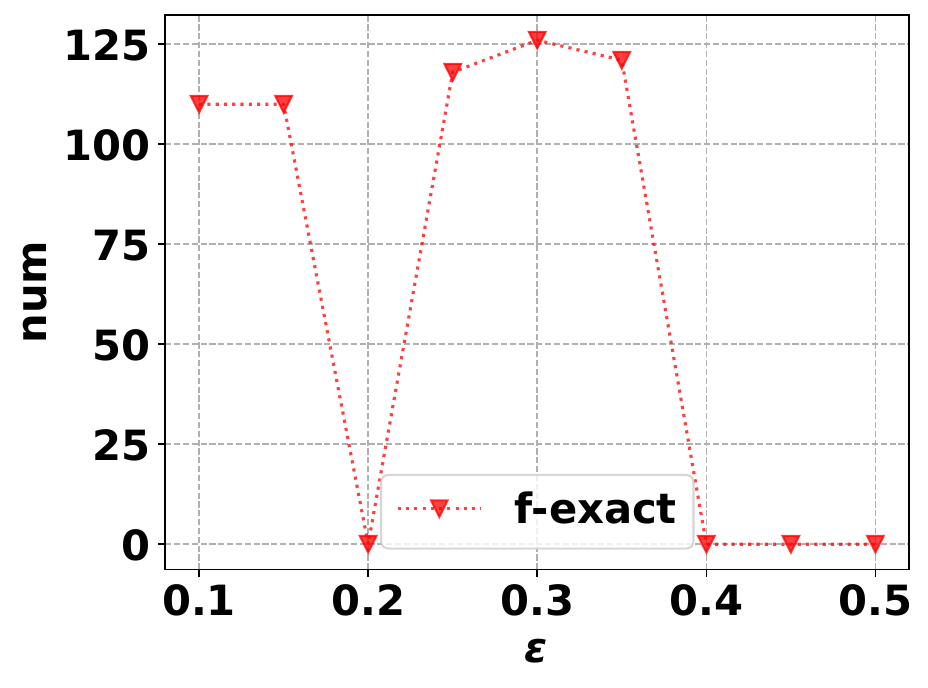}}
    \subfigure[enron, $\tau=0.7\cdot f_{\text{ave-max}}$]{\label{fig:enron_epsnum}\includegraphics[width=0.23\textwidth]{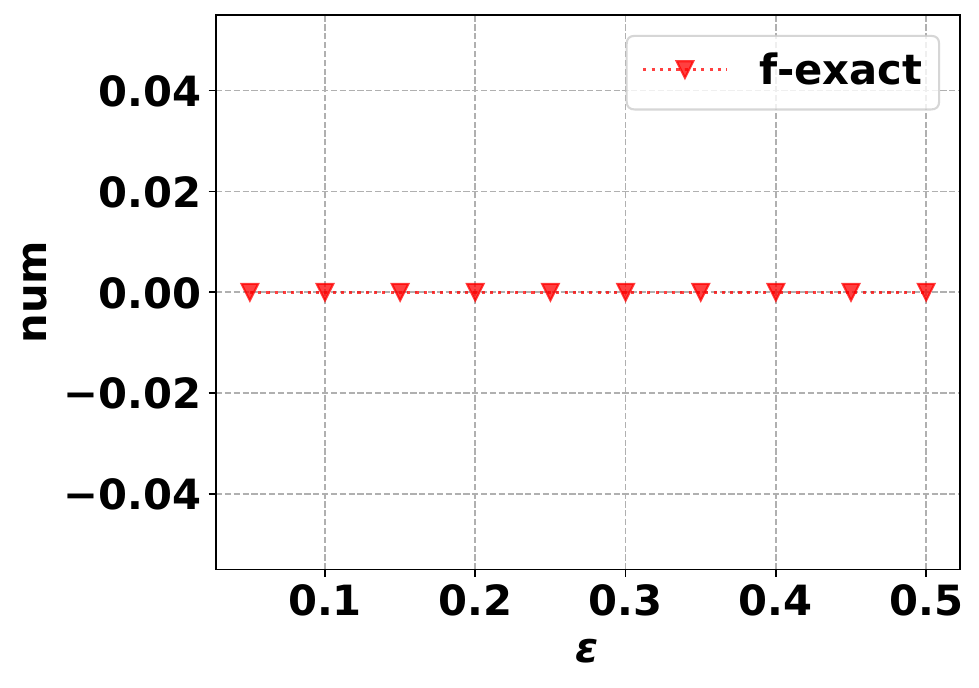}}
    \subfigure[euall, $\tau=0.9\cdot f_{\text{ave-max}}$]{\label{fig:euall_epsnum}\includegraphics[width=0.23\textwidth]{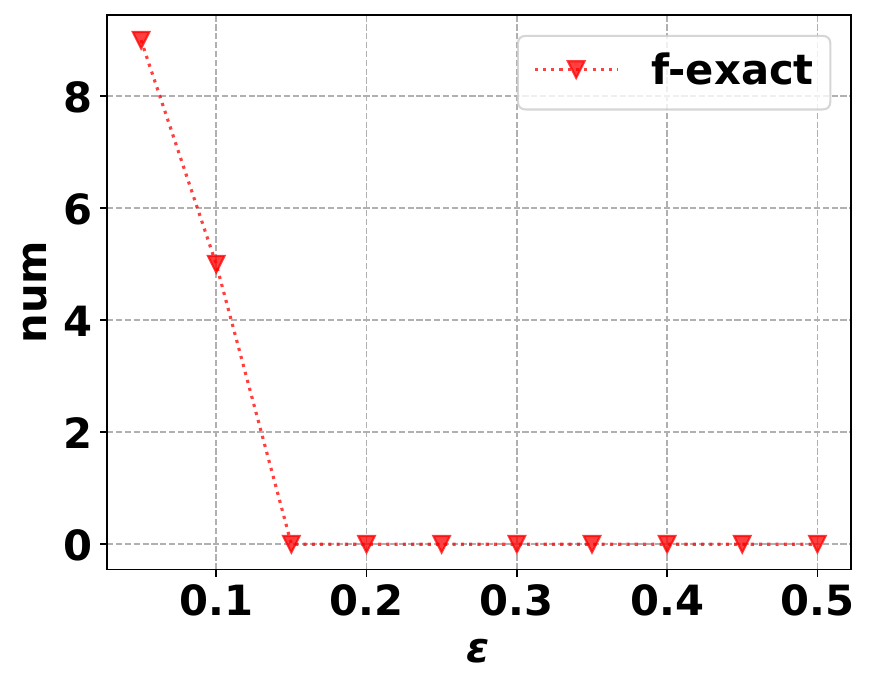}}
    \vspace{\baselineskip} 

    \subfigure[amazon, $\tau=0.5\cdot f_{\text{ave-max}}$]{\label{fig:amazon_epspt}\includegraphics[width=0.23\textwidth]{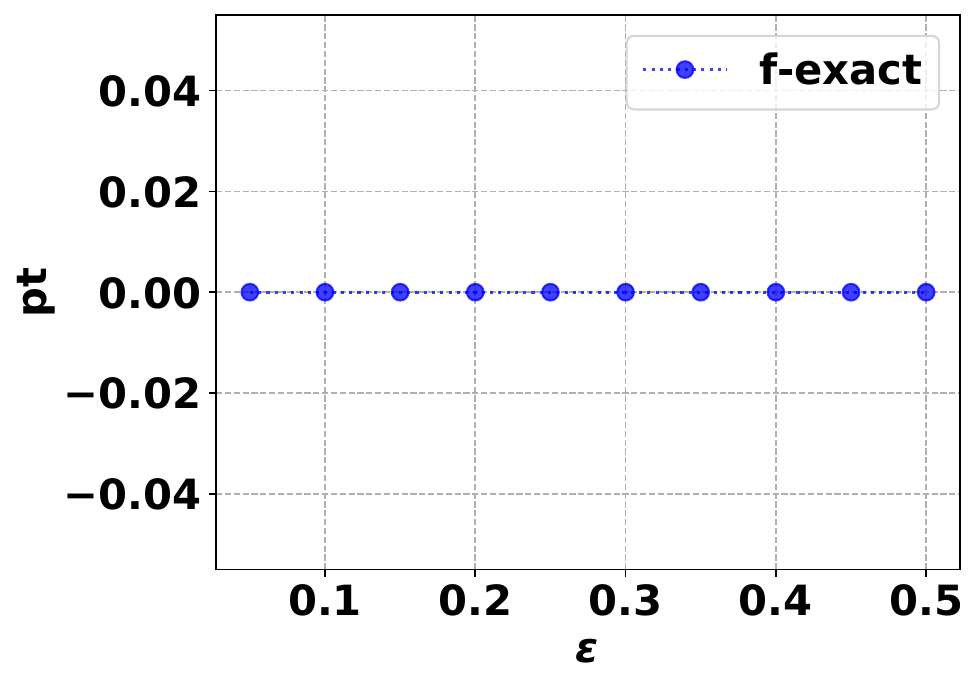}}
    \subfigure[facebook, $\tau=0.5\cdot f_{\text{ave-max}}$]{\label{fig:facebook_epspt}\includegraphics[width=0.23\textwidth]{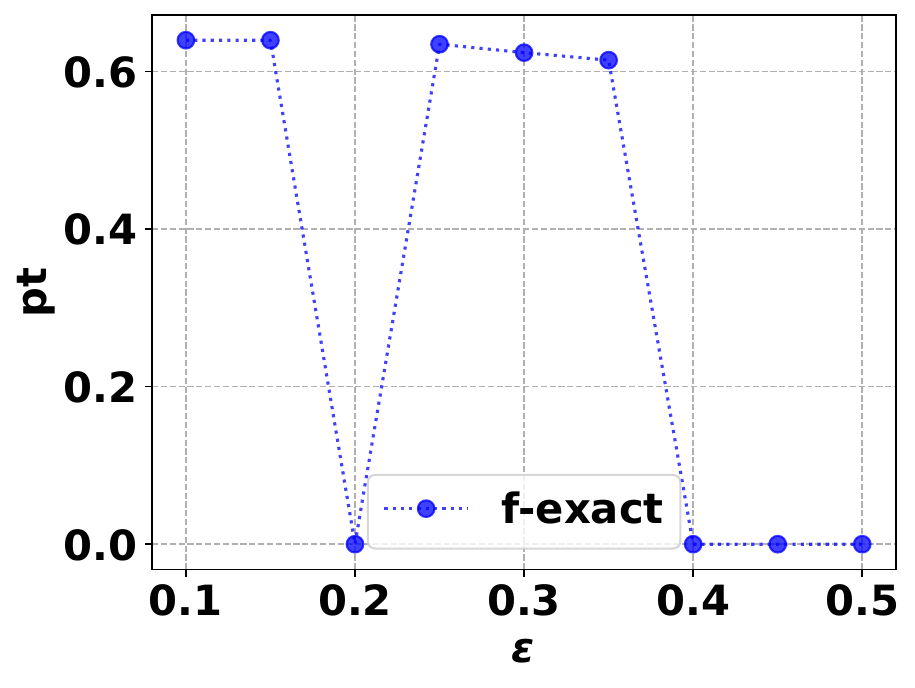}}
    \subfigure[enron, $\tau=0.7\cdot f_{\text{ave-max}}$]{\label{fig:enron_epspt}\includegraphics[width=0.23\textwidth]{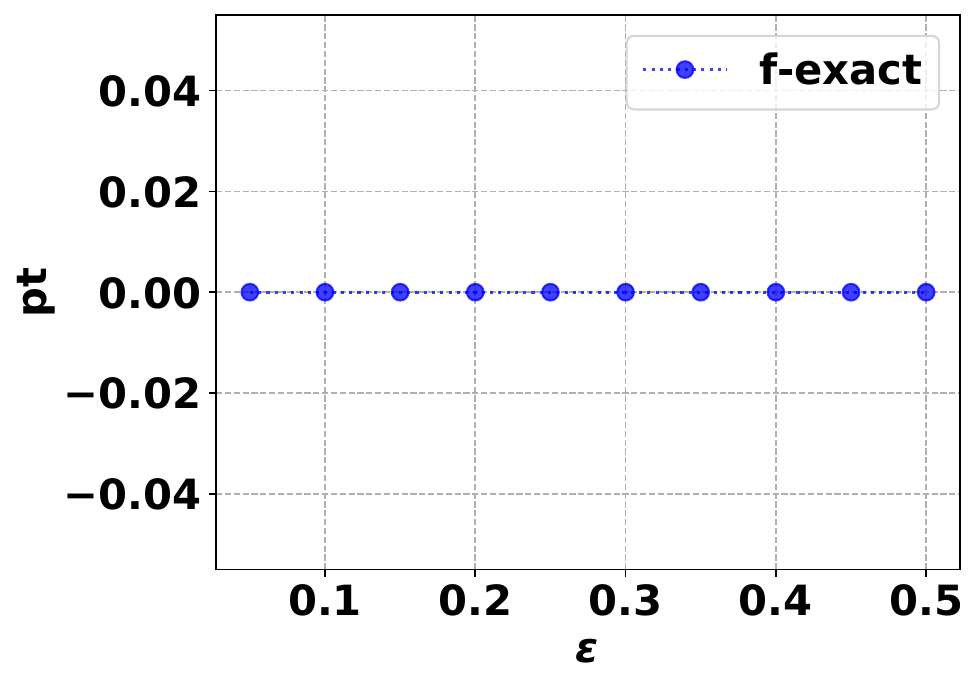}}
    \subfigure[euall, $\tau=0.9\cdot f_{\text{ave-max}}$]{\label{fig:euall_epspt}\includegraphics[width=0.23\textwidth]{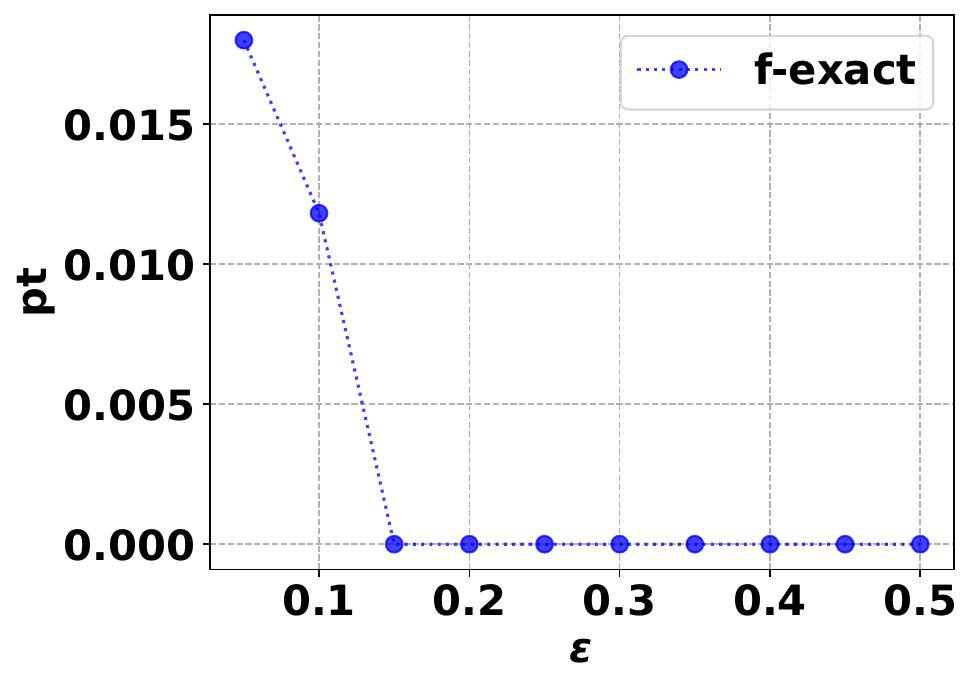}}
    
    \caption{The second experiments. We plot the number and portion of the non-monotone elements. Here $x$-axis refers to value of $\tau$. 
    }
    \label{fig:nonmono_eps}
\end{figure}

\begin{figure}[t]
    \centering
    \subfigure[amazon, $\epsilon=0.3$]{\label{fig:amazon_taunum}\includegraphics[width=0.23\textwidth]{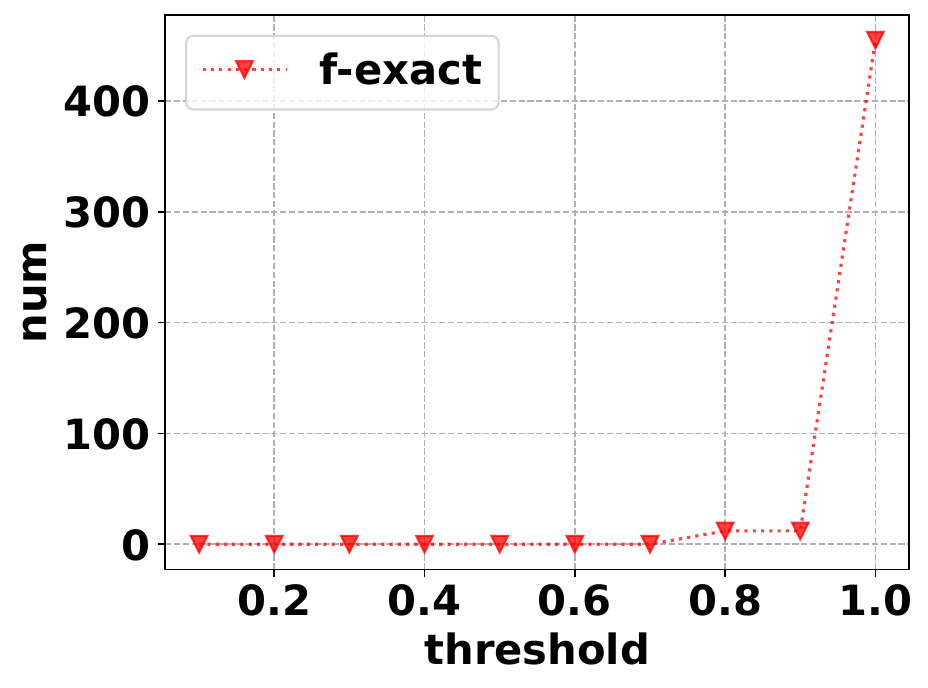}}
    \subfigure[facebook, $\epsilon=0.3$]{\label{fig:facebook_taunum}\includegraphics[width=0.23\textwidth]{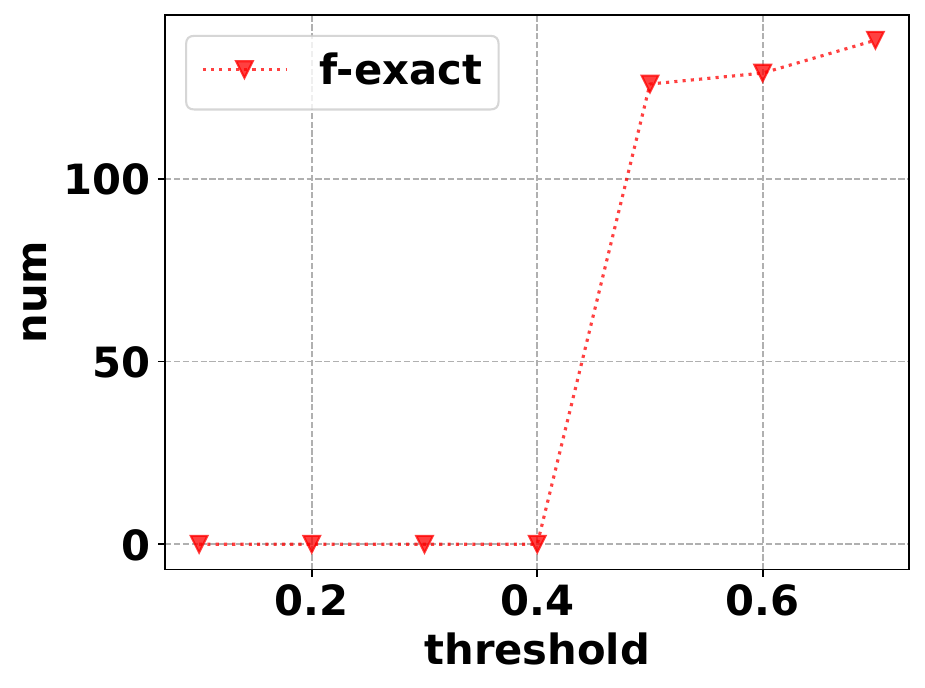}}
    \subfigure[enron, $\epsilon=0.2$]{\label{fig:enron_taunum}\includegraphics[width=0.23\textwidth]{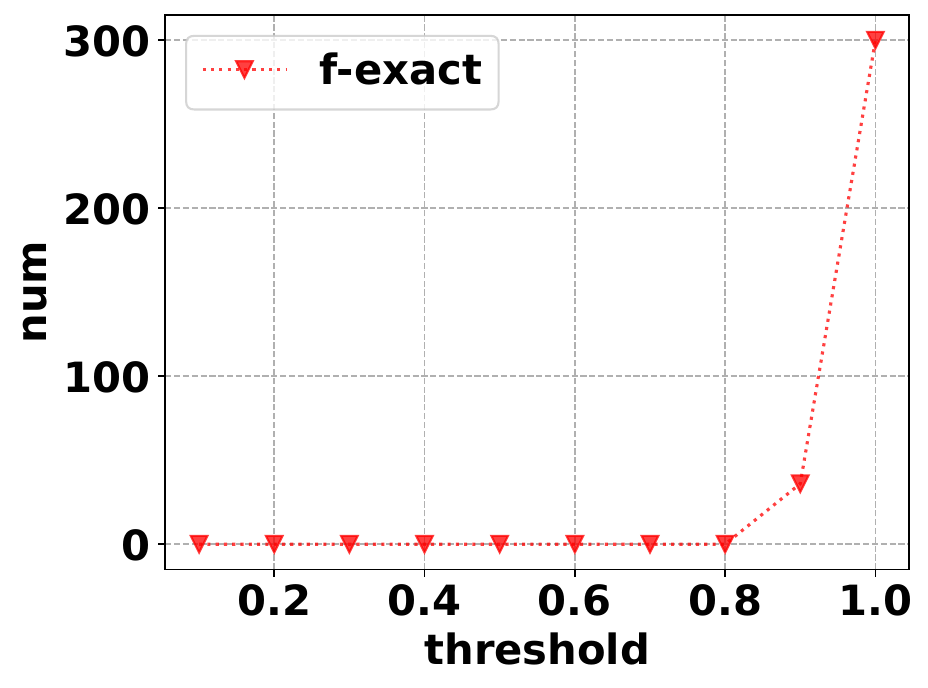}}
    \subfigure[euall, $\epsilon=0.15$]{\label{fig:euall_taunum}\includegraphics[width=0.23\textwidth]{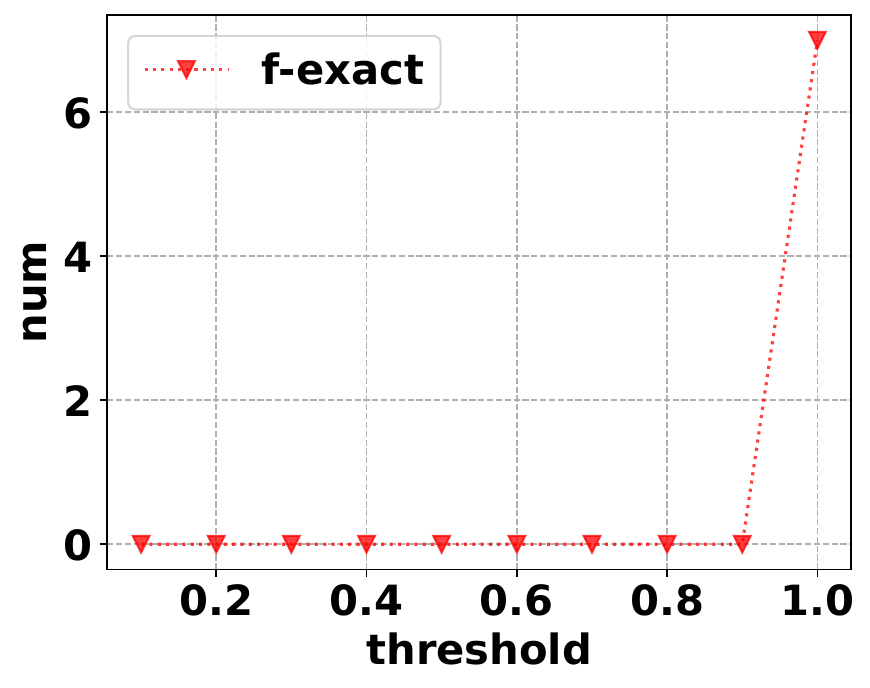}}
    \vspace{\baselineskip} 

    \subfigure[amazon, $\epsilon=0.3$]{\label{fig:amazon_taupt}\includegraphics[width=0.23\textwidth]{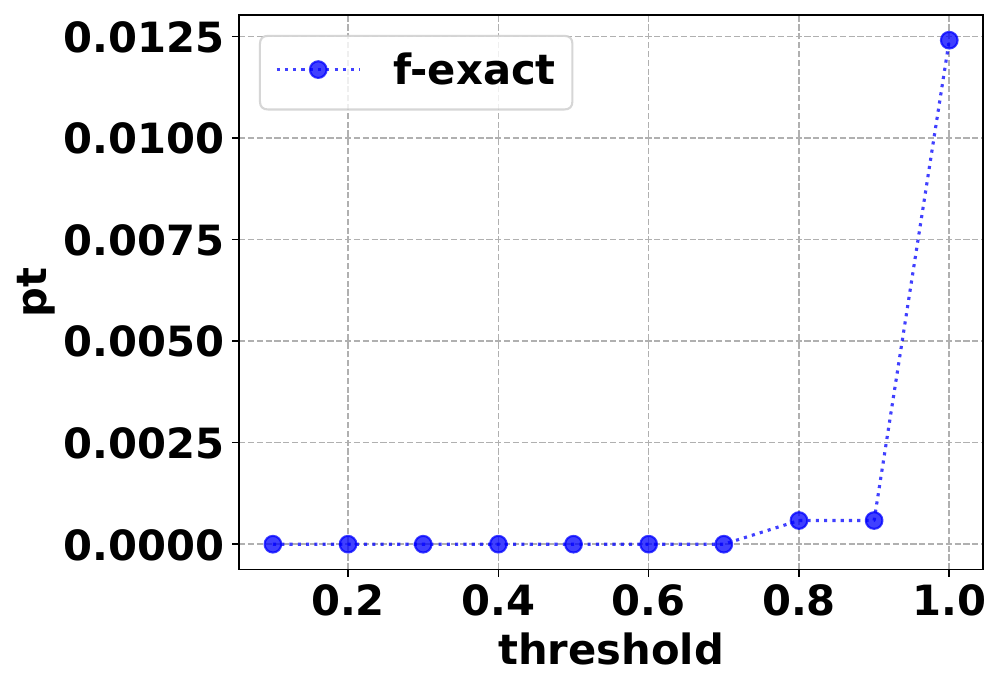}}
     \subfigure[facebook, $\epsilon=0.3$]{\label{fig:facebook_taupt}\includegraphics[width=0.23\textwidth]{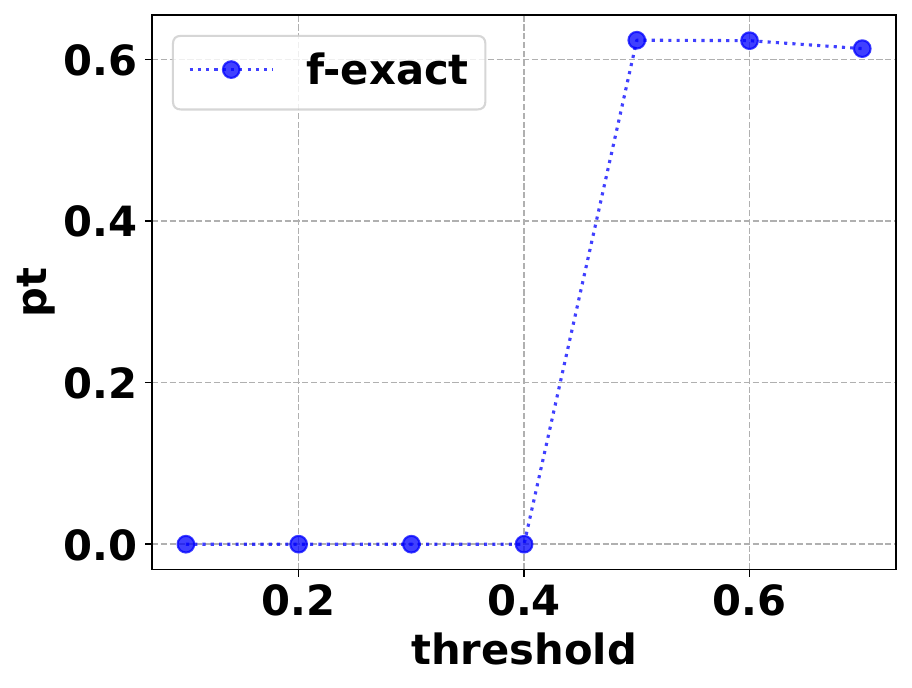}}
   \subfigure[enron, $\epsilon=0.2$]{\label{fig:enron_taupt}\includegraphics[width=0.23\textwidth]{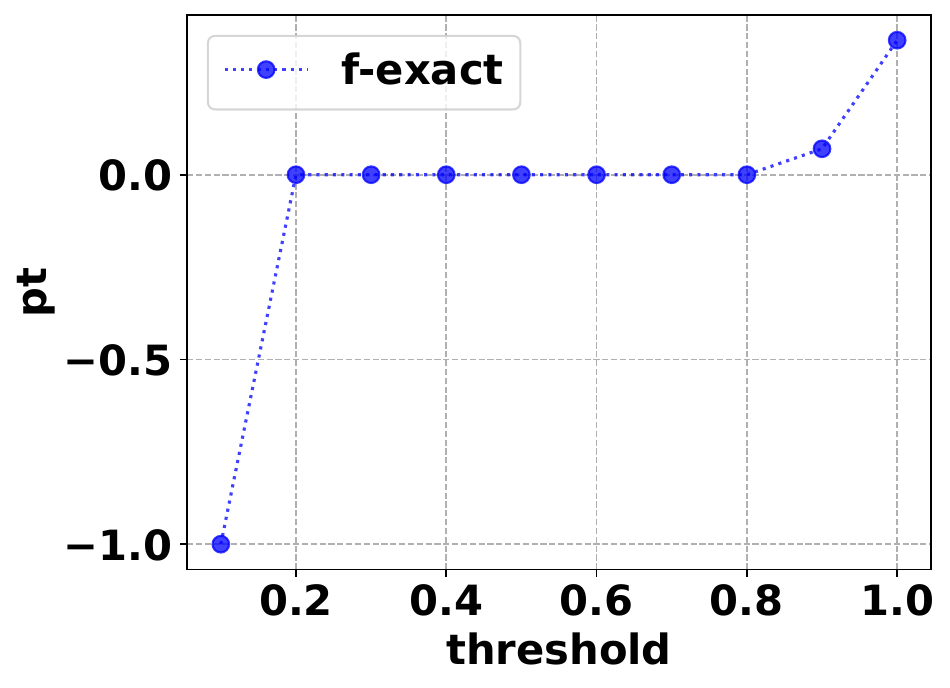}}
    \subfigure[euall, $\epsilon=0.15$]{\label{fig:euall_taupt}\includegraphics[width=0.23\textwidth]{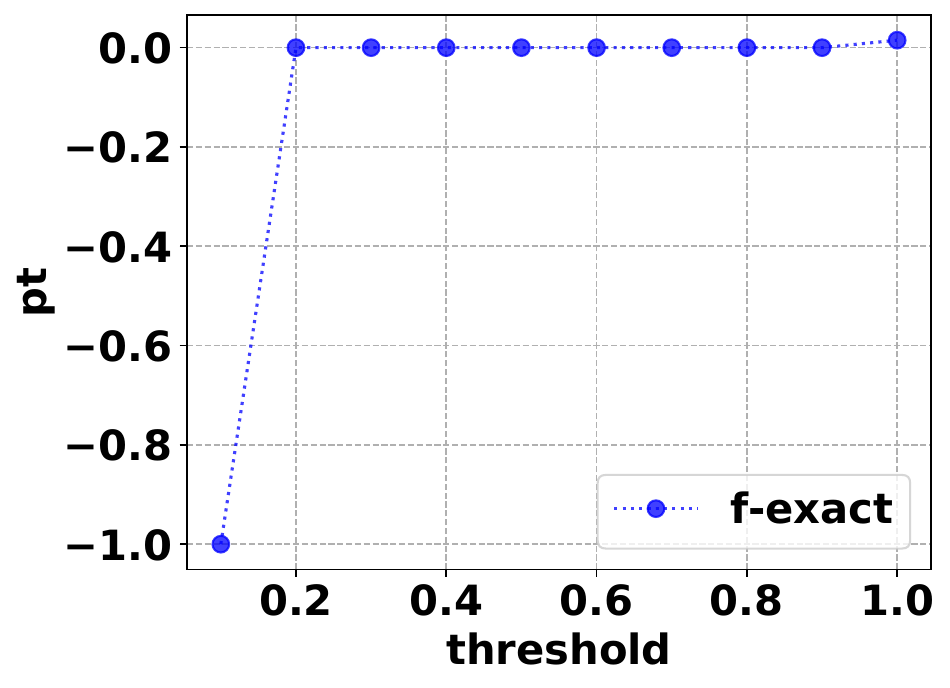}}
    
    \caption{The second experiments. We plot the number and portion of the non-monotone elements.
Here $x$-axis refers to value of $\epsilon$.     }
    \label{fig:nonmono_tau}
\end{figure}

    

\end{document}